\newcommand{\func}[1]{\operatorname{#1}}
\newtheorem{theorem}{Theorem}
\newtheorem{lemma}{Lemma}
\begin{document}

\title{Rational solutions of dressing chains and higher order Painlev\'{e}
equations.}
\author{D. Gomez-Ullate$^{a,b,c}$, Y. Grandati$^{d}$, S.\ Lombardo$^{e}$ and
R. Milson$^{f}$ }
\date{$a$: Institute of Mathematical Sciences (ICMAT), C/ Nicolas Cabrera 15,
28049 Madrid, Spain.\\
$b$: Escuela Superior de Ingenier\'ia, Universidad de C\'adiz, 11519, Spain.%
\\
$c$: Departamento de F\'isica Te\'orica, Universidad Complutense de Madrid,
28040 Madrid, Spain.\\
$d$: Laboratoire de Physique et Chimie Th\'{e}oriques, Universit\'{e} de
Lorraine, 1 Bd Arago, 57078 Metz, Cedex 3, France.\\
$e$: Mathematical Sciences Department, Loughborough University, Epinal Way,
Loughborough, Leicestershire, LE11 3TU, UK.\\
$f$: Department of Mathematics and Statistics, Dalhousie University,
Halifax, NS, B3H\ 3J5, Canada.}

\maketitle

\begin{abstract}
We present a new approach to determine the rational solutions of the higher
order Painlev\'{e} equations associated to periodic dressing chain systems (A%
$_{\text{n}}^{(1)}$-Painlev\'{e} systems). We obtain new sets of solutions,
giving determinantal representations indexed by specific Maya diagrams in
the odd case or universal characters in the even case.
\end{abstract}

\section{\protect\bigskip Introduction}

It is now well known that the six Painlev\'{e} equations PI-PVI, discovered
more than one century ago by Painlev\'{e} and Gambier \cite{conte,clarkson2}
(see also Fuchs, Picard and Bonnet \cite{conte2}), define new transcendental
objects which can be thought as nonlinear analogues of special functions.
Except for the first one, the Painlev\'{e} equations all depend on some set
of parameters and if for generic values of these parameters the solutions
cannot be reduced to usual transcendental functions, it appears that for
some specific values of these parameters we retrieve classical
transcendental functions or even rational functions \cite%
{conte,clarkson2,noumi,gromak}. In the last decades, the classification and
the properties of these last solutions has been a subject of active research 
\cite{clarkson2,noumi}.

In the early nineties, Shabat, Veselov and Adler \cite{shabat,vesshab,adler}
introduce the concept of dressing chains, showing that they possess the
Painlev\'{e} property and that the Painlev\'{e} equations PII-PVI can be
described in terms of dressing chains (scalar or matricial) of low orders.
The higher order chains can then be considered as generalizing the classical
Painlev\'{e} equations and the problem of finding their rational solutions
arises naturally \cite{clarkson1,clarkson5}. For a Schr\"{o}dinger operator,
the scalar dressing chain of period $p$ constitutes an higher order
generalization of PIV for $p$ odd (the symmetric form of PIV itself
corresponds to the case $p=3$) and an higher order generalization of PV for $%
p$ even (the symmetric form of PV itself corresponds to the case $p=4$).

In the case of dressing chains of odd periodicity, previous results \cite%
{veselov} seem to indicate that these solutions are necessarily obtained
from rational extensions of the harmonic oscillator (HO) potential \cite%
{GGM1,GGM0,GGM}. A remaining question is then to determine among all these
rational dressings of \ the harmonic oscillator, the particular ones which
allow to solve a dressing chain of given periodicity. An elegant, although
indirect, way to answer to this problem pass through the approach developed
principaly by the japanese school around Okamoto, Noumi, Yamada, Umemura,
Tsuda and others (see \cite{clarkson2,noumi} and references therein, see
also \cite{adler}). It rests on the symmetry group analysis of the dressing
chain in the parameters space. The parametric symmetries of the chain
combined to B\"{a}cklund transformations constitute an "extended affine Weyl
group" (for the scalar dressing chain of period $p$, the associated extended
affine Weyl group is $A_{p-1}^{\left( 1\right) }$) which preserves the
structure of the dressing chain. Starting from some a complete set of simple
"fundamental solutions" possessing the rational character, successive
applications of the transformations belonging to these Weyl groups generates
step by step all the rational solutions of the dressing chain system.

In this article, we propose an alternative approach to build the rational
solutions of the Schr\"{o}dinger operators dressing chain systems (with non
zero shift). It has the advantage to be direct and explicit, in the sense
that it furnishes an immediate determinantal representation for the
solutions of the differential system. Moreover, it links the existence of
rational solutions for the dressing chain system to the analytical
properties of the underlying quantum potential.

We consider the whole sets of rational extensions of the harmonic and
isotonic potentials that we label by Maya diagrams and universal character 
\cite{koike,tsuda} respectively. Analyzing the combinatorial properties of
these objects allows us to select the extended potentials which solve the
odd and even periodic dressing chains respectively and gives closed form
determinantal expressions for the solutions of dressing chain system.

The paper is organized as follows. We start by recalling some basic elements
concerning the concept of dressing chains, cylic potentials and their
connection to the PIV and PV equations. In the second part, we introduce the
notion of cyclic Maya diagram\ and associated $\overrightarrow{s}$-vector
and give the general structure of a p-cyclic Maya diagram (Theorem 1). It
allows in the next part, to show how the cyclic rational extensions of the
HO give rational solutions of dressing chains system of period p for
appropriate choices of the parameters (Theorem 2). As illustrating examples,
we treat in details the cases $p=3$\ (ie the 
standard PIV equation) and $p=5$
(also called the A$_{\text{4}}$-PIV system).

In the fourth 
section, after recalling some properties of the isotonic and
its rational extensions, which are expressed in terms of Laguerre
pseudo-Wronskians and labelled by universal charaters, we show how these
last ones allow to obtain rational solutions of the even periodic dressing
chain systems for which we give new explicit representations (Theorem 4). We
illustrate these general results for the particular cases $p=4$ (ie the
standard
PV equation) and $p=6$ (ie A$_{\text{5}}$-PV system).

\section{Dressing chains and cyclic potentials}

Consider a potential $U(x)$, the associated Schr\"{o}dinger and Riccati-Schr%
\"{o}dinger (RS) equation \cite{grandati berard} being respectively

\begin{equation}
\left\{ 
\begin{array}{c}
-\psi _{\lambda }^{\prime \prime }(x)+U(x)\psi _{\lambda }(x)=E_{\lambda
}\psi _{\lambda }(x) \\ 
-w_{\lambda }^{\prime }(x)+w_{\lambda }^{2}(x)=U(x)-E_{\lambda },%
\end{array}%
\right.  \label{SetRS}
\end{equation}%
where the \textbf{RS function} $w_{\lambda }(x)$ is minus the logarithmic
derivative of the eigenfunction $\psi _{\lambda }(x)$: $w_{\lambda
}(x)=-\psi _{\lambda }^{\prime }(x)/\psi _{\lambda }(x)$. The auxiliary
spectral parameter $\lambda $ allows us in what follows to define a sequence
of eigenvalues and associated eigenfunctions.

Given a particular eigenfunction $\psi _{\nu }(x)$ (or its associated RS
function $w_{\nu }(x)$) of $U(x)$ for the eigenvalue $E_{\nu }$, we can
build a new potential $U^{\left( \nu \right) }(x)$ via the \textbf{Darboux
transformation (DT)} of \textbf{seed function} $\psi _{\nu }$ \cite{darboux}:

\begin{equation}
U^{\left( \nu \right) }(x)=U(x)+2w_{\nu }^{\prime }(x),
\end{equation}%
called an \textbf{extension} of $U(x)$. Then $\psi _{\lambda }^{\left( \nu
\right) }(x)$ and $w_{\lambda }^{\left( \nu \right) }(x)$ defined as\bigskip 
\begin{equation}
\left\{ 
\begin{array}{c}
\psi _{\lambda }^{\left( \nu \right) }(x)=W(\psi _{\nu },\psi _{\lambda
}\mid x)/\psi _{\nu }(x),\text{ }\lambda \neq \nu \\ 
\psi _{\nu }^{\left( \nu \right) }(x)=1/\psi _{\nu }(x),%
\end{array}%
\right.  \label{DTS}
\end{equation}%
and

\begin{equation}
\left\{ 
\begin{array}{c}
w_{\lambda }^{\left( \nu \right) }(x)=-w_{\nu }(x)+(E_{\lambda }-E_{\nu
})/\left( w_{\lambda }(x)-w_{\nu }(x)\right), \text{ }\lambda \neq \nu \\ 
w_{\nu }^{\left( \nu \right) }(x)=-w_{\nu }(x),%
\end{array}%
\right.  \label{DTRS}
\end{equation}%
are solutions of

\begin{equation}
\left\{ 
\begin{array}{c}
-\psi _{\lambda }^{\left( \nu \right) \prime \prime }+U^{\left( \nu \right)
}\psi _{\lambda }^{\left( \nu \right) }=E_{\lambda }\psi _{\lambda }^{\left(
\nu \right) } \\ 
-\left( w_{\lambda }^{\left( \nu \right) }\right) ^{\prime }+\left(
w_{\lambda }^{\left( \nu \right) }\right) ^{2}=U^{\left( \nu \right)
}-E_{\lambda }.%
\end{array}%
\right.
\end{equation}

By chaining such DT, we produce a sequence of extensions

\begin{equation}
\left\{ 
\begin{array}{c}
\psi _{\lambda }\overset{\nu _{1}}{\rightarrowtail }\psi _{\lambda }^{\left(
\nu _{1}\right) }\overset{\nu _{2}}{\rightarrowtail }\psi _{\lambda
}^{\left( \nu _{1},\nu _{2}\right) }...\overset{\nu _{p}}{\rightarrowtail }%
\psi _{\lambda }^{\left( \nu _{1},...,\nu _{p}\right) } \\ 
U\overset{\nu _{1}}{\rightarrowtail }U^{\left( \nu _{1}\right) }\overset{\nu
_{2}}{\rightarrowtail }U^{\left( \nu _{1},\nu _{2}\right) }...\overset{\nu
_{p}}{\rightarrowtail }U^{\left( \nu _{1},...,\nu _{p}\right) },%
\end{array}%
\right.  \label{diagn}
\end{equation}%
where

\begin{equation}
U^{\left( \nu _{1},...,\nu _{p}\right) }(x)=U(x)+2\left(
\sum_{i=1}^{p}w_{\nu _{i}}^{\left( \nu _{1},...,\nu _{i-1}\right)
}(x)\right) ^{\prime }.
\end{equation}

The Crum formulas allow us to express the extended potentials as well as
their eigenfunctions in terms of Wronskians, containing only eigenfunctions
of the initial potential \cite{crum,GGM1}:

\begin{equation}
\left\{ 
\begin{array}{c}
U^{\left( \nu _{1},...,\nu _{p}\right) }(x)=U(x)-2\left( \log W^{\left( \nu
_{1},...,\nu _{p}\right) }(x)\right) ^{\prime \prime } \\ 
\psi _{\lambda }^{\left( \nu _{1},...,\nu _{p}\right) }\left( x\right)
=W^{\left( \nu _{1},...,\nu _{p},\lambda \right) }(x)/W^{\left( \nu
_{1},...,\nu _{p}\right) }(x),%
\end{array}%
\right.  \label{crum}
\end{equation}%
where%
\begin{equation}
W^{\left( \nu _{1},...,\nu _{p}\right) }(x)=W(\psi _{\nu _{1}},...,\psi
_{\nu _{p}}\mid x)  \label{wronsk}
\end{equation}%
is the Wronskian of the family $(\psi _{\nu _{1}},...,\psi _{\nu _{p}})$.
Here we are using the convention that if a spectral index is repeated two
times in the characteristic tuple $\left( \nu _{1},...,\nu _{p}\right) $,
then we suppress the corresponding eigenfunction in the Wronskians of the
right-hand members of Eq(\ref{crum}). In order to simplify the notation we
temporarly use the simplified notation $\nu _{i}\rightarrow i$. We can now
define the notion of \textbf{cyclicity}.

A potential $U(x)$ is said to be $p$\textbf{-cyclic} if there exists a chain
of $p$ DT such that

\begin{equation}
U^{\left( 1,...,p\right) }(x)=U(x)+\Delta ,  \label{Cyclicity}
\end{equation}%
ie at the end of the chain we recover \textit{exactly} the initial potential
translated by an energy shift $\Delta $. This condition is then stronger
than the usual $p^{th}$-order shape invariance \cite{gendenshtein,GGM1}. As
shown by Veselov and Shabat \cite{vesshab}, the successive RS seed functions
then satisfy the following first order non linear system ($\varepsilon
_{ij}=E_{i}-E_{j}$)

\begin{equation}
\left\{ 
\begin{array}{c}
-\left( w_{2}^{\left( 1\right) }(x)+w_{1}(x)\right) ^{\prime }+\left(
w_{2}^{\left( 1\right) }(x)\right) ^{2}-\left( w_{1}(x)\right)
^{2}=\varepsilon _{12} \\ 
-\left( w_{3}^{\left( 1,2\right) }(x)+w_{2}^{\left( 1\right) }(x)\right)
^{\prime }+\left( w_{3}^{\left( 1,2\right) }(x)\right) ^{2}-\left(
w_{2}^{\left( 1\right) }(x)\right) ^{2}=\varepsilon _{23} \\ 
... \\ 
-\left( w_{p}^{\left( 1,...,p-1\right) }(x)+w_{p-1}^{\left( 1,...,p-2\right)
}(x)\right) ^{\prime }+\left( w_{p}^{\left( 1,...,p-1\right) }(x)\right)
^{2}-\left( w_{p-1}^{\left( 1,...,p-2\right) }(x)\right) ^{2}=\varepsilon
_{p-1,p} \\ 
-\left( w_{1}(x)+w_{p}^{\left( 1,...,p-1\right) }(x)\right) ^{\prime
}+\left( w_{1}(x)\right) ^{2}-\left( w_{3}^{\left( 1,...,p-1\right)
}(x)\right) ^{2}=\varepsilon _{p1}-\Delta .%
\end{array}%
\right.  \label{pcyclicdress}
\end{equation}%
called a \textbf{dressing chain of period }$p$. We will say that \textbf{the
potential }$U(x)$\textbf{\ solves the dressing chain}. $\Delta $ and the $%
\varepsilon _{i,i+1}$ are called the \textbf{parameters} of the dressing
chain.

The cyclicity condition Eq(\ref{Cyclicity}) gives also

\begin{equation}
2\left( w_{1}(x)+...+w_{p}^{\left( 1,...,p-1\right) }(x)\right) ^{\prime
}=\Delta ,
\end{equation}%
that is, with an appropriate choice of the integration constant

\begin{equation}
w_{1}(x)+...+w_{p}^{\left( 1,...,p-1\right) }(x)=\frac{\Delta }{2}x.
\label{addconst}
\end{equation}

In all what follows, we suppose that (\textbf{non zero shift} assumption) 
\begin{equation}
\Delta \neq 0.  \label{delta}
\end{equation}

As proven by Veselov and Shabat \cite{vesshab}, this system passes the
Painlev\'{e} test and, as we will see below, for $p=3$ and $p=4$ the
corresponding dressing chains can be seen as symmetrized forms of the PIV
and PV equations respectively.

As illustrating examples we now consider the lowest order cases of cyclicity.

\subsection{1-step cyclic potential}

It is straightforward to show that the HO is in fact the unique potential
possessing the $1$-step cyclicity property. Indeed, if we want

\begin{equation}
U^{\left( \nu \right) }(x)=U(x)+\Delta ,  \label{1 step cyclic}
\end{equation}%
Eq(\ref{addconst}) gives immediately

\begin{equation}
w_{\nu }(x)=\Delta x/2  \label{RSOHfond}
\end{equation}%
and

\begin{equation}
U(x)=E_{\nu }-w_{\nu }^{\prime }(x)+w_{\nu }^{2}(x)=\frac{\Delta ^{2}}{4}%
x^{2}+E_{\nu }-\frac{\Delta }{2},
\end{equation}%
ie $U(x)$ is an HO potential with frequency $\omega =\Delta $, the $1$-step
cyclicity condition coinciding for this potential with the usual shape
invariance property \cite{gendenshtein,GGM1}.

\subsection{2-step cyclic potentials}

If the potential $U(x)$ is $2$-step cyclic, the corresponding dressing chain
of period $2$ is (see Eq(\ref{pcyclicdress}) and Eq(\ref{addconst}))

\begin{equation}
\left\{ 
\begin{array}{c}
-\left( w_{2}^{\left( 1\right) }(x)+w_{1}(x)\right) ^{\prime }+\left(
w_{2}^{\left( 1\right) }(x)\right) ^{2}-\left( w_{1}(x)\right)
^{2}=\varepsilon _{12} \\ 
-\left( w_{1}(x)+w_{2}^{\left( 1\right) }(x)\right) ^{\prime }+\left(
w_{1}(x)\right) ^{2}-\left( w_{2}^{\left( 1\right) }(x)\right)
^{2}=\varepsilon _{21}-\Delta ,%
\end{array}%
\right.  \label{2 step cyclic}
\end{equation}%
with

\begin{equation}
w_{1}(x)+w_{2}^{(1)}(x)=\frac{\Delta }{2}x.  \label{cond12step}
\end{equation}

Taking the difference of the two equations in Eq(\ref{2 step cyclic}) and
combining with Eq(\ref{cond12step}), we obtain in a straightforward way

\begin{equation}
w_{2}^{(1)}(x)=\frac{\Delta }{4}x+\frac{\varepsilon _{12}/\Delta +1/2}{x},\
w_{1}(x)=\frac{\Delta }{4}x-\frac{\varepsilon _{12}/\Delta +1/2}{x}.
\end{equation}

Consequently

\begin{equation}
U(x)=E_{1}-w_{1}^{\prime }(x)+w_{1}^{2}(x)=\frac{\omega ^{2}}{4}x^{2}+\frac{%
\left( \alpha +1/2\right) \left( \alpha -1/2\right) }{x^{2}}-\omega (\alpha
+1)=V(x;\omega ,\alpha ),  \label{2 step cyclic2}
\end{equation}%
where $\omega =\Delta /2$ and $\alpha =\varepsilon _{12}/\Delta $. We deduce
that the unique $2$-step cyclic potential is the isotonic oscillator (IO)
with frequency $\omega $ and "angular momentum" $a=$ $\alpha -1/2$ \cite%
{grandati3,GGM2}.

\subsection{3-step cyclic potentials and Painlev\'{e} IV}

The dressing chain of period $p=3$ has the form (see Eq(\ref{pcyclicdress}))

\begin{equation}
\left\{ 
\begin{array}{c}
-\left( w_{2}^{\left( 1\right) }(x)+w_{1}(x)\right) ^{\prime }+\left(
w_{2}^{\left( 1\right) }(x)\right) ^{2}-\left( w_{1}(x)\right)
^{2}=\varepsilon _{12} \\ 
-\left( w_{3}^{\left( 1,2\right) }(x)+w_{2}^{\left( 1\right) }(x)\right)
^{\prime }+\left( w_{3}^{\left( 1,2\right) }(x)\right) ^{2}-\left(
w_{2}^{\left( 1\right) }(x)\right) ^{2}=\varepsilon _{23} \\ 
-\left( w_{1}(x)+w_{3}^{\left( 1,2\right) }(x)\right) ^{\prime }+\left(
w_{1}(x)\right) ^{2}-\left( w_{3}^{\left( 1,2\right) }(x)\right)
^{2}=\varepsilon _{31}-\Delta ,%
\end{array}%
\right.  \label{3chain2}
\end{equation}%
with (see Eq(\ref{addconst}))

\begin{equation}
w_{1}(x)+w_{2}^{\left( 1\right) }(x)+w_{3}^{\left( 1,2\right) }(x)=\frac{%
\Delta }{2}x.  \label{3SIP2}
\end{equation}

By defining

\begin{equation}
\left\{ 
\begin{array}{c}
y=\sqrt{\frac{2}{\Delta }}\left( w_{1}(x)-\Delta x/2\right) \\ 
t=\sqrt{\frac{2}{\Delta }}x,%
\end{array}%
\right.  \label{chv}
\end{equation}%
we can then easily show \cite{vesshab,adler} that $y$ satisfies the PIV
equation (see \cite{conte,clarkson,clarkson2})

\begin{equation}
y^{\prime \prime }=\frac{1}{2y}\left( y^{\prime }\right) ^{2}+\frac{3}{2}%
y^{3}+4ty^{2}+2\left( t^{2}-a\right) y+\frac{b}{y},  \label{PIV}
\end{equation}%
(here the prime denotes the derivative with respect to $t$) with parameters

\begin{equation}
a=-\left( \Delta +\varepsilon _{23}+2\varepsilon _{12}\right) /\Delta ,\quad
b=-\frac{2\varepsilon _{23}^{2}}{\Delta ^{2}}.  \label{paramPIV}
\end{equation}

\subsection{4-step cyclic potentials and Painlev\'{e} V}

For $p=4$, the dressing chain of Eq(\ref{pcyclicdress}) becomes

\begin{equation}
\left\{ 
\begin{array}{c}
-\left( w_{2}^{\left( 1\right) }(x)+w_{1}(x)\right) ^{\prime }+\left(
w_{2}^{\left( 1\right) }(x)\right) ^{2}-\left( w_{1}(x)\right)
^{2}=\varepsilon _{12} \\ 
-\left( w_{3}^{\left( 1,2\right) }(x)+w_{2}^{\left( 1\right) }(x)\right)
^{\prime }+\left( w_{3}^{\left( 1,2\right) }(x)\right) ^{2}-\left(
w_{2}^{\left( 1\right) }(x)\right) ^{2}=\varepsilon _{23} \\ 
-\left( w_{4}^{\left( 1,2,3\right) }(x)+w_{3}^{\left( 1,2\right) }(x)\right)
^{\prime }+\left( w_{4}^{\left( 1,2,3\right) }(x)\right) ^{2}-\left(
w_{3}^{\left( 1,2\right) }(x)\right) ^{2}=\varepsilon _{34} \\ 
-\left( w_{1}(x)+w_{4}^{\left( 1,2,3\right) }(x)\right) ^{\prime }+\left(
w_{1}(x)\right) ^{2}-\left( w_{4}^{\left( 1,2,3\right) }(x)\right)
^{2}=\varepsilon _{41}-\Delta .%
\end{array}%
\right.
\end{equation}%
with the cyclicity condition (see Eq(\ref{addconst}))

\begin{equation}
w_{1}(x)+w_{2}^{\left( 1\right) }(x)+w_{3}^{\left( 1,2\right)
}(x)+w_{4}^{\left( 1,2,3\right) }(x)=\frac{\Delta }{2}x.
\end{equation}

As shown by Adler \cite{adler}, the function

\begin{equation}
y\left( t\right) =1-\frac{\Delta x}{2\left( w_{1}(x)+w_{2}(x)\right) },\
t=x^{2},  \label{defPV}
\end{equation}%
satisfies the PV equation (see \cite{conte,clarkson4,clarkson2})

\begin{equation}
y^{\prime \prime }=\left( \frac{1}{2y}+\frac{1}{y-1}\right) \left( y^{\prime
}\right) ^{2}-\frac{y^{\prime }}{t}+\frac{\left( y-1\right) ^{2}}{t^{2}}%
\left( ay+\frac{b}{y}\right) +c\frac{y}{t}+d\frac{y\left( y+1\right) }{y-1},
\label{PV}
\end{equation}%
with parameters

\begin{equation}
a=\frac{\varepsilon _{12}^{2}}{2\Delta ^{2}},\ b=-\frac{\varepsilon _{34}^{2}%
}{2\Delta ^{2}},\ c=\frac{1}{4}\left( \Delta -\varepsilon _{41}+\varepsilon
_{23}\right) ,\ d=-\frac{\Delta ^{2}}{32}.  \label{paramPV}
\end{equation}

\section{Cyclic Maya diagrams}

\subsection{First definitions}

We define a \textbf{Sato Maya diagram} as an infinite row of boxes, called 
\textbf{levels}, labelled by relative integers and which can be empty or
filled by at most one "particle" (graphically represented by a bold dot) 
\cite{ohta,hirota,GGM,GGM2}. All the levels sufficiently far away on the
left are filled and all the levels sufficiently far away on the right are
empty. The set of Sato Maya diagrams can be put in one to one correspondence
with the set of tuple of relative integers of the form $N_{m}=\left(
n_{1},...,n_{m}\right) \in 
\mathbb{Z}
^{m}$ The tuple $N_{m}$ contains the indices of the filled levels above zero
(included) and the indices of the empty levels strictly below zero, or in
other words, the filled levels in corresponding Sato Maya diagram are
indexed by the set

\begin{equation}
\{j<0:j\notin N_{m}\}\cup \{j\geq 0:j\in N_{m}\}.
\end{equation}

In the following we use the term \textbf{Maya diagram} to designate both the
Sato Maya diagram in graphical form, and the associated tuple\textbf{. }If
all the $n_{i}$ are positive then the Maya diagram is said to be \textbf{%
positive} (respectively \textbf{negative}). In the case of a positive Maya
diagram, if all the $n_{i}$ are non-zero, then the Maya diagram is said to
be \textbf{strictly positive}.

Two Maya diagrams $N_{m}$ and $N_{m^{\prime }}^{\prime }$ are \textbf{%
equivalent} if they differ only by a global translation of all the particles
in the levels and we note 
\begin{equation}
N_{m}\approx N_{m^{\prime }}^{\prime }.  \label{eqMD}
\end{equation}

The \textbf{canonical representative} of such an equivalence class is the
unique strictly positive Maya diagram of the class for which the zero level
is empty, i.e. the zero level is the first empty level.

A $k$\textbf{-translation} applied to the canonical Maya diagram $%
N_{m}=\left( n_{1},...,n_{m}\right) \in \left( 
\mathbb{N}
^{\ast }\right) ^{m}$ generates the Maya diagram, denoted $N_{m}\oplus k$,
obtained from by shifting by $k$ all the particles in the levels of $N_{m}$.
For $k>0$, we have

\begin{equation}
N_{m}\oplus k=\left( 0,...,k-1\right) \cup \left( N_{m}+k\right) ,
\end{equation}%
where $N_{m}+k=\left( n_{1}+k,...,n_{m}+k\right) $.

Starting from a given Maya diagram $N_{m}$ we can modify it by suppressing
particles in the filled levels or by filling empty levels. We call such an
action on a level, a \textbf{flip}. If $\left( \nu _{1},...,\nu _{p}\right) $
are the indices of the flipped levels then the characteristic tuple of the
resulting Maya diagram can be denoted $\left( N_{m},\nu _{1},...,\nu
_{p}\right) $ with the convention that a twice repeated index has to be
suppressed.

\subsection{$\protect\overrightarrow{s}$-vectors and cyclicity}

A Maya diagram $N_{m}$ is said $p$\textbf{-cyclic with translation of }$k>0$
if we can translate it by $k$, by acting on it with $p$ flips. In other
words, there must exist a tuple of $p$ positive integers $\left( \nu
_{1},...,\nu _{p}\right) \in 
\mathbb{N}
^{p}$ such that

\begin{equation}
N_{m}\approx \left( N_{m},\nu _{1},...,\nu _{p}\right) .  \label{cyclMD1}
\end{equation}

In particular, if $N_{m}$ is canonical, there must exist a tuple of $p$
positive integers $\left( \nu _{1},...,\nu _{p}\right) \in 
\mathbb{N}
^{p}$ such that

\begin{equation}
\left( N_{m},\nu _{1},...,\nu _{p}\right) =N_{m}\oplus k.  \label{cyclMD2}
\end{equation}

The ordered tuple $\left( \nu _{1},...,\nu _{p}\right) $ is called a $p$%
\textbf{-cyclic chain} associated to $N_{m}$.

We can readily see that every extension is trivially $\left( 2m+k\right) $%
-cyclic with translation of $k$ via the chain: 
\begin{equation}
N_{m}\cup \left( N_{m}+k\right) \cup \left( 0,...,k-1\right) =\left(
n_{1},...,n_{m},n_{1}+k,...,n_{m}+k,0,...,k-1\right) .
\end{equation}

It is clear that every $p$-cyclic with translation of $k$ Maya diagram is
also $\left( p+2l\right) $-cyclic with translation of $k$ for any integer $l$%
, since Eq(\ref{cyclMD2}) implies

\begin{equation}
\left( N_{m},\nu _{1},...,\nu _{p},\nu _{p+1},\nu _{p+1},...,\nu _{l},\nu
_{l}\right) =N_{m}\oplus k.
\end{equation}

To a given Maya diagram $N_{m}$, we can associate in a one to one way a%
\textbf{\ }$\overrightarrow{s}$\textbf{-vector}, which is an infinite
sequence of ``spin variables'', $\overrightarrow{s}=\left(
s_{n}\right) _{n\in 
\mathbb{Z}
},$ where $s_{n}=+1$ (\textbf{up spin}) or $s_{n}=-1$ (\textbf{down spin}),
in the following way

* If level $n$ is filled then $s_{n}=-1$.

* If level $n$ is empty then $s_{n}=+1$.

For a canonical Maya diagram, it means that $N_{m}$ gives the positions of
the down spins in the positively indexed part of the $\overrightarrow{s}$%
-vector:

* If $n\in N_{m}$ or $n<0$, then $s_{n}=-1$.

* If $n\geq 0$ and $n\notin N_{m}$, then $s_{n}=+1$.

The $\overrightarrow{s}$\textbf{-vector} is subject to the \textbf{%
topological constraint}

\begin{equation}
\underset{n\rightarrow -\infty }{\lim }s_{n}=-1,\ \underset{n\rightarrow
+\infty }{\lim }s_{n}=1.  \label{topoconstraint}
\end{equation}

In the particular case of a canonical Maya diagram, we have more precisely

\begin{equation}
\left\{ 
\begin{array}{c}
s_{n<0}=-1 \\ 
s_{0}=+1 \\ 
s_{n>n_{m}}=+1.%
\end{array}%
\right.  \label{topoconst}
\end{equation}

A flip at the level $\nu $ in $N_{m}$ corresponds to change the sign of $%
s_{\nu }$($s_{\nu }\rightarrow -s_{\nu }$): suppressing a particle in a
level corresponds to a \textbf{positive flip} of $s_{\nu }$ $\left(
-1\rightarrow +1\right) $, while filling a level corresponds to a \textbf{%
negative flip} of $s_{\nu }$ $\left( +1\rightarrow -1\right) $.

In the sequel, we let $\mathcal{F}^{\left( i_{1},...,i_{p}\right) }$ denote
the \textbf{flip operator} which, when acting on $\overrightarrow{s}$, flips
the spins $s_{i_{1}},...,s_{i_{p}}$. We also let $\mathcal{T}_{k}$ denote
the \textbf{translation operator} of amplitude $k$ on the $\overrightarrow{s}
$-vector :

\begin{equation}
\mathcal{T}_{k}\overrightarrow{s}=\overrightarrow{s}^{\prime }\text{, with }%
s_{n}^{\prime }=s_{n-k},\ \forall n\in 
\mathbb{Z}
.  \label{trans}
\end{equation}

A Maya diagram $N_{m}$ is $p$-cyclic with translation of $k$ iff $\exists
\left( i_{1},...,i_{p}\right) \in 
\mathbb{N}
^{p}$ such that the corresponding $\overrightarrow{s}$-vector is in the
kernel of $\left( \mathcal{F}^{\left( i_{1},...,i_{p}\right) }-\mathcal{T}%
_{k}\right) $:

\begin{equation}
\mathcal{F}^{\left( i_{1},...,i_{p}\right) }\overrightarrow{s}=\mathcal{T}%
_{k}\overrightarrow{s}.  \label{cyclicspin}
\end{equation}

Suppose that $N_{m}$ is a canonical Maya diagram which is $p$-cyclic with
translation of $k>0$. We let $p_{-}$ denote the number of negative flips and 
$p_{+}$ the number of positive flips in the $p$-cyclic chain. We then have
the following lemma:

\begin{lemma}
$k$\textit{\ has the same parity as }$p$\textit{\ and}%
\begin{equation}
\left\{ 
\begin{array}{c}
p=p_{-}+p_{+}\in 
\mathbb{N}
^{\ast }, \\ 
k=p_{-}-p_{+}\in \left\{ 1,...,p\right\} .%
\end{array}%
\right. 
\end{equation}
\end{lemma}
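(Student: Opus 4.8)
The plan is to exhibit a single integer invariant of the cyclicity relation $\mathcal{F}^{\left(i_{1},\dots,i_{p}\right)}\overrightarrow{s}=\mathcal{T}_{k}\overrightarrow{s}$ that can be evaluated in two independent ways: once using the translation together with the topological constraint (yielding $2k$), and once by localizing the relation at the flipped spins (yielding $2\left(p_{-}-p_{+}\right)$). Comparing the two evaluations gives $k=p_{-}-p_{+}$, and the remaining assertions then follow from elementary arithmetic, with $p=p_{-}+p_{+}$ holding by the very definition of the two flip counts.

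Concretely, I would introduce the sum
\begin{equation}
\Sigma=\sum_{n\in\mathbb{Z}}\left(s_{n}-s_{n-k}\right).
\end{equation}
Because of the topological constraint $s_{n}\to-1$ as $n\to-\infty$ and $s_{n}\to+1$ as $n\to+\infty$, the summand vanishes for $\left|n\right|$ large, so $\Sigma$ is genuinely a finite sum and is well defined. First I would compute $\Sigma$ by telescoping: for $M$ large the partial sum over $\left[-M,M\right]$ equals $\sum_{n=-M}^{M}s_{n}-\sum_{n=-M-k}^{M-k}s_{n}$, the overlapping block $\left[-M,M-k\right]$ cancels, and one is left with $\sum_{n=M-k+1}^{M}s_{n}-\sum_{n=-M-k}^{-M-1}s_{n}$. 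Since each of these $k$-term blocks is saturated ($+1$ on the right, $-1$ on the left), this equals $k-\left(-k\right)=2k$, whence $\Sigma=2k$.

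Next I would evaluate $\Sigma$ using the cyclicity relation itself. Reading $\mathcal{F}^{\left(i_{1},\dots,i_{p}\right)}\overrightarrow{s}=\mathcal{T}_{k}\overrightarrow{s}$ componentwise gives $s_{n-k}=-s_{n}$ at each flipped level and $s_{n-k}=s_{n}$ elsewhere, so $s_{n}-s_{n-k}$ vanishes away from the flips and equals $2s_{n}$ at a flipped level $\nu$. A positive flip acts on a filled level ($s_{\nu}=-1$) and contributes $-2$, while a negative flip acts on an empty level ($s_{\nu}=+1$) and contributes $+2$; summing gives $\Sigma=2p_{-}-2p_{+}$. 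Comparing with the first evaluation yields $k=p_{-}-p_{+}$. Then $p-k=2p_{+}$ is even, forcing $k\equiv p\pmod{2}$, and nonnegative, forcing $k\le p$; since $k>0$ is an integer this gives $k\in\left\{1,\dots,p\right\}$.

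I expect the only delicate points to be bookkeeping rather than conceptual: I must fix the sign convention so that positive and negative flips correspond correctly to the spin value at the flipped site, and I should note that the formula is insensitive to a chain containing a repeated index, since flipping one level twice adds exactly one positive and one negative flip, leaving both $p-k$ and its parity unchanged. These are precisely the features the two-way evaluation of $\Sigma$ handles automatically, so the substance of the proof reduces to presenting those two evaluations carefully.
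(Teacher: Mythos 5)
Your proof is correct and rests on the same conservation-of-down-spins idea as the paper's: both equate the net change produced by the $k$-translation (which the topological constraint forces to be $k$ extra down spins, i.e.\ $\Sigma=2k$) with the net effect of the flips (namely $p_{-}-p_{+}$ negative flips), and both get $p=p_{-}+p_{+}$ by definition. Your telescoping evaluation of $\Sigma=\sum_{n}\left(s_{n}-s_{n-k}\right)$ is simply a cleaner, window-free packaging of the paper's count of down spins in the set $\left\{0,\dots,n_{m}+k\right\}$ before and after the chain.
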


\begin{proof}
Initially, above $n_{m}$ all the $s_{n}$ are up and below $0$ they are all
down. Between $0$ and $n_{m}$, we have $m$ spins down. After the $p$-cyclic
chain, all the spins above $n_{m}+k$ are up and below $k$ they are all down.
In the block of indices $\left\{ k,..,n_{m}+k\right\} $ we still have $m$
down spins and in the set of levels $\left\{ 0,..,n_{m}+k\right\} $ we now
have $m+k$ down spins while before the $p$-cyclic chain the same set
contained only $m$ down spins. Out of the set $\left\{ 0,..,n_{m}+k\right\} $
the Maya $\overrightarrow{s}$-vector remains unchanged and at the end, the
action of the $p$-cyclic chain with translation $k$ is to have flipped
negatively $k$ spins. Consequently%
\begin{equation}
\left\{ 
\begin{array}{c}
p=p_{-}+p_{+}\in 
\mathbb{N}
^{\ast }, \\ 
k=p_{-}-p_{+}\in \left\{ 1,...,p\right\} .%
\end{array}%
\right. 
\end{equation}
\end{proof}

For example, for a $3$-cyclic chain we can have $k=1$ ($p_{-}=2,p_{+}=1$) or 
$k=1$ ($p_{-}=3,p_{+}=0$) and for a $4$-cyclic chain, we can have $k=2$ ($%
p_{-}=3,p_{+}=1$) or $k=4$ ($p_{-}=4,p_{+}=0$).

\subsection{Structure of the cyclic Maya diagrams}

Let first give some supplementary definitions.

We let

\begin{equation}
\left( r\mid s\right) _{k}=\left( r,r+k,...,r+(s-1\right) k),\ r,s\in 
\mathbb{N}
,  \label{block}
\end{equation}%
and call such a set of indices a \textbf{block} of length $s$.

A block of the type $\left( r\mid s\right) _{1}$, containing $s$ consecutive
integers, is called a \textbf{Generalized Hermite (GH) block }\cite%
{clarkson,clarkson2} of length $s$ and in the particular case $r=0$, $\left(
0\mid s\right) _{1}$ is called a \textbf{removable block} of length $s$.

Be $I=\left\{ n_{j}\right\} _{j\in 
\mathbb{Z}
}\subset 
\mathbb{Z}
$ a subset of integers indices. We say that $s_{n}$ is \textbf{discontinuous}
at $n_{j}$ on $I$ if $s_{n_{j+1}}=-s_{n_{j}}$. A subset of indices $I_{l}=k%
\mathbb{Z}
+l,\ l\in \left\{ 0,...,k-1\right\} $ is called a $k$\textbf{-support}. Note
that there is only one $1$-support which identifies with $%
\mathbb{Z}
$.

We can now establish the main theorem concerning the cyclic Maya diagram.

\begin{theorem}
\textit{Any }$p$\textit{-cyclic Maya diagram has the following form}%
\begin{equation}
N_{m}=\left( \left( 1\mid \alpha _{1}\right) _{k},...,\left( k-1\mid \alpha
_{k-1}\right) _{k};\left( \lambda _{1}\mid \mu _{1}\right) _{k},...,\left(
\lambda _{j}\mid \mu _{j}\right) _{k}\right) ,  \label{pcyclicMD}
\end{equation}%
\textit{with }$j=\left( p-k\right) /2\in 
\mathbb{N}
$\textit{\ and where the }$\alpha _{i},\lambda _{i},\mu _{i}$\textit{\ are
arbitrary positive integers, constituting a set of }$p-1$\textit{\ arbitrary
integer parameters on which depends the }$p$\textit{-cyclic Maya diagram }$%
N_{m}$\textit{. The }$k-1$\textit{\ blocks of the type }$\left( l\mid \alpha
_{l}\right) _{k},\ l\in \left\{ 1,...,k-1\right\} ,$\textit{\ }$\alpha _{l}$%
\textit{\ arbitrary, are called }$k$\textit{-Okamoto blocks. The }$j$\textit{%
\ blocks }$\left( \lambda _{l}\mid \mu _{l}\right) _{k},\ l\in \left\{
1,...,j\right\} ,$\textit{\ are called blocks of the second type.}\newline
\textit{The corresponding }$p$\textit{-cyclic chains are obtained by forming
tuples from the set of indices}%
\begin{equation}
\left\{ 0,1+\alpha _{1}k,...,\left( k-1\right) +\alpha _{k-1}k,\ \ \lambda
_{1},\lambda _{1}+\mu _{1}k,...,\ \lambda _{j},\lambda _{j}+\mu
_{j}k\right\} .  \label{pcyclicchain}
\end{equation}

\textit{Concretely, the above set contain }$0$\textit{, the first element of
each block of the second type and the last element of each block (Okamoto
and second type) after having increased its length by }$1$\textit{.}
\end{theorem}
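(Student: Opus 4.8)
The plan is to work entirely at the level of the $\overrightarrow{s}$-vector and to exploit the fact that the translation operator $\mathcal{T}_k$ of Eq(\ref{trans}) acts separately on each $k$-support $I_l=k\mathbb{Z}+l$. After reducing to the canonical representative, the cyclicity condition Eq(\ref{cyclicspin}) reads componentwise $s_{n-k}=-s_n$ when $n$ is a flipped index and $s_{n-k}=s_n$ otherwise; since $\mathcal{T}_k$ permutes the entries within a fixed residue class modulo $k$, this single condition \emph{decouples} into $k$ independent conditions, one on each restricted sequence $t^{(l)}_m:=s_{l+mk}$, $m\in\mathbb{Z}$. First I would record that the flipped indices lying in $I_l$ are exactly the indices $l+mk$ at which $t^{(l)}$ is discontinuous in the sense of the definition preceding Eq(\ref{block}), and that the \emph{type} of flip is dictated by the direction of the jump: an ascent $-1\to+1$ is a negative flip and a descent $+1\to-1$ is a positive flip.

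The structural input comes from the topological constraint. Restricting Eq(\ref{topoconstraint}) to $I_l$ forces $t^{(l)}_m\to-1$ as $m\to-\infty$ and $t^{(l)}_m\to+1$ as $m\to+\infty$, so in each residue class the transitions must alternate and read, in increasing order, $A\,(D\,A)^{d_l}$: a single leading ascent followed by $d_l$ descent--ascent pairs. Hence each class carries $a_l=d_l+1$ ascents and $d_l$ descents, giving $2d_l+1$ flips. Summing over $l$ and comparing with the preceding Lemma identifies $\sum_l d_l=p_+=(p-k)/2=:j$ and recovers $k=\sum_l(a_l-d_l)$. The leading ascent of class $l$ bounds a \textbf{primary} run of filled levels that absorbs the whole $m\to-\infty$ tail, while each of the $d_l$ descent--ascent pairs delimits a finite \textbf{bump}, i.e. an isolated interval of filled levels sitting inside the up-region.

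Translating these runs back into Maya-diagram blocks of the form Eq(\ref{block}) finishes the argument. For $l\in\{1,\dots,k-1\}$ the primary run occupies $\{l,l+k,\dots,l+(\alpha_l-1)k\}$, which is precisely the $k$-Okamoto block $(l\mid\alpha_l)_k$, its terminating ascent being the flip at $l+\alpha_l k$; the class $l=0$ is special because canonicity Eq(\ref{topoconst}) pins $s_0=+1$, so its primary run is just the negative tail and its leading ascent is the fixed flip at $0$, with no free length. Every bump is an interval $\{\lambda,\lambda+k,\dots,\lambda+(\mu-1)k\}$, i.e. a block of the second type $(\lambda\mid\mu)_k$ whose descent and ascent are the flips at $\lambda$ and $\lambda+\mu k$. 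Collecting the $k-1$ Okamoto lengths and the $2j$ numbers $(\lambda_i,\mu_i)$ yields exactly $p-1$ parameters and the form Eq(\ref{pcyclicMD}), while the listed flips assemble into the set Eq(\ref{pcyclicchain})---the element $0$, the bottom $\lambda_i$ of each second-type block, and the top of every block after lengthening it by one. The converse, that any such data defines a $p$-cyclic diagram, follows by running the same componentwise identity backwards, the only hypothesis being that the chosen blocks are pairwise disjoint and lie in the up-region.

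The main obstacle will not be any single deep computation but rather the combinatorial bookkeeping in the last step: one must verify that the alternating pattern $A\,(D\,A)^{d_l}$ is genuinely forced in every residue class (not merely possible), handle carefully the asymmetric role of the class $l=0$ forced by the canonical normalization, and confirm that the flip endpoints match the ``lengthen each block by one'' prescription so that the parameter count collapses to $p-1$ via the preceding Lemma.
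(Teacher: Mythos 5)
Your proposal is correct and follows essentially the same route as the paper's own proof: both decouple the cyclicity condition $\mathcal{F}^{(i_1,\dots,i_p)}\overrightarrow{s}=\mathcal{T}_k\overrightarrow{s}$ over the $k$-supports $I_l$, use the asymptotic constraint to force an odd number of alternating sign changes on each support (your $A(DA)^{d_l}$ pattern is the paper's statement that each $p_l$ is odd), and then read off the initial run as the Okamoto block, the remaining runs as second-type blocks, and the flip positions as the discontinuity positions shifted by $k$. The only differences are cosmetic (your ascent/descent and primary-run/bump vocabulary versus the paper's explicit $j_i^{(l)}$, $\beta_i^{(l)}$ bookkeeping), and your identification of flipped indices with discontinuity indices should be stated as a shift by $k$ within each residue class, as your final flip list already implicitly assumes.
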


\begin{proof}
Eq(\ref{cyclicspin}) gives the following infinite linear system for the $%
s_{n}$%
\begin{equation}
\left\{ 
\begin{array}{c}
s_{n}=s_{n-k}\text{, if }n\notin N_{m} \\ 
s_{n}=-s_{n-k}\text{, if }n\in N_{m}.%
\end{array}%
\right.   \label{spinsystem}
\end{equation}%
This implies that $s_{n}$, as a function of $n$, is piecewise constant on
each $k$-support $I_{l}$ with a finite number of discontinuities on each of
these supports and with the asymptotic behaviour (see Eq(\ref{topoconst}))%
\begin{equation}
\left\{ 
\begin{array}{c}
s_{l+jk}=+1\text{, if }l+jk>n_{m} \\ 
s_{l+jk}=-1\text{, \ if }l+jk<0.%
\end{array}%
\right.   \label{topoconst2}
\end{equation}%
The discontinuities of $s_{n}$ are located at the flip positions minus $k$, $%
\left( i_{1}-k,...,i_{p}-k\right) $, and the possible $p$-cyclic canonical
Maya diagrams are obtained by sharing these $p$\ discontinuities into the
different $k$-supports $I_{l},\ l\in \left\{ 0,...,k-1\right\} $. Remark
that the constraint Eq(\ref{topoconst2}) implies that we have an odd number
of discontinuities in each $k$-support $I_{l}$. Due to the canonical choice
Eq(\ref{topoconst}) and the cyclicity condition Eq(\ref{cyclMD2}), we know
that necessarily we have to do one flip at the level $0$ (which is on the
support $I_{0}$) and one flip at the level $n_{m}+k$ (which is on the $k$%
-support $I_{l_{m}}$ if $n_{m}=l_{m}\func{mod}k$). This means also that we
have necessarily at least one discontinuity in the $k$-support $I_{0}$ (in $%
-k$) and one in the $k$-support $I_{l_{m}}$ (in $n_{m}$).\newline
Let $p_{l}\in 2%
\mathbb{N}
+1$ the number of discontinuities in the $k$-support $I_{l}$, $%
\sum\limits_{l=0}^{k-1}p_{l}=p$ (with $p-k\in 2%
\mathbb{N}
$), and denote $l+j_{i}^{\left( l\right) }k,\ i\in \left\{ 1,...,p_{l}\right\}
,\ j_{i}^{\left( l\right) }\in 
\mathbb{N}
,$ the positions of these discontinuities.\newline
In the positively indexed part of the $k$-support $I_{l}$, the down spins
are then located at%
\begin{equation}
\left( l,...,l+\left( j_{1}^{\left( l\right) }-1\right) k\right) ,\left(
l+j_{2}^{\left( l\right) }k,...,l+\left( j_{3}^{\left( l\right) }-1\right)
k\right) ,...,\left( l+j_{p_{l}-1}^{\left( l\right) }k,...,l+\left(
j_{p_{l}}^{\left( l\right) }-1\right) k\right) ,
\end{equation}%
which can be rewritten in terms of blocks as (see Eq(\ref{block})),%
\begin{equation}
\left( l\mid j_{1}^{\left( l\right) }\right) _{k},\left( l+j_{2}^{\left(
l\right) }k\mid j_{3}^{\left( l\right) }-j_{2}^{\left( l\right) }\right)
_{k},...,\left( l+j_{p_{l}-1}^{\left( l\right) }k\mid j_{p_{l}}^{\left(
l\right) }-j_{p_{l}-1}^{\left( l\right) }\right) _{k}.  \label{bloc1}
\end{equation}%
Note that due to the canonical choice, it gives in particular for the $k$%
-support $I_{0}$%
\begin{equation}
\left( j_{2}^{\left( 0\right) }k\mid j_{3}^{\left( 0\right) }-j_{2}^{\left(
0\right) }\right) _{k},...,\left( j_{p_{0}-1}^{\left( 0\right) }k\mid
j_{p_{0}}^{\left( 0\right) }-j_{p_{0}-1}^{\left( 0\right) }\right) _{k}.
\label{bloc0}
\end{equation}%
If we denote $j_{1}^{\left( l\right) }=\alpha _{l}$ and%
\begin{equation}
\left\{ 
\begin{array}{c}
j_{2r+1}^{\left( l\right) }-j_{2r}^{\left( l\right) }=\beta _{2r}^{\left(
l\right) } \\ 
l+j_{2r}^{\left( l\right) }k=\beta _{2r-1}^{\left( l\right) },%
\end{array}%
\right. r\geq 1,
\end{equation}%
Eq(\ref{bloc0}) and Eq(\ref{bloc1}) become%
\begin{equation}
\left( \beta _{1}^{\left( 0\right) }\mid \beta _{2}^{\left( 0\right)
}\right) _{k},...,\left( \beta _{p_{0}-2}^{\left( 0\right) }\mid \beta
_{p_{0}-1}^{\left( 0\right) }\right) _{k}
\end{equation}%
and%
\begin{equation}
\left( l\mid \alpha _{l}\right) _{k},\left( \beta _{1}^{\left( l\right)
}\mid \beta _{2}^{\left( l\right) }\right) _{k},...,\left( \beta
_{p_{l}-2}^{\left( l\right) }\mid \beta _{p_{l}-2}^{\left( l\right) }\right)
_{k}.
\end{equation}%
By taking into account all the $k$-supports, globally we have $\left(
k-1\right) $ $k$-Okamoto blocks $\left( l\mid \alpha _{l}\right) _{k},$
where the $\alpha _{l}$ are arbitrary, and $j=\sum\limits_{l=0}^{k-1}\left(
p_{l}-1\right) /2=\left( p-k\right) /2$ blocks of the second type $\left(
\beta _{2r-1}^{\left( l\right) }\mid \beta _{2r}^{\left( l\right) }\right)
_{k},\ r\geq 1,$ where the $\beta _{i}^{\left( l\right) }$ are arbitrary. If
we denote them as 
\begin{equation}
\left( \lambda _{i}\mid \mu _{i}\right) _{k},\ i\in \left\{ 1,...,j\right\} ,
\label{second type blocks}
\end{equation}%
where the $\lambda _{i}$ and $\mu _{i}$ are arbitrary, we arrive to Eq(\ref%
{pcyclicMD}).

The positions of the flips, obtained by adding $k$ to the discontinuities
positions, gives the $p$-cyclic chain associated to $N_{m}$. We have $p_{+}=j
$ positive flips located in $\lambda _{1},...,\lambda _{j}$ and $p_{-}=j+k$
negative flips located in $0,1+\alpha _{1}k,...,\left( k-1\right) +\alpha
_{k-1}k,\ \lambda _{1}+\mu _{1}k,...,\lambda _{j}+\mu _{j}k.$
\end{proof}

For example, consider the $p$-cyclic chain $\left( \lambda _{1}+\mu
_{1}k,...,\lambda _{j}+\mu _{j}k,\lambda _{1},...,\lambda _{j},1+\alpha
_{1}k,...,\left( k-1\right) +\alpha _{k-1}k,0\right) $. Its action on $N_{m}$
can be explicitely written as%
\begin{eqnarray}
&&N_{m}\overset{\lambda _{1}}{\rightarrow }\left( \left( 1\mid \alpha
_{1}\right) _{k},...,\left( k-1\mid \alpha _{k-1}\right) _{k};\left( \lambda
_{1}+k\mid \mu _{1}\right) _{k},...,\left( \lambda _{j}\mid \mu _{j}\right)
_{k}\right)  \notag \\
&&...\overset{\lambda _{j}}{\rightarrow }\left( \left( 1\mid \alpha
_{1}\right) _{k},...,\left( k-1\mid \alpha _{k-1}\right) _{k};\left( \lambda
_{1}+k\mid \mu _{1}-1\right) _{k},...,\left( \lambda _{j}+k\mid \mu
_{j}-1\right) _{k}\right)  \notag \\
&&\overset{\lambda _{1}+\mu _{1}k}{\rightarrow }\left( \left( 1\mid \alpha
_{1}\right) _{k},...,\left( k-1\mid \alpha _{k-1}\right) _{k};\left( \lambda
_{1}+k\mid \mu _{1}\right) _{k},...,\left( \lambda _{j}+k\mid \mu
_{j}-1\right) _{k}\right)  \notag \\
&&...\overset{\lambda _{j}+\mu _{j}k}{\rightarrow }\left( \left( 1\mid
\alpha _{1}\right) _{k},...,\left( k-1\mid \alpha _{k-1}\right) _{k};\left(
\lambda _{1}+k\mid \mu _{1}\right) _{k},...,\left( \lambda _{j}+k\mid \mu
_{j}\right) _{k}\right)  \notag \\
&&\overset{1+\alpha _{1}k}{\rightarrow }\left( \left( 1\mid \alpha
_{1}+1\right) _{k},...,\left( k-1\mid \alpha _{k-1}\right) _{k};\left(
\lambda _{1}+k\mid \mu _{1}\right) _{k},...,\left( \lambda _{j}+k\mid \mu
_{j}\right) _{k}\right)  \notag \\
&&...\overset{\left( k-1\right) +\alpha _{k-1}k}{\rightarrow }\left( \left(
1\mid \alpha _{1}+1\right) _{k},...,\left( k-1\mid \alpha _{k-1}+1\right)
_{k};\left( \lambda _{1}+k\mid \mu _{1}-1\right) _{k},...,\left( \lambda
_{j}+k\mid \mu _{j}\right) _{k}\right)  \notag \\
&&\overset{0}{\rightarrow }N_{m}\oplus k,
\end{eqnarray}%
since

\begin{eqnarray}
&&\left( 0,\left( 1\mid \alpha _{1}\right) _{k},...,\left( k-1\mid \alpha
_{k-1}\right) _{k};\left( \lambda _{1}+k\mid \mu _{1}\right) _{k},...,\left(
\lambda _{j}+k\mid \mu _{j}\right) _{k}\right)  \notag \\
&=&\left( 0,1,...,k-1\right) \cup \left( \left( 1+k\mid \alpha _{1}\right)
_{k},...,\left( 2k-1\mid \alpha _{k-1}\right) _{k};\left( \lambda _{1}+k\mid
\mu _{1}\right) _{k},...,\left( \lambda _{j}+k\mid \mu _{j}\right)
_{k}\right)  \notag \\
&=&\left( 0\mid k\right) _{1}\cup \left( N_{m}+k\right) .
\end{eqnarray}

Let us see now the two extremal cases $k=p$ and $k=1$.

When in Eq(\ref{pcyclicMD}) some blocks merge or overlap, we say that the
block structure of $N_{m}$ is \textbf{degenerate}.

\subsubsection{k = p}

We have no second type block ($j=0$) and $\left( p-1\right) $ $p$-Okamoto
blocks:

\begin{equation}
N_{m}=\left( \left( 1\mid \alpha _{1}\right) _{p},...,\left( p-1\mid \alpha
_{p-1}\right) _{p}\right) ,\ m=\alpha _{1}+...+\alpha _{p-1}.
\end{equation}

Each $p$-support contains exactly one discontinuity ($p_{l}=1,\ \forall l\in
\left\{ 0,...,p-1\right\} $) and $N_{m}$ contains no multiple of $p$. We
call $N_{m}$ a $p$\textbf{-Okamoto Maya diagram}.

The corresponding $p$-cyclic chains are built from

\begin{equation}
\left\{ 0,1+\alpha _{1}p,...,\left( p-1\right) +\alpha _{p-1}p\right\} ,
\end{equation}%
which contains $p$ negative flips.

\subsubsection{k = 1}

We have no Okamoto block and $\left( p-1\right) /2$ second type blocks:

\begin{equation}
N_{m}=\left( \left( \lambda _{1}\mid \mu _{1}\right) _{1},...,\left( \lambda
_{(p-1)/2}\mid \mu _{(p-1)/2}\right) _{1}\right) ,\ m=\mu _{1}+...+\mu
_{(p-1)/2},
\end{equation}%
which are GH-blocks. There is only one $1$-support which $%
\mathbb{Z}
$ is itself and which contains all the $p$ discontinuities ($k-1=0$ and $%
p_{0}=p$). $N_{m}$ is called a $p$\textbf{-GH Maya diagram}.

In the corresponding $p$-cyclic chains, we have $\left( p+1\right) /2$
negative flips at $0,\lambda _{1}+\mu _{1},...,\lambda _{j}+\mu _{j}$ and $%
\left( p-1\right) /2$ negative flips at $\lambda _{1},...,\lambda _{j}.$

\section{Rational extensions of the HO and rational solutions of the
dressing chains of odd periodicity}

\subsection{Rational extensions of the HO}

The HO potential is defined on the real line by

\begin{equation}
V\left( x;\omega \right) =\frac{\omega ^{2}}{4}x^{2}-\frac{\omega }{2},\
\omega \in 
\mathbb{R}
.  \label{OH}
\end{equation}

With Dirichlet boundary conditions at infinity and supposing $\omega \in 
\mathbb{R}
^{+}$, $V\left( x;\omega \right) $ has the following spectrum ($z=\sqrt{%
\omega /2}x$)%
\begin{equation}
\left\{ 
\begin{array}{c}
E_{n}\left( \omega \right) =n\omega \\ 
\psi _{n}(x;\omega )=\psi _{0}(x;\omega )H_{n}\left( z\right)%
\end{array}%
\right. ,\quad n\geq 0,  \label{spec OH}
\end{equation}%
with $\psi _{0}(x;\omega )=\exp \left( -z^{2}/2\right) .$ It is the most
simple example of translationally shape invariant potential \cite%
{gendenshtein,grandati berard} with

\begin{equation}
V^{\left( 0\right) }\left( x;\omega \right) =V\left( x;\omega \right)
+\omega .  \label{SI}
\end{equation}

It also possesses a unique parametric symmetry, $\Gamma _{3}$, which acts as 
\cite{GGM,GGM1}

\begin{equation}
\left\{ 
\begin{array}{c}
\omega \overset{\Gamma _{3}}{\rightarrow }\left( -\omega \right) \\ 
V(x;\omega )\overset{\Gamma _{3}}{\rightarrow }V(x;-\omega )=V(x;\omega
)+\omega ,%
\end{array}%
\right.  \label{gam3}
\end{equation}%
and then generates the \textbf{conjugate spectrum} of $V\left( x;\omega
\right) $

\begin{equation}
\left\{ 
\begin{array}{c}
E_{n}\left( \omega \right) \\ 
\psi _{n}(x;\omega )%
\end{array}%
\right. \overset{\Gamma _{3}}{\rightarrow }\left\{ 
\begin{array}{c}
E_{-\left( n+1\right) }\left( \omega \right) =-n\omega <0 \\ 
\psi _{n}(x;-\omega )=iH_{n}\left( iz\right) \exp \left( z^{2}/2\right)
=\psi _{-\left( n+1\right) }(x;\omega ).%
\end{array}%
\right.  \label{conjhermite}
\end{equation}

The union of the spectrum and the conjugate spectrum forms the \textbf{%
extended spectrum} of the HO which contains all the \textbf{quasi-polynomial}
(ie polynomial up to a gauge factor) eigenfunctions of this potential. All
the rational extensions of the HO are then obtained via chains of DT with
seed functions chosen in the extended spectrum and they are then labelled by
tuples of spectral indices which relative integers $N_{m}=\left(
n_{1},...,n_{m}\right) \in 
\mathbb{Z}
^{m}$. This establishes a one to one correspondence between the set of
rational extensions of the HO and the set of Maya diagrams. A rational
extension associated to a canonical Maya diagram is said to be a \textbf{%
canonical extension}. The energy spectrum of $V^{^{\left( N_{m}\right)
}}(x;\omega )\ $ is the set of $\left\{ E_{j}(\omega )\right\} _{j\in 
\mathbb{Z}
}$ for the indices associated to the empty levels of the Maya diagram $N_{m}$%
.\ 

As shown in \cite{GGM}, this correspondence preserves the equivalence
relation, in the sense that if 
\begin{equation}
N_{m}\approx N_{m^{\prime }}^{\prime },
\end{equation}%
then the corresponding extended potentials\ are identical up to an additive
constant:

\begin{equation}
V^{\left( N_{m}\right) }(x;\omega )\ =V^{^{\left( N_{m^{\prime }}^{\prime
}\right) }}(x;\omega )+q\omega ,\ q\in 
\mathbb{Z}
.  \label{Eqext1}
\end{equation}

In particular, if $N_{m}$ is canonical

\begin{equation}
V^{\left( N_{m}\oplus k\right) }(x;\omega )\ =V^{\left( N_{m}\right)
}(x;\omega )+k\omega .  \label{Eqext2}
\end{equation}

It follows that to describe all the rational extensions of the HO, we can
restrict ourself to those associated to canonical Maya diagrams.

We have also equivalence relations for the Wronskians, in particular

\begin{equation}
W^{\left( N_{m}\oplus k\right) }(x;\omega )\ =\left( \psi _{0}(x;\omega
)\right) ^{k}W^{\left( N_{m}\right) }(x;\omega ),  \label{Eqext3}
\end{equation}%
or, if $k\leq n_{1}\leq ...\leq n_{m}$

\begin{equation}
W^{\left( 0,...,k-1\right) \cup N_{m}}(x;\omega )\ =\left( \psi
_{0}(x;\omega )\right) ^{k}W^{\left( N_{m}-k\right) }(x;\omega ).
\label{Eqext4}
\end{equation}

Using Eq(\ref{spec OH}) and the usual properties of the Wronskians \cite%
{muir} 
\begin{equation}
\left\{ 
\begin{array}{c}
W\left( uy_{1},...,uy_{m}\mid x\right) =u^{m}W\left( y_{1},...,y_{m}\mid
x\right) \\ 
W\left( y_{1},...,y_{m}\mid x\right) =\left( \frac{dz}{dx}\right)
^{m(m-1)/2}W\left( y_{1},...,y_{m}\mid z\right) ,%
\end{array}%
\right.  \label{wronskprop}
\end{equation}%
we can write

\begin{equation}
W^{\left( N_{m}\right) }(x;\omega )\ \propto \left( \psi _{0}(x;\omega
)\right) ^{m}\mathcal{H}^{\left( N_{m}\right) }\left( z\right) ,  \label{WH}
\end{equation}%
where $\mathcal{H}^{\left( N_{m}\right) }$ is the following Wronskian
determinant ($i=0,...,m-1$)%
\begin{equation}
\mathcal{H}^{\left( N_{m}\right) }\left( z\right) =W\left(
H_{n_{1}},...,H_{n_{m}}\mid z\right) =\left\vert 
\begin{array}{ccc}
H_{n_{1}}(z) & ... & H_{n_{m}}(z) \\ 
... &  & ... \\ 
\left( n_{1}\right) _{i}H_{n_{1}-i}(z) & ... & \left( n_{m}\right)
_{i}H_{n_{m}-i}(z) \\ 
... &  & ... \\ 
\left( n_{1}\right) _{m-1}H_{n_{1}-m+1}(z) & ... & \left( n_{m}\right)
_{m-1}H_{n_{m}-m+1}(z)%
\end{array}%
\right\vert ,  \label{PW}
\end{equation}%
$_{i}\left( x\right) $ and $\left( x\right) _{i}$ being respectively the
rising and falling factorials

\begin{equation}
_{i}\left( x\right) =x(x+1)...(x+i-1),\ \left( x\right)
_{i}=x(x-1)...(x-i+1),  \label{poch}
\end{equation}%
with the convention that $H_{n}(z)=0$ if $n<0$.

Eq(\ref{Eqext3}) and Eq(\ref{Eqext4}) \ give then

\begin{equation}
\left\{ 
\begin{array}{c}
\mathcal{H}^{\left( N_{m}\oplus k\right) }(z)\ \propto \mathcal{H}^{\left(
N_{m}\right) }\left( z\right) \\ 
\mathcal{H}^{\left( 0,...,k-1\right) \cup N_{m}}(z)\ \propto \mathcal{H}%
^{\left( N_{m}-k\right) }\left( z\right) .%
\end{array}%
\right.  \label{Eqext5}
\end{equation}

\subsection{$p$-cyclic extensions of the HO and rational solutions of the
periodic dressing chains}

First we can 
prove
the following lemma:

\begin{lemma}
\textit{A rational extension }$V^{^{\left( N_{m}\right) }}(x;\omega )$%
\textit{\ of the HO is a }$p$\textit{-cyclic potential iff its associated
Maya diagram }$N_{m}$\textit{\ is }$p$\textit{-cyclic.}
\end{lemma}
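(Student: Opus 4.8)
The plan is to establish the biconditional by connecting the algebraic cyclicity condition on the potential (Eq(\ref{Cyclicity})) with the combinatorial cyclicity condition on the Maya diagram (Eq(\ref{cyclMD1})). The key observation is that rational extensions of the HO are in one-to-one correspondence with Maya diagrams, and under this correspondence the energy shift $\Delta$ in the dressing chain is tied to the translation amplitude $k$ of the Maya diagram. The translation $N_m \oplus k$ at the level of Maya diagrams corresponds, via the spectral data in Eq(\ref{spec OH}), precisely to an energy shift by $k\omega$, as recorded in Eq(\ref{Eqext2}).

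First I would spell out the forward direction. Assume $N_m$ is $p$-cyclic with translation of $k$, so by Eq(\ref{cyclMD2}) there is a tuple $(\nu_1,\dots,\nu_p)$ of flip levels with $(N_m,\nu_1,\dots,\nu_p) = N_m \oplus k$. Each flip $\nu_i$ corresponds to a Darboux transformation with a seed function drawn from the extended spectrum of the HO (empty/filled levels correspond to eigenfunctions in the spectrum or conjugate spectrum, per Eq(\ref{conjhermite})). Chaining these $p$ DTs produces the extension $V^{(N_m,\nu_1,\dots,\nu_p)}$, and by the Crum formulas Eq(\ref{crum}) this equals $V^{(N_m \oplus k)}$. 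Invoking Eq(\ref{Eqext2}), this is exactly $V^{(N_m)}(x;\omega) + k\omega$, which is the cyclicity relation Eq(\ref{Cyclicity}) with $\Delta = k\omega$. Hence $V^{(N_m)}$ solves the dressing chain of period $p$.

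For the converse, suppose $V^{(N_m)}$ is $p$-cyclic, meaning there is a chain of $p$ DTs returning the potential up to an additive shift $\Delta$, as in Eq(\ref{Cyclicity}). Since every quasi-polynomial seed function of the HO is indexed by a level of the extended spectrum, each of the $p$ DTs is a flip at some level $\nu_i$, and the resulting potential is $V^{(N_m,\nu_1,\dots,\nu_p)}$. The shifted potential $V^{(N_m)}+\Delta$ must itself be a rational extension of the HO equivalent to $V^{(N_m)}$; by the correspondence Eq(\ref{Eqext1}) and the structure of the HO spectrum, the only way a rational extension equals $V^{(N_m)}$ plus a constant is if the underlying Maya diagrams are equivalent, forcing $\Delta = k\omega$ for some integer $k$ and $(N_m,\nu_1,\dots,\nu_p) \approx N_m$. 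This is precisely Eq(\ref{cyclMD1}), so $N_m$ is $p$-cyclic.

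The main obstacle I expect is the converse rigour: asserting that the $p$ seed functions \emph{must} come from the extended (quasi-polynomial) spectrum, and that no ``accidental'' cyclicity using non-quasi-polynomial seeds can occur. One must argue that for the potential to return to a rational HO extension, the intermediate and final eigenfunctions are forced to be quasi-polynomial, so that each step is genuinely a flip on a Maya diagram. This is where the result of \cite{veselov}, indicating that odd-periodic dressing chains necessarily arise from rational HO extensions, would be leaned upon; the injectivity of the potential-to-Maya-diagram correspondence (modulo the equivalence Eq(\ref{Eqext1})) then closes the argument.
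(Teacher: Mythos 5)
Your proof follows essentially the same route as the paper's: the forward direction via Eq(\ref{Eqext2}) (and Eq(\ref{Eqext1}) for non-canonical diagrams), and the converse by noting that the energy shift must be an integer multiple of $\omega$, so that the one-to-one correspondence between rational extensions and Maya diagrams forces $\left( N_{m},\nu _{1},...,\nu _{p}\right)\approx N_{m}$. The subtlety you flag in the converse --- that the seed functions must be quasi-polynomial so that each DT is genuinely a flip --- is likewise left implicit in the paper, whose proof simply posits that the cyclic chain is indexed by a tuple of integers.
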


\begin{proof}
If $N_{m}$ is a canonical Maya diagram which is\ $p$-cyclic with translation
of $k>0$, there exist a tuple of $p$ positive integers $\left( \nu
_{1},...,\nu _{p}\right) \in 
\mathbb{N}
^{p}$ such that (see Eq(\ref{Eqext2}))%
\begin{equation}
V^{\left( N_{m},\nu _{1},...,\nu _{p}\right) }(x;\omega )=V^{\left(
0,...,k-1\right) \cup \left( N_{m}+k\right) }(x;\omega )=V^{\left(
N_{m}\right) }(x;\omega )+k\omega 
\end{equation}%
and consequently $V^{\left( N_{m}\right) }(x;\omega )$ is $p$-cyclic. More
generally, using Eq(\ref{Eqext1}), we deduce that if $N_{m}$ is an arbitrary 
$p$-cyclic Maya diagram, then $V^{^{\left( N_{m}\right) }}(x;\omega )$ is a $%
p$-cyclic potential. \newline
Conversely, if $V^{^{\left( N_{m}\right) }}(x;\omega )$ is $p$-cyclic, there
exist a tuple of $p$ integers $\left( \nu _{1},...,\nu _{p}\right) \in 
\mathbb{Z}
^{p}$ such that%
\begin{equation}
V^{\left( N_{m},\nu _{1},...,\nu _{p}\right) }(x;\omega )=V^{\left(
N_{m}\right) }(x;\omega )+\Delta ,
\end{equation}%
and necessarily the energy shift $\Delta $ must be an integer multiple of $%
\omega $. Due to the correspondence between rational extensions and Maya
diagrams, this implies immediately that $\left( N_{m},\nu _{1},...,\nu
_{p}\right) $ and $N_{m}$ are equivalent. Consequently, $N_{m}$ is $p$%
-cyclic.
\end{proof}

We then deduce the theorem

\begin{theorem}
\textit{The rational extensions of the HO }$V^{^{\left( N_{m}\right)
}}(x;\omega )$\textit{\ where }$N_{m}$\textit{\ is given by Eq(\ref%
{pcyclicMD}), solve the dressing chain of period }$p$\textit{\ Eq(\ref%
{pcyclicdress}) for the set of parameters}%
\begin{equation}
\Delta =k\omega \text{ and }\varepsilon _{i,i+1}=\left( \nu _{P(i)}-\nu
_{P(i+1)}\right) \omega ,\ i\in \left\{ 1,...,p\right\} ,
\end{equation}%
\textit{where }$P$\textit{\ is any permutation of }$S_{p}$\textit{\ and (}$%
\nu _{p+1}=\nu _{1}$\textit{)}%
\begin{equation}
\left( \nu _{1},...,\nu _{p}\right) \mathit{=}\left( 0,1+\alpha
_{1}k,...,\left( k-1\right) +\alpha _{k-1}k,\ \ \lambda _{1},\lambda
_{1}+\mu _{1}k,...,\ \lambda _{j},\lambda _{j}+\mu _{j}k\right) \mathit{.}
\end{equation}%
\textit{The solutions of the dressing chain system are then given by (}$z=%
\sqrt{\omega /2}x$\textit{)}%
\begin{equation}
\left\{ 
\begin{array}{c}
w_{i}^{\left( 1,...,i-1\right) }(x)=-\omega x/2+\sqrt{\frac{\omega }{2}}%
\frac{d}{dz}\left( \log \left( \frac{\mathcal{H}^{\left( N_{m},\nu _{P\left(
1\right) },...,\nu _{P\left( i-1\right) }\right) }\left( z\right) }{\mathcal{%
H}^{\left( N_{m},\nu _{P\left( 1\right) },...,\nu _{P\left( i\right)
}\right) }\left( z\right) }\right) \right) ,\text{ if the flip in }\nu
_{P\left( i\right) }\text{ is positive,} \\ 
w_{i}^{\left( 1,...,i-1\right) }(x)=\omega x/2+\sqrt{\frac{\omega }{2}}\frac{%
d}{dz}\left( \log \left( \frac{\mathcal{H}^{\left( N_{m},\nu _{P\left(
1\right) },...,\nu _{P\left( i-1\right) }\right) }\left( z\right) }{\mathcal{%
H}^{\left( N_{m},\nu _{P\left( 1\right) },...,\nu _{P\left( i\right)
}\right) }\left( z\right) }\right) \right) ,\text{ if the flip in }\nu
_{P\left( i\right) }\text{ is negative,}%
\end{array}%
\right.   \label{soldress}
\end{equation}%
\textit{with the convention that if a spectral index is repeated two times
in the tuple }$\left( \nu _{1},...,\nu _{p}\right) $\textit{\ characterizing
the chain, then we suppress the corresponding eigenfunction in the }$%
\mathcal{H}$\textit{\ determinants.}
\end{theorem}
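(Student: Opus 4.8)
The plan is to lean on the preceding lemma, which identifies $p$-cyclicity of the extension $V^{\left( N_{m}\right) }(x;\omega )$ with $p$-cyclicity of its Maya diagram. Since the diagram $N_m$ of Eq(\ref{pcyclicMD}) is $p$-cyclic with translation $k$ by Theorem 1, the extension $V^{\left( N_{m}\right) }$ is a $p$-cyclic potential, and by the Veselov--Shabat construction recalled in Section 2 its successive RS seed functions \emph{automatically} satisfy the dressing chain system Eq(\ref{pcyclicdress}). Thus the bulk of the work is not to verify the nonlinear system by hand, but to (i) read off the correct parameters $\Delta$ and $\varepsilon_{i,i+1}$, and (ii) convert the abstract RS functions into the closed determinantal form Eq(\ref{soldress}). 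First I would fix the shift: the chain translates $N_m$ by $k$, so Eq(\ref{Eqext2}) gives $V^{\left( N_{m},\nu _{1},...,\nu _{p}\right) }(x;\omega )=V^{\left( N_{m}\right) }(x;\omega )+k\omega$, forcing $\Delta =k\omega$, consistent with the parity lemma.

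To obtain the $\varepsilon_{i,i+1}$ I would note that the seed function employed at the $i$-th step carries spectral index $\nu_{P(i)}$, hence HO eigenvalue $E_{\nu_{P(i)}}=\nu_{P(i)}\omega$; since $\varepsilon_{i,i+1}=E_i-E_{i+1}$ by definition, this yields $\varepsilon_{i,i+1}=\left( \nu_{P(i)}-\nu_{P(i+1)}\right)\omega$. The freedom in the permutation $P\in S_p$ is justified by observing that the flips act on the $\overrightarrow{s}$-vector as commuting involutions on individual spins, so reordering the tuple $\left( \nu_1,\ldots,\nu_p\right)$ leaves the endpoint $N_m\oplus k$ unchanged; every ordering therefore produces a genuine $p$-cyclic chain with the correspondingly permuted parameters.

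Next I would derive the explicit form. Writing each dressed seed function through the Crum formula Eq(\ref{crum}) as a ratio of consecutive Wronskians,
\begin{equation}
w_i^{\left( 1,\ldots,i-1\right) }(x) = -\frac{d}{dx}\log \frac{W^{\left( N_m,\nu_{P(1)},\ldots,\nu_{P(i)}\right) }(x)}{W^{\left( N_m,\nu_{P(1)},\ldots,\nu_{P(i-1)}\right) }(x)},
\end{equation}
I would substitute the HO factorisation Eq(\ref{WH}), namely $W^{\left( M\right) }\propto \left( \psi_0\right)^{|M|}\mathcal{H}^{\left( M\right) }(z)$ with $\psi_0=e^{-z^2/2}$ and $z=\sqrt{\omega/2}\,x$. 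The constant prefactors drop under the logarithmic derivative, the $\mathcal{H}$-ratio converts to a $z$-derivative via $d/dx=\sqrt{\omega/2}\,d/dz$, and the surviving power $\left( \psi_0\right)^{|M_i|-|M_{i-1}|}$ contributes $\pm\frac{\omega}{2}x$ because $-\frac{d}{dx}\log\psi_0=\frac{\omega}{2}x$. The exponent $|M_i|-|M_{i-1}|$ equals $-1$ when the flip at $\nu_{P(i)}$ removes a particle (positive flip) and $+1$ when it adds one (negative flip), which is exactly the dichotomy producing the two lines of Eq(\ref{soldress}).

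The hard part is the sign-and-size bookkeeping: one must track how the number of active indices in the Wronskian changes at each flip, correctly invoke the repeated-index suppression convention when $\nu_{P(i)}$ coincides with an index already present, and confirm that the sign in front of $\omega x/2$ is synchronised with the positive/negative flip classification coming from the $\overrightarrow{s}$-vector. Once this accounting is pinned down, the consistency with Eq(\ref{addconst}) is immediate and serves as a check: the constant terms sum to $\left( p_--p_+\right)\frac{\omega}{2}x=\frac{k\omega}{2}x=\frac{\Delta}{2}x$ by the parity lemma, while the logarithmic-derivative terms telescope to $\sqrt{\omega/2}\,\frac{d}{dz}\log\bigl(\mathcal{H}^{\left( N_m\right) }/\mathcal{H}^{\left( N_m\oplus k\right) }\bigr)$, which vanishes since $\mathcal{H}^{\left( N_m\oplus k\right) }\propto \mathcal{H}^{\left( N_m\right) }$ by Eq(\ref{Eqext5}). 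The verification that these $w_i$ satisfy Eq(\ref{pcyclicdress}) with the stated parameters is then routine.
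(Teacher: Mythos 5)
Your proposal is correct and follows essentially the same route as the paper: the first part is obtained by combining the cyclicity lemma with Theorem 1, and the determinantal formulas Eq(\ref{soldress}) are derived by writing each RS function as a logarithmic derivative of a ratio of Crum Wronskians and substituting the factorisation Eq(\ref{WH}), with the $\pm\omega x/2$ term coming from the change $\pm 1$ in the number of seed functions according to the sign of the flip. Your added telescoping check against Eq(\ref{addconst}) and the explicit justification of the permutation freedom are harmless refinements not present in the paper's own (terser) argument.
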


\begin{proof}
The first part is a direct consequence of the preceding lemma combined to
Theorem 1. With the Crum formulas Eq(\ref{crum}), we can write%
\begin{equation}
w_{i}^{\left( 1,...,i-1\right) }(x)=-\left( \log \left( \frac{W^{\left(
N_{m},\nu _{P\left( 1\right) },...,\nu _{P\left( i\right) }\right)
}(x;\omega )}{W^{\left( N_{m},\nu _{P\left( 1\right) },...,\nu _{P\left(
i-1\right) }\right) }(x;\omega )}\right) \right) ^{\prime },
\end{equation}%
with the convention that if a spectral index is repeated two times in the
tuple $\left( \nu _{1},...,\nu _{p}\right) $ characterizing the chain, then
we suppress the corresponding eigenfunction in the Wronskians.\newline
If $\nu _{P\left( i\right) }\in N_{m},$ the flip in $\nu _{P\left( i\right) }
$ is positive and the tuple $\left( N_{m},\nu _{P\left( 1\right) },...,\nu
_{P\left( i\right) }\right) $ contains one less index than $\left( N_{m},\nu
_{P\left( 1\right) },...,\nu _{P\left( i-1\right) }\right) $. Using Eq(\ref%
{WH}) and Eq(\ref{spec OH}), we deduce the first equality of Eq(\ref%
{soldress}).\newline
If $\nu _{P\left( i\right) }\notin N_{m},$ the flip in $\nu _{P\left(
i\right) }$ is negative and the tuple $\left( N_{m},\nu _{P\left( 1\right)
},...,\nu _{P\left( i\right) }\right) $ contains one more index than $\left(
N_{m},\nu _{P\left( 1\right) },...,\nu _{P\left( i-1\right) }\right) $.
Using Eq(\ref{WH}) and Eq(\ref{spec OH}), we deduce the second equality of
Eq(\ref{soldress}).
\end{proof}

\subsection{Examples}

\subsubsection{One-cyclic extensions of the HO and rational solutions of the
dressing chain of period $1$.}

For $p=1$, we have $k=1,\ j=0$ and then immediately \textit{Theorem 3} gives 
$N_{m}=\varnothing $. It means that the unique rational potential solving
the $1$-cyclic chain is the HO itself and the corresponding cyclic chain
reduces to $\left( 0\right) $ ie to the SUSY partnership. This is perfectly
coherent with our previous results on one step cyclicity (see Eq(\ref%
{RSOHfond})).

\subsubsection{Two-cyclic extensions of the HO and rational solutions of the
dressing chain of period $2$.}

For $p=2$, we have $k=2,\ j=0$ and then we deduce from \textit{Theorem 2}
that the Okamoto extension of the HO associated to the Maya diagram

\begin{equation}
N_{m}=\left( \left( 1\mid m\right) _{2}\right) =\left( 1,3,...,2m-1\right)
,\ 
\end{equation}%
solves the dressing chain of period $2$.

This particular type of Okamoto Maya diagram is called an \textbf{Umemura
staircase}. The corresponding $2$-cyclic chain is $\left( 0,2m+1\right) $.

Nevertheless, \textit{Theorem 3} is not applicable here since $p$ is even
and we cannot conclude that this is the most general rational solution of
the $2$-cyclic chain. In fact, we can write (see Eq(\ref{crum}))

\begin{equation}
V^{^{\left( N_{m}\right) }}(x;\omega )=V(x;\omega )-2\left( \log W^{\left(
1,3...,2m-1\right) }(x;\omega )\right) ^{\prime \prime },  \label{2ext}
\end{equation}%
where (see Eq(\ref{Eqext5}) and Eq(\ref{spec OH}))

\begin{equation}
W^{\left( 1,3...,2m-1\right) }(x;\omega )\propto
e^{-mz^{2}/2}W(H_{1}(z),H_{3}(z)...,H_{2m-1}(z)\mid z),
\end{equation}%
with $z=\sqrt{\omega /2}x$. But we also have \cite{magnus,szego}

\begin{equation}
\left\{ 
\begin{array}{c}
H_{2j+1}(z)=\left( -1\right) ^{j}2^{2j+1}j!t^{1/2}L_{j}^{1/2}(t) \\ 
H_{2j}(z)=\left( -1\right) ^{j}2^{2j}j!L_{j}^{-1/2}(t)%
\end{array}%
\right. ,  \label{correspHL}
\end{equation}%
where $L_{j}^{\alpha }$ is the usual Laguerre polynomial and $t=z^{2}$. It
results (see Eq(\ref{wronskprop}))

\begin{equation}
W^{\left( 1,3...,2m-1\right) }(x;\omega )\propto
e^{-mt/2}t^{m(m+1)/4}W(L_{0}^{1/2}(t),L_{1}^{1/2}(t)...,L_{m}^{1/2}(t)\mid
t).
\end{equation}

Using the well known derivation properties of the Laguerre polynomials \cite%
{magnus,szego}

\begin{equation}
\frac{dL_{j}^{\alpha }\left( t\right) }{dt}=-L_{j-1}^{\alpha +1}\left(
t\right) ,  \label{derivL}
\end{equation}%
with $L_{0}^{\alpha }(t)=1$ and $L_{-n}^{\alpha }(t)=0$, we obtain
straightforwardly

\begin{equation}
W(L_{0}^{1/2}(t),L_{1}^{1/2}(t)...,L_{m}^{1/2}(t)\mid t)=\left( -1\right)
^{m(m+1)/2}
\end{equation}%
and%
\begin{equation}
W^{\left( 1,3...,2m-1\right) }(x;\omega )\propto x^{m(m+1)/2}e^{-m\omega
x^{2}/4}.
\end{equation}

Substituting in Eq(\ref{2ext}), we arrive to

\begin{equation}
V^{^{\left( N_{m}\right) }}(x;\omega )=\omega ^{2}x^{2}/4+\frac{m(m+1)}{x^{2}%
}-\left( m+1/2\right) \omega =V(x;\omega ,m-1/2),
\end{equation}%
which is an isotonic oscillator with an integer angular momentum possessing,
as expected the trivial monodromy property.

In this lowest even case, we notice that using \textit{Theorem 2} doesn't
allow us to recover the general IO potential for arbitrary values of the $%
\alpha $ parameter, which is,as shown previously (see Eq(\ref{2 step cyclic2}%
)), the most general solution potential of the dressing chain of period $2$ .

\subsubsection{Three-cyclic extensions of the HO and rational solutions of
PIV.}

We have $p=3$ and consequently we can refer to \textit{Theorem 3} to
determine all the rational solutions of the dressing chain of period $3$, ie
of the PIV equation. For the possible values of $k$ (and $j$) we have only
two possibilities: $k=1\ \left( j=1\right) $ or $k=3\ \left( j=0\right) $.

\subsubsection{k=1}

The $3$-cyclic Maya diagram with $k=1$ is a $3$-GH Maya diagram of the form

\begin{equation}
\left( \lambda \mid \mu \right) _{1}=\left( \lambda ,...,\lambda +\mu
-1\right) =H_{\lambda ,\mu }.  \label{3GH}
\end{equation}

Following Clarkson's terminology \cite{clarkson,clarkson2,clarkson3}, the $3$%
-cyclic extensions $V^{^{\left( H_{\lambda ,\mu }\right) }}(x;\omega )$ are
called the \textbf{3-generalized Hermite (3-GH)} extensions.

From the Krein-Adler theorem \cite{krein,adler2,GGM1}, we deduce immediately
that 3-GH extensions which are regular on the real line are those for which $%
\mu $ is even.

The possible $3$-cyclic chains corresponding to $H_{\lambda ,\mu }$ are
built by permutation of $\left\{ 0,\lambda ,\lambda +\mu \right\} $. For the
particular choice $\left( 0,\lambda ,\lambda +\mu \right) $, the parameters
in the dressing chain system of period $3$ (see Eq(\ref{pcyclicdress})) are

\begin{equation}
\left\{ 
\begin{array}{c}
\varepsilon _{12}=\left( -\lambda -\mu \right) \omega \\ 
\varepsilon _{23}=\mu \omega \\ 
\varepsilon _{31}-\Delta =\left( \lambda -1\right) \omega .%
\end{array}%
\right.
\end{equation}

As for the associated parameters in the PIV\ equation associated to this
chain they are given by

\begin{equation}
a=-\left( 1-\mu -2\lambda \right) ,\quad b=-2\mu ^{2},
\end{equation}%
or

\begin{equation}
a=m\in 
\mathbb{Z}
,\quad b=-2\left( 1+m-2n\right) ^{2}.
\end{equation}

The same type of result (up to redefinition of $a$ and $b$) can be obtained
for all the possible choices of $\varepsilon _{1}/\omega $ in the set $%
\left\{ 0,\lambda ,\lambda +\mu \right\} $.

For $\varepsilon _{1}/\omega =0$, the corresponding solution of PIV is
obtained as (see Eq(\ref{chv}) and Eq(\ref{WH}))

\begin{equation}
y_{0}=\sqrt{\frac{2}{\omega }}\left( w_{0}^{\left( H_{\lambda ,\mu }\right)
}-w_{0}\right) =-\frac{d}{dz}\left( \log \left( \frac{\psi _{0}^{\left(
H_{\lambda ,\mu }\right) }}{\psi _{0}}\right) \right) ,
\end{equation}%
with ($z=\sqrt{\omega /2}x$)

\begin{equation}
\frac{\psi _{0}^{\left( H_{\lambda ,\mu }\right) }}{\psi _{0}}\propto \frac{%
W^{\left( 0,H_{\lambda ,\mu }\right) }\left( x;\omega \right) }{W^{\left(
H_{\lambda ,\mu }\right) }\left( x;\omega \right) \psi _{0}\left( x;\omega
\right) }\propto \frac{\mathcal{H}^{\left( H_{\lambda -1,\mu }\right)
}\left( z\right) }{\mathcal{H}^{\left( H_{\lambda ,\mu }\right) }\left(
z\right) },
\end{equation}%
where we have used Eq(\ref{Eqext5}) and the fact that

\begin{equation}
H_{\lambda ,\mu }-1=H_{\lambda -1,\mu }.
\end{equation}

Then

\begin{equation}
y_{0}\left( z\right) =\frac{d}{dz}\log \left( \frac{\mathcal{H}^{\left(
H_{\lambda ,\mu }\right) }\left( z\right) }{\mathcal{H}^{\left( H_{\lambda
-1,\mu }\right) }\left( z\right) }\right) .
\end{equation}

The second possible choice is $\varepsilon _{1}/\omega =\lambda $ in which
case

\begin{equation}
y_{\lambda }=\sqrt{\frac{2}{\omega }}\left( w_{\lambda }^{\left( H_{\lambda
,\mu }\right) }-w_{0}\right) =-\frac{d}{dz}\left( \log \left( \frac{\psi
_{\lambda }^{\left( H_{\lambda ,\mu }\right) }}{\psi _{0}}\right) \right) ,
\end{equation}%
where, using Eq(\ref{Eqext5}) and the fact that $\left( H_{\lambda ,\mu
},\lambda \right) =\left( \lambda +1,...,\lambda +\mu -1\right) =H_{\lambda
+1,\mu -1}$

\begin{equation}
\frac{\psi _{\lambda }^{\left( H_{\lambda ,\mu }\right) }}{\psi _{0}}\propto 
\frac{W^{\left( \lambda ,H_{\lambda ,\mu }\right) }\left( x;\omega \right) }{%
W^{\left( H_{\lambda ,\mu }\right) }\left( x;\omega \right) \psi _{0}\left(
x;\omega \right) }\propto \frac{\mathcal{H}^{\left( H_{\lambda +1,\mu
-1}\right) }\left( z\right) }{\mathcal{H}^{\left( H_{\lambda ,\mu }\right)
}\left( z\right) \psi _{0}^{2}\left( x;\omega \right) }\propto e^{z^{2}}%
\frac{\mathcal{H}^{\left( H_{\lambda +1,\mu -1}\right) }\left( z\right) }{%
\mathcal{H}^{\left( H_{\lambda ,\mu }\right) }\left( z\right) }.
\end{equation}

Then

\begin{equation}
y_{\lambda }\left( z\right) =-2z+\frac{d}{dz}\log \left( \frac{\mathcal{H}%
^{\left( H_{\lambda ,\mu }\right) }\left( z\right) }{\mathcal{H}^{\left(
H_{\lambda +1,\mu -1}\right) }\left( z\right) }\right) .
\end{equation}

The last possible choice is $\varepsilon _{1}/\omega =\lambda +\mu $ giving

\begin{equation}
y_{\lambda +\mu }=\sqrt{\frac{2}{\omega }}\left( w_{\lambda +\mu }^{\left(
H_{\lambda ,\mu }\right) }-w_{0}\right) =-\frac{d}{dz}\left( \log \left( 
\frac{\psi _{\lambda +\mu }^{\left( H_{\lambda ,\mu }\right) }}{\psi _{0}}%
\right) \right) ,
\end{equation}%
with ($\left( H_{\lambda ,\mu },\lambda +\mu \right) =\left( \lambda
,...,\lambda +\mu \right) =H_{\lambda ,\mu +1}$)

\begin{equation}
\frac{\psi _{\lambda +\mu }^{\left( H_{\lambda ,\mu }\right) }}{\psi _{0}}%
\propto \frac{W^{\left( \lambda +\mu ,H_{\lambda ,\mu }\right) }\left(
x;\omega \right) }{W^{\left( H_{\lambda ,\mu }\right) }\left( x;\omega
\right) \psi _{0}\left( x;\omega \right) }\propto \frac{\mathcal{H}^{\left(
H_{\lambda ,\mu +1}\right) }\left( z\right) }{\mathcal{H}^{\left( H_{\lambda
,\mu }\right) }\left( z\right) }.
\end{equation}

Then

\begin{equation}
y_{\lambda +\mu }\left( z\right) =\frac{d}{dz}\log \left( \frac{\mathcal{H}%
^{\left( H_{\lambda ,\mu }\right) }\left( z\right) }{\mathcal{H}^{\left(
H_{\lambda ,\mu +1}\right) }\left( z\right) }\right) .
\end{equation}

We retrieve then the three usual form for the rational solutions associated
to the generalized Hermite polynomials, namely (with $k=1$, $z=t$) \cite%
{clarkson,clarkson2,clarkson3}

\begin{equation}
\left\{ 
\begin{array}{c}
y_{0}(t)=\frac{d}{dt}\log \left( \mathcal{H}^{\left( H_{\lambda ,\mu
}\right) }\left( t\right) /\mathcal{H}^{\left( H_{\lambda -1,\mu }\right)
}\left( t\right) \right) \\ 
y_{\lambda }(t)=-2t+\frac{d}{dt}\log \left( \mathcal{H}^{\left( H_{\lambda
,\mu }\right) }\left( t\right) /\mathcal{H}^{\left( H_{\lambda +1,\mu
-1}\right) }\left( t\right) \right) \\ 
y_{\lambda +\mu }(t)=\frac{d}{dt}\log \left( \mathcal{H}^{\left( H_{\lambda
,\mu }\right) }\left( t\right) /\mathcal{H}^{\left( H_{\lambda ,\mu
+1}\right) }\left( t\right) \right) ,%
\end{array}%
\right.
\end{equation}%
obtained for the following integer values of the $a$ and $b$ parameters in
Eq(\ref{PIV})

\begin{equation}
a=-\left( 1-\mu -2\lambda \right) ,\quad b=-2\mu ^{2}.
\end{equation}

\subsubsection{k=3}

Now consider the case $k=3$. The $3$-cyclic Maya diagram with $k=3$ is a $3$%
-Okamoto Maya diagram of the form

\begin{equation}
\left( \left( 1\mid \alpha _{1}\right) _{3},\left( 2\mid \alpha _{2}\right)
_{3}\right) =\left( 1,...,1+3(\alpha _{1}-1);2,...,2+3(\alpha _{2}-1\right)
)=\Omega _{\alpha _{1},\alpha _{2}}.
\end{equation}

Note that for some small values of $\alpha _{1}$ and $\alpha _{2}$, the
3-Okamoto Maya diagrams are also 3-GH Maya diagram set

\begin{equation}
\Omega _{1,1}=\left( 1,2\right) =H_{1,2};\ \Omega _{1,0}=\left( 1\right)
=H_{1,1};\ \Omega _{0,1}=\left( 2\right) =H_{2,1}.
\end{equation}

The Krein-Adler theorem implies that the 3-Okamoto extensions which are
regular on the real line are those for which $\alpha _{1}=\alpha _{2}$,
namely correspond to 3-Okamoto Maya diagrams of the form $\Omega _{\alpha
,\alpha }$.

The possible $3$-cyclic chains associated to $\Omega _{\alpha _{1},\alpha
_{2}}$ are\ built by permutation from $\left\{ 0,1+3\alpha _{1},2+3\alpha
_{2}\right\} $. For the chain $\left( 0,1+3\alpha _{1},2+3\alpha _{2}\right) 
$, the parameters of the dressing chain system of period $3$ (see Eq(\ref%
{pcyclicdress})) are

\begin{equation}
\left\{ 
\begin{array}{c}
\varepsilon _{12}=\left( -1+3(\alpha _{1}-\alpha _{2})\right) \omega \\ 
\varepsilon _{23}=\left( 2+3\alpha _{2}\right) \omega \\ 
\varepsilon _{31}-\Delta =\left( -4-3\alpha _{1}\right) \omega .%
\end{array}%
\right.
\end{equation}

The corresponding parameters in the PIV\ equation are then given by ($\omega
=1,\Delta =k=3$)

\begin{equation}
a=\alpha _{1}+\alpha _{2},\quad b=-\frac{2}{9}\left( -1+3(\alpha _{1}-\alpha
_{2})\right) ^{2},
\end{equation}%
or

\begin{equation}
a=j\in 
\mathbb{Z}
,\quad b=-2\left( 1/3-j+2\alpha _{2}\right) ^{2}.
\end{equation}

The same result (up to redefinition of the integers) can be obtained with
all the possible choices of $\varepsilon _{1}$ in the set $\left\{
0,1+3\alpha _{1},2+3\alpha _{2}\right\} $.

For $\varepsilon _{1}/\omega =0$, the corresponding solution of PIV is
obtained as (see Eq(\ref{chv}))

\begin{equation}
y_{0}=\sqrt{\frac{2}{\Delta }}\left( w_{0}^{\left( \Omega _{\alpha
_{1},\alpha _{2}}\right) }-3w_{0}\right) =-\sqrt{\frac{2}{3\omega }}\frac{d}{%
dx}\left( \log \left( \frac{\psi _{0}^{\left( \Omega _{\alpha _{1},\alpha
_{2}}\right) }}{\psi _{0}^{3}}\right) \right) ,
\end{equation}%
with ($z=\sqrt{\omega /2}x$). If we suppose $\alpha _{1},\alpha _{2}\geq 2$,
by using Eq(\ref{Eqext5}), we obtain 
\begin{equation}
\frac{\psi _{0}^{\left( \Omega _{\alpha _{1},\alpha _{2}}\right) }}{\psi
_{0}^{3}}\propto \frac{W^{\left( \Omega _{\alpha _{1}-1,\alpha _{2}-1}\oplus
3\right) }\left( x;\omega \right) }{W^{\left( \Omega _{\alpha _{1},\alpha
_{2}}\right) }\left( x;\omega \right) \psi _{0}^{3}(x;\omega )}\propto \frac{%
\mathcal{H}^{\left( \Omega _{\alpha _{1}-1,\alpha _{2}-1}\right) }\left(
z\right) }{\mathcal{H}^{\left( \Omega _{\alpha _{1},\alpha _{2}}\right)
}\left( z\right) \psi _{0}^{2}(x;\omega )}\propto e^{z^{2}}\frac{\mathcal{H}%
^{\left( \Omega _{\alpha _{1}-1,\alpha _{2}-1}\right) }\left( z\right) }{%
\mathcal{H}^{\left( \Omega _{\alpha _{1},\alpha _{2}}\right) }\left(
z\right) }.
\end{equation}
Note that $\left( 0,\Omega _{\alpha _{1},\alpha
_{2}}\right) =\left( 0,1,2\right) \cup \left( 4,...,1+3(\alpha
_{1}-1);5,...,2+3(\alpha _{2}-1\right) )=\Omega _{\alpha _{1}-1,\alpha
_{2}-1}\oplus 3$.

Then

\begin{equation}
y_{0}=-\sqrt{\frac{2\omega }{3}}x+\sqrt{\frac{2}{3\omega }}\frac{d}{dx}\log
\left( \frac{\mathcal{H}^{\left( \Omega _{\alpha _{1},\alpha _{2}}\right)
}\left( z\right) }{\mathcal{H}^{\left( \Omega _{\alpha _{1}-1,\alpha
_{2}-1}\right) }\left( z\right) }\right) ,
\end{equation}%
that is,

\begin{equation}
y_{0}(t)=-\frac{2}{3}t+\frac{d}{dt}\log \left( \frac{\mathcal{H}^{\left(
\Omega _{\alpha _{1},\alpha _{2}}\right) }\left( t/\sqrt{3}\right) }{%
\mathcal{H}^{\left( \Omega _{\alpha _{1}-1,\alpha _{2}-1}\right) }\left( t/%
\sqrt{3}\right) }\right) .
\end{equation}

The second possible choice is $\varepsilon _{1}/\omega =1+3\alpha _{1}$ in
which case

\begin{equation}
y_{1+3\alpha _{1}}(x)=\sqrt{\frac{2}{3\omega }}\left( w_{1+3\alpha
_{1}}^{\left( \Omega _{\alpha _{1},\alpha _{2}}\right) }-3w_{0}\right) =-%
\sqrt{\frac{2}{3\omega }}\frac{d}{dx}\left( \log \left( \frac{\psi
_{1+3\alpha _{1}}^{\left( \Omega _{\alpha _{1},\alpha _{2}}\right) }}{\psi
_{0}^{3}}\right) \right) ^{\prime },
\end{equation}%
with ($\left( 1+3\alpha _{1},\Omega _{\alpha _{1},\alpha _{2}}\right)
=\Omega _{\alpha _{1}+1,\alpha _{2}}$)

\begin{equation}
\frac{\psi _{1+3\alpha _{1}}^{\left( \Omega _{\alpha _{1},\alpha
_{2}}\right) }}{\psi _{0}}=\frac{W^{\left( \Omega _{\alpha _{1}+1,\alpha
_{2}}\right) }\left( x;\omega \right) }{W^{\left( \Omega _{\alpha
_{1},\alpha _{2}}\right) }\left( x;\omega \right) \psi _{0}^{3}(x;\omega )}%
\propto \frac{\mathcal{H}^{\left( \Omega _{\alpha _{1}+1,\alpha _{2}}\right)
}\left( z\right) }{\mathcal{H}^{\left( \Omega _{\alpha _{1},\alpha
_{2}}\right) }\left( z\right) \psi _{0}^{2}(x;\omega )}\propto e^{z^{2}}%
\frac{\mathcal{H}^{\left( \Omega _{\alpha _{1}+1,\alpha _{2}}\right) }\left(
z\right) }{\mathcal{H}^{\left( \Omega _{\alpha _{1},\alpha _{2}}\right)
}\left( z\right) }.
\end{equation}

Then

\begin{equation}
y_{1+3\alpha _{1}}=-\sqrt{\frac{2\omega }{3}}x+\sqrt{\frac{2}{3\omega }}%
\frac{d}{dx}\log \left( \frac{\mathcal{H}^{\left( \Omega _{\alpha
_{1},\alpha _{2}}\right) }\left( z\right) }{\mathcal{H}^{\left( \Omega
_{\alpha _{1}+1,\alpha _{2}}\right) }\left( z\right) }\right) ,
\end{equation}%
that is

\begin{equation}
y(t)=-\frac{2}{3}t+\frac{d}{dt}\log \left( \frac{\mathcal{H}^{\left( \Omega
_{\alpha _{1},\alpha _{2}}\right) }\left( t/\sqrt{3}\right) }{\mathcal{H}%
^{\left( \Omega _{\alpha _{1}+1,\alpha _{2}}\right) }\left( t/\sqrt{3}%
\right) }\right) .
\end{equation}

The last possible choice is $\varepsilon _{1}/\omega =2+3\alpha _{2}$ giving

\begin{equation}
y_{2+3\alpha _{2}}=\sqrt{\frac{2}{\Delta }}\left( w_{2+3\alpha _{2}}^{\left(
\Omega _{\alpha _{1},\alpha _{2}}\right) }-3w_{0}\right) =-\sqrt{\frac{2}{%
3\omega }}\frac{d}{dx}\left( \log \left( \frac{\psi _{2+3\alpha
_{2}}^{\left( \Omega _{\alpha _{1},\alpha _{2}}\right) }}{\psi _{0}}\right)
\right) ,
\end{equation}%
with ($\left( 2+3\alpha _{1},\Omega _{\alpha _{1},\alpha _{2}}\right)
=\Omega _{\alpha _{1},\alpha _{2}+1}$)

\begin{equation}
\frac{\psi _{2+3\alpha _{2}}^{\left( \Omega _{\alpha _{1},\alpha
_{2}}\right) }}{\psi _{0}}\propto \frac{W^{\left( \Omega _{\alpha
_{1},\alpha _{2}+1}\right) }\left( x;\omega \right) }{W^{\left( \Omega
_{\alpha _{1},\alpha _{2}}\right) }\left( x;\omega \right) \psi
_{0}^{3}(x;\omega )}\propto \frac{\mathcal{H}^{\left( \Omega _{\alpha
_{1},\alpha _{2}+1}\right) }\left( z\right) }{\mathcal{H}^{\left( \Omega
_{\alpha _{1},\alpha _{2}}\right) }\left( z\right) \psi _{0}^{2}(x;\omega )}%
\propto e^{z^{2}}\frac{\mathcal{H}^{\left( \Omega _{\alpha _{1},\alpha
_{2}+1}\right) }\left( z\right) }{\mathcal{H}^{\left( \Omega _{\alpha
_{1},\alpha _{2}}\right) }\left( z\right) }.
\end{equation}

Then

\begin{equation}
y_{2+3\alpha _{2}}=-\sqrt{\frac{2\omega }{3}}x+\sqrt{\frac{2}{3\omega }}%
\frac{d}{dx}\log \left( \frac{\mathcal{H}^{\left( \Omega _{\alpha
_{1},\alpha _{2}}\right) }\left( z\right) }{\mathcal{H}^{\left( \Omega
_{\alpha _{1},\alpha _{2}+1}\right) }\left( z\right) }\right) ,
\end{equation}%
that is

\begin{equation}
y_{2+3\alpha _{2}}\left( t\right) =-\frac{2}{3}t+\frac{d}{dt}\log \left( 
\frac{\mathcal{H}^{\left( \Omega _{\alpha _{1},\alpha _{2}}\right) }\left( t/%
\sqrt{3}\right) }{\mathcal{H}^{\left( \Omega _{\alpha _{1},\alpha
_{2}+1}\right) }\left( t/\sqrt{3}\right) }\right) .
\end{equation}

We retrieve then the three usual form for the rational solutions associated
to the generalized Hermite polynomials, namely (with $k=1$, $z=t$) \cite%
{clarkson,clarkson2,clarkson3}

\begin{equation}
\left\{ 
\begin{array}{c}
y(t)=-2t/3+\frac{d}{dt}\log \left( \mathcal{H}^{\left( \Omega _{\alpha
_{1},\alpha _{2}}\right) }\left( t/\sqrt{3}\right) /\mathcal{H}^{\left(
\Omega _{\alpha _{1}-1,\alpha _{2}-1}\right) }\left( t/\sqrt{3}\right)
\right) \\ 
y(t)=-2t/3+\frac{d}{dt}\log \left( \mathcal{H}^{\left( \Omega _{\alpha
_{1},\alpha _{2}}\right) }\left( t/\sqrt{3}\right) /\mathcal{H}^{\left(
\Omega _{\alpha _{1}+1,\alpha _{2}}\right) }\left( t/\sqrt{3}\right) \right)
\\ 
y(t)=-2t/3+\frac{d}{dt}\log \left( \mathcal{H}^{\left( \Omega _{\alpha
_{1},\alpha _{2}}\right) }\left( t/\sqrt{3}\right) /\mathcal{H}^{\left(
\Omega _{\alpha _{1},\alpha _{2}+1}\right) }\left( t/\sqrt{3}\right) \right)
,%
\end{array}%
\right.
\end{equation}%
obtained for the following integer values of the $a$ and $b$ parameters in
Eq(\ref{PIV})

\begin{equation}
a=\alpha _{1}+\alpha _{2}=j\in 
\mathbb{Z}
,\quad b=-2\left( 1/3-j+2\alpha _{2}\right) ^{2}.
\end{equation}

\subsection{5-cyclic chains and new solutions of the A$_{\text{4}}$-PIV}

We have $p=5$ and consequently we can have $k=1\ \left( j=2\right) ,\ k=3\
\left( j=1\right) $ or $k=5\ \left( j=0\right) $.

\subsubsection{k=1}

The $5$-cyclic extension with $k=1$ is associated to a $5$-GH Maya diagram
of the form

\begin{equation}
N_{m}=\left( \left( \lambda _{1}\mid \mu _{1}\right) _{1},\left( \lambda
_{2}\mid \mu _{2}\right) _{1}\right) ,\ m=\mu _{1}+\mu _{2}.
\end{equation}

When $\lambda _{1}+\mu _{1}\geq \lambda _{2}$, the block structure of $N_{m}$
is degenerate and we recover a $3$-GH Maya diagram $H_{\lambda ,\mu }$ (see
Eq(\ref{3GH})), which is $3$ but also trivially $5$-cyclic. For simplicity
we suppose in the following $\lambda _{1}+\mu _{1}<\lambda _{2}$. Explicitely

\begin{equation}
N_{m}=\left( \lambda _{1},...,\lambda _{1}+\mu _{1}-1,\lambda
_{2},...,\lambda _{2}+\mu _{2}-1\right) .
\end{equation}

The corresponding $5$-cyclic chain is built by permutation of

\begin{equation}
\left\{ \lambda _{1},\lambda _{1}+\mu _{1},\lambda _{2},\lambda _{2}+\mu
_{2},0\right\} .
\end{equation}

For the chain $\left( \lambda _{1},\lambda _{1}+\mu _{1},\lambda
_{2},\lambda _{2}+\mu _{2},0\right) $, the parameters of the dressing chain
system of period 5 (see Eq(\ref{pcyclicdress})) are

\begin{equation}
\left\{ 
\begin{array}{c}
\varepsilon _{12}=-\mu _{1}\omega \\ 
\varepsilon _{23}=\left( \lambda _{1}-\lambda _{2}+\mu _{1}\right) \omega \\ 
\varepsilon _{34}=-\mu _{2}\omega \\ 
\varepsilon _{45}=\left( \lambda _{2}+\mu _{2}\right) \omega \\ 
\varepsilon _{51}-\Delta =-\left( \lambda _{1}+1\right) \omega ,%
\end{array}%
\right.
\end{equation}%
and the solutions of this system are given by (see Eq(\ref{soldress}))

\begin{eqnarray}
w_{1}(x) &=&w_{\lambda _{1}}^{\left( \left( \lambda _{1}\mid \mu _{1}\right)
_{1},\left( \lambda _{2}\mid \mu _{2}\right) _{1}\right) }(x)  \notag \\
&=&-\omega x/2+\sqrt{\frac{\omega }{2}}\frac{d}{dz}\left( \log \left( \frac{%
\mathcal{H}^{\left( \left( \lambda _{1}\mid \mu _{1}\right) _{1},\left(
\lambda _{2}\mid \mu _{2}\right) _{1}\right) }\left( z\right) }{\mathcal{H}%
^{\left( \lambda _{1},\left( \lambda _{1}\mid \mu _{1}\right) _{1},\left(
\lambda _{2}\mid \mu _{2}\right) _{1}\right) }\left( z\right) }\right)
\right)  \notag \\
&=&-\omega x/2+\sqrt{\frac{\omega }{2}}\frac{d}{dz}\left( \log \left( \frac{%
\mathcal{H}^{\left( \left( \lambda _{1}\mid \mu _{1}\right) _{1},\left(
\lambda _{2}\mid \mu _{2}\right) _{1}\right) }\left( z\right) }{\mathcal{H}%
^{\left( \left( \lambda _{1}+1\mid \mu _{1}-1\right) _{1},\left( \lambda
_{2}\mid \mu _{2}\right) _{1}\right) }\left( z\right) }\right) \right) ,
\end{eqnarray}

\begin{eqnarray}
w_{2}(x) &=&w_{\lambda _{1}+\mu _{1}}^{\left( \lambda _{1},N_{m}\right) }(x)
\notag \\
&=&\omega x/2+\sqrt{\frac{\omega }{2}}\frac{d}{dz}\left( \log \left( \frac{%
\mathcal{H}^{\left( \left( \lambda _{1}+1\mid \mu _{1}-1\right) _{1},\left(
\lambda _{2}\mid \mu _{2}\right) _{1}\right) }\left( z\right) }{\mathcal{H}%
^{\left( \lambda _{1}+\mu _{1},\left( \lambda _{1}+1\mid \mu _{1}-1\right)
_{1},\left( \lambda _{2}\mid \mu _{2}\right) _{1}\right) }\left( z\right) }%
\right) \right)  \notag \\
&=&\omega x/2+\sqrt{\frac{\omega }{2}}\frac{d}{dz}\left( \log \left( \frac{%
\mathcal{H}^{\left( \left( \lambda _{1}+1\mid \mu _{1}-1\right) _{1},\left(
\lambda _{2}\mid \mu _{2}\right) _{1}\right) }\left( z\right) }{\mathcal{H}%
^{\left( \left( \lambda _{1}+1\mid \mu _{1}\right) _{1},\left( \lambda
_{2}\mid \mu _{2}\right) _{1}\right) }\left( z\right) }\right) \right) ,
\end{eqnarray}

\begin{eqnarray}
w_{3}(x) &=&w_{\lambda _{2}}^{\left( \left( \lambda _{1}+1\mid \mu
_{1}\right) _{1},\left( \lambda _{2}\mid \mu _{2}\right) _{1}\right) }(x) 
\notag \\
&=&-\omega x/2+\sqrt{\frac{2}{\omega }}\frac{d}{dz}\left( \log \left( \frac{%
\mathcal{H}^{\left( \left( \lambda _{1}+1\mid \mu _{1}\right) _{1},\left(
\lambda _{2}\mid \mu _{2}\right) _{1}\right) }\left( z\right) }{\mathcal{H}%
^{\left( \lambda _{2},\left( \lambda _{1}+1\mid \mu _{1}\right) _{1},\left(
\lambda _{2}\mid \mu _{2}\right) _{1}\right) }\left( z\right) }\right)
\right)  \notag \\
&=&-\omega x/2+\sqrt{\frac{2}{\omega }}\frac{d}{dz}\left( \log \left( \frac{%
\mathcal{H}^{\left( \left( \lambda _{1}+1\mid \mu _{1}-1\right) _{1},\left(
\lambda _{2}\mid \mu _{2}\right) _{1}\right) }\left( z\right) }{\mathcal{H}%
^{\left( \left( \lambda _{1}+1\mid \mu _{1}\right) _{1},\left( \lambda
_{2}+1\mid \mu _{2}-1\right) _{1}\right) }\left( z\right) }\right) \right) ,
\end{eqnarray}

\begin{eqnarray}
w_{4}(x) &=&w_{\lambda _{2}+\mu _{2}}^{\left( \left( \lambda _{1}+1\mid \mu
_{1}\right) _{1},\left( \lambda _{2}+1\mid \mu _{2}-1\right) _{1}\right) }(x)
\notag \\
&=&\omega x/2+\sqrt{\frac{\omega }{2}}\frac{d}{dz}\left( \log \left( \frac{%
\mathcal{H}^{\left( \left( \lambda _{1}+1\mid \mu _{1}\right) _{1},\left(
\lambda _{2}+1\mid \mu _{2}-1\right) _{1}\right) }\left( z\right) }{\mathcal{%
H}^{\left( \lambda _{2}+\mu _{2},\left( \lambda _{1}+1\mid \mu _{1}\right)
_{1},\left( \lambda _{2}+1\mid \mu _{2}-1\right) _{1}\right) }\left(
z\right) }\right) \right)  \notag \\
&=&\omega x/2+\sqrt{\frac{\omega }{2}}\frac{d}{dz}\left( \log \left( \frac{%
\mathcal{H}^{\left( \left( \lambda _{1}+1\mid \mu _{1}\right) _{1},\left(
\lambda _{2}+1\mid \mu _{2}-1\right) _{1}\right) }\left( z\right) }{\mathcal{%
H}^{\left( \left( \lambda _{1}+1\mid \mu _{1}\right) _{1},\left( \lambda
_{2}+1\mid \mu _{2}\right) _{1}\right) }\left( z\right) }\right) \right) ,
\end{eqnarray}%
and (see Eq(\ref{Eqext5}))

\begin{eqnarray}
w_{5}(x) &=&w_{0}^{\left( \left( \lambda _{1}+1\mid \mu _{1}\right)
_{1},\left( \lambda _{2}+1\mid \mu _{2}\right) _{1}\right) }(x)  \notag \\
&=&\omega x/2+\sqrt{\frac{\omega }{2}}\frac{d}{dz}\left( \log \left( \frac{%
\mathcal{H}^{\left( \left( \lambda _{1}+1\mid \mu _{1}\right) _{1},\left(
\lambda _{2}+1\mid \mu _{2}\right) _{1}\right) }\left( z\right) }{\mathcal{H}%
^{\left( 0,\left( \lambda _{1}+1\mid \mu _{1}\right) _{1},\left( \lambda
_{2}+1\mid \mu _{2}\right) _{1}\right) }\left( z\right) }\right) \right) 
\notag \\
&=&\omega x/2+\sqrt{\frac{\omega }{2}}\frac{d}{dz}\left( \log \left( \frac{%
\mathcal{H}^{\left( \left( \lambda _{1}+1\mid \mu _{1}\right) _{1},\left(
\lambda _{2}+1\mid \mu _{2}\right) _{1}\right) }\left( z\right) }{\mathcal{H}%
^{\left( \left( \lambda _{1}\mid \mu _{1}\right) _{1},\left( \lambda
_{2}\mid \mu _{2}\right) _{1}\right) }\left( z\right) }\right) \right) .
\end{eqnarray}

\subsubsection{k=5}

The $5$-cyclic extension with $k=5$ is associated to a $5$-Okamoto Maya
diagram of the form

\begin{equation}
N_{m}=\left( \left( 1\mid \alpha _{1}\right) _{5},\left( 2\mid \alpha
_{2}\right) _{5},\left( 3\mid \alpha _{3}\right) _{5},\left( 4\mid \alpha
_{4}\right) _{5}\right) =\Omega _{\alpha _{1},\alpha _{2},\alpha _{3},\alpha
_{4}},\ m=\alpha _{1}+\alpha _{2}+\alpha _{3}+\alpha _{4}.
\end{equation}

The corresponding $5$-cyclic chain are obtained by permutation from

\begin{equation}
\left\{ 1+5\alpha _{1},2+5\alpha _{2},3+5\alpha _{3},4+5\alpha
_{4},0\right\} .
\end{equation}

The parameters in the dressing chain system of period 5 (see Eq(\ref%
{pcyclicdress})) for the chain $\left( 1+5\alpha _{1},2+5\alpha
_{2},3+5\alpha _{3},4+5\alpha _{4},0\right) $ are given by

\begin{equation}
\left\{ 
\begin{array}{c}
\varepsilon _{12}=\left( -1-5\left( \alpha _{2}-\alpha _{1}\right) \right)
\omega \\ 
\varepsilon _{23}=\left( -1-5\left( \alpha _{3}-\alpha _{2}\right) \right)
\omega \\ 
\varepsilon _{34}=\left( -1-5\left( \alpha _{4}-\alpha _{1}\right) \right)
\omega \\ 
\varepsilon _{45}=\left( 5\alpha _{4}+4\right) \omega \\ 
\varepsilon _{51}-\Delta =\left( -6-5\alpha _{1}\right) \omega .%
\end{array}%
\right.
\end{equation}%
and the corresponding solutions of the dressing chain system are (see Eq(\ref%
{soldress}))

\begin{eqnarray}
w_{2}(x) &=&w_{1+5\alpha _{1}}^{\left( \left( 1\mid \alpha _{1}\right)
_{5},\left( 2\mid \alpha _{2}\right) _{5},\left( 3\mid \alpha _{3}\right)
_{5},\left( 4\mid \alpha _{4}\right) _{5}\right) }(x)  \notag \\
&=&\omega x/2+\sqrt{\frac{\omega }{2}}\frac{d}{dz}\left( \log \left( \frac{%
\mathcal{H}^{\left( \left( 1\mid \alpha _{1}\right) _{5},\left( 2\mid \alpha
_{2}\right) _{5},\left( 3\mid \alpha _{3}\right) _{5},\left( 4\mid \alpha
_{4}\right) _{5}\right) }\left( z\right) }{\mathcal{H}^{\left( 1+5\alpha
_{1},\left( 1\mid \alpha _{1}\right) _{5},\left( 2\mid \alpha _{2}\right)
_{5},\left( 3\mid \alpha _{3}\right) _{5},\left( 4\mid \alpha _{4}\right)
_{5}\right) }\left( z\right) }\right) \right)  \notag \\
&=&\omega x/2+\sqrt{\frac{\omega }{2}}\frac{d}{dz}\left( \log \left( \frac{%
\mathcal{H}^{\left( \left( 1\mid \alpha _{1}\right) _{5},\left( 2\mid \alpha
_{2}\right) _{5},\left( 3\mid \alpha _{3}\right) _{5},\left( 4\mid \alpha
_{4}\right) _{5}\right) }\left( z\right) }{\mathcal{H}^{\left( \left( 1\mid
\alpha _{1}+1\right) _{5},\left( 2\mid \alpha _{2}\right) _{5},\left( 3\mid
\alpha _{3}\right) _{5},\left( 4\mid \alpha _{4}\right) _{5}\right) }\left(
z\right) }\right) \right) ,
\end{eqnarray}

\begin{eqnarray}
w_{2}(x) &=&w_{2+5\alpha _{2}}^{\left( \left( 1\mid \alpha _{1}+1\right)
_{5},\left( 2\mid \alpha _{2}\right) _{5},\left( 3\mid \alpha _{3}\right)
_{5},\left( 4\mid \alpha _{4}\right) _{5}\right) }(x)  \notag \\
&=&\omega x/2+\sqrt{\frac{\omega }{2}}\frac{d}{dz}\left( \log \left( \frac{%
\mathcal{H}^{\left( \left( 1\mid \alpha _{1}+1\right) _{5},\left( 2\mid
\alpha _{2}\right) _{5},\left( 3\mid \alpha _{3}\right) _{5},\left( 4\mid
\alpha _{4}\right) _{5}\right) }\left( z\right) }{\mathcal{H}^{\left(
2+5\alpha _{2},\left( 1\mid \alpha _{1}+1\right) _{5},\left( 2\mid \alpha
_{2}\right) _{5},\left( 3\mid \alpha _{3}\right) _{5},\left( 4\mid \alpha
_{4}\right) _{5}\right) }\left( z\right) }\right) \right)  \notag \\
&=&\omega x/2+\sqrt{\frac{\omega }{2}}\frac{d}{dz}\left( \log \left( \frac{%
\mathcal{H}^{\left( \left( 1\mid \alpha _{1}+1\right) _{5},\left( 2\mid
\alpha _{2}\right) _{5},\left( 3\mid \alpha _{3}\right) _{5},\left( 4\mid
\alpha _{4}\right) _{5}\right) }\left( z\right) }{\mathcal{H}^{\left( \left(
1\mid \alpha _{1}+1\right) _{5},\left( 2\mid \alpha _{2}+1\right)
_{5},\left( 3\mid \alpha _{3}\right) _{5},\left( 4\mid \alpha _{4}\right)
_{5}\right) }\left( z\right) }\right) \right) ,
\end{eqnarray}

\begin{eqnarray}
w_{3}(x) &=&w_{3+5\alpha _{3}}^{\left( \left( 1\mid \alpha _{1}+1\right)
_{5},\left( 2\mid \alpha _{2}+1\right) _{5},\left( 3\mid \alpha _{3}\right)
_{5},\left( 4\mid \alpha _{4}\right) _{5}\right) }(x)  \notag \\
&=&\omega x/2+\sqrt{\frac{\omega }{2}}\frac{d}{dz}\left( \log \left( \frac{%
\mathcal{H}^{\left( \left( 1\mid \alpha _{1}+1\right) _{5},\left( 2\mid
\alpha _{2}+1\right) _{5},\left( 3\mid \alpha _{3}\right) _{5},\left( 4\mid
\alpha _{4}\right) _{5}\right) }\left( z\right) }{\mathcal{H}^{\left(
3+5\alpha _{3},\left( 1\mid \alpha _{1}+1\right) _{5},\left( 2\mid \alpha
_{2}+1\right) _{5},\left( 3\mid \alpha _{3}\right) _{5},\left( 4\mid \alpha
_{4}\right) _{5}\right) }\left( z\right) }\right) \right)  \notag \\
&=&\omega x/2+\sqrt{\frac{\omega }{2}}\frac{d}{dz}\left( \log \left( \frac{%
\mathcal{H}^{\left( \left( 1\mid \alpha _{1}+1\right) _{5},\left( 2\mid
\alpha _{2}+1\right) _{5},\left( 3\mid \alpha _{3}\right) _{5},\left( 4\mid
\alpha _{4}\right) _{5}\right) }\left( z\right) }{\mathcal{H}^{\left( \left(
1\mid \alpha _{1}+1\right) _{5},\left( 2\mid \alpha _{2}+1\right)
_{5},\left( 3\mid \alpha _{3}+1\right) _{5},\left( 4\mid \alpha _{4}\right)
_{5}\right) }\left( z\right) }\right) \right) ,
\end{eqnarray}

\begin{eqnarray}
w_{4}(x) &=&w_{4+5\alpha _{4}}^{\left( \left( 1\mid \alpha _{1}+1\right)
_{5},\left( 2\mid \alpha _{2}+1\right) _{5},\left( 3\mid \alpha
_{3}+1\right) _{5},\left( 4\mid \alpha _{4}\right) _{5}\right) }(x)  \notag
\\
&=&\omega x/2+\sqrt{\frac{\omega }{2}}\frac{d}{dz}\left( \log \left( \frac{%
\mathcal{H}^{\left( \left( 1\mid \alpha _{1}+1\right) _{5},\left( 2\mid
\alpha _{2}+1\right) _{5},\left( 3\mid \alpha _{3}+1\right) _{5},\left(
4\mid \alpha _{4}\right) _{5}\right) }\left( z\right) }{\mathcal{H}^{\left(
4+5\alpha _{4},\left( 1\mid \alpha _{1}+1\right) _{5},\left( 2\mid \alpha
_{2}+1\right) _{5},\left( 3\mid \alpha _{3}+1\right) _{5},\left( 4\mid
\alpha _{4}\right) _{5}\right) }\left( z\right) }\right) \right)  \notag \\
&=&\omega x/2+\sqrt{\frac{\omega }{2}}\frac{d}{dz}\left( \log \left( \frac{%
\mathcal{H}^{\left( \left( 1\mid \alpha _{1}+1\right) _{5},\left( 2\mid
\alpha _{2}+1\right) _{5},\left( 3\mid \alpha _{3}+1\right) _{5},\left(
4\mid \alpha _{4}\right) _{5}\right) }\left( z\right) }{\mathcal{H}^{\left(
\left( 1\mid \alpha _{1}+1\right) _{5},\left( 2\mid \alpha _{2}+1\right)
_{5},\left( 3\mid \alpha _{3}+1\right) _{5},\left( 4\mid \alpha
_{4}+1\right) _{5}\right) }\left( z\right) }\right) \right) ,
\end{eqnarray}%
and (see Eq(\ref{Eqext5}))

\begin{eqnarray}
w_{5}(x) &=&w_{0}^{\left( \left( 1\mid \alpha _{1}+1\right) _{5},\left(
2\mid \alpha _{2}+1\right) _{5},\left( 3\mid \alpha _{3}+1\right)
_{5},\left( 4\mid \alpha _{4}+1\right) _{5}\right) }(x)  \notag \\
&=&\omega x/2+\sqrt{\frac{\omega }{2}}\frac{d}{dz}\left( \log \left( \frac{%
\mathcal{H}^{\left( \left( 1\mid \alpha _{1}+1\right) _{5},\left( 2\mid
\alpha _{2}+1\right) _{5},\left( 3\mid \alpha _{3}+1\right) _{5},\left(
4\mid \alpha _{4}+1\right) _{5}\right) }\left( z\right) }{\mathcal{H}%
^{\left( 0,\left( 1\mid \alpha _{1}+1\right) _{5},\left( 2\mid \alpha
_{2}+1\right) _{5},\left( 3\mid \alpha _{3}+1\right) _{5},\left( 4\mid
\alpha _{4}+1\right) _{5}\right) }\left( z\right) }\right) \right)  \notag \\
&=&\omega x/2+\sqrt{\frac{\omega }{2}}\frac{d}{dz}\left( \log \left( \frac{%
\mathcal{H}^{\left( \left( 1\mid \alpha _{1}+1\right) _{5},\left( 2\mid
\alpha _{2}+1\right) _{5},\left( 3\mid \alpha _{3}+1\right) _{5},\left(
4\mid \alpha _{4}+1\right) _{5}\right) }\left( z\right) }{\mathcal{H}%
^{\left( \left( 1\mid \alpha _{1}\right) _{5},\left( 2\mid \alpha
_{2}\right) _{5},\left( 3\mid \alpha _{3}\right) _{5},\left( 4\mid \alpha
_{4}\right) _{5}\right) }\left( z\right) }\right) \right) ,
\end{eqnarray}%
where we have used

\begin{eqnarray}
&&\left( 0,\left( 1\mid \alpha _{1}+1\right) _{5},\left( 2\mid \alpha
_{2}+1\right) _{5},\left( 3\mid \alpha _{3}+1\right) _{5},\left( 4\mid
\alpha _{4}+1\right) _{5}\right)  \notag \\
&=&\left( 0,1,2,3,4\right) \cup \left( \left( 6\mid \alpha _{1}+1\right)
_{5},\left( 7\mid \alpha _{2}+1\right) _{5},\left( 8\mid \alpha
_{3}+1\right) _{5},\left( 9\mid \alpha _{4}+1\right) _{5}\right)
=N_{m}\oplus 5.
\end{eqnarray}

\subsubsection{k=3}

The $5$-cyclic Maya diagram with $k=3$ is of the form

\begin{equation}
N_{m}=\left( \left( 1\mid \alpha _{1}\right) _{3},\left( 2\mid \alpha
_{2}\right) _{3},\left( \lambda _{1}\mid \mu _{1}\right) _{3}\right) ,\
m=\alpha _{1}+\alpha _{2}+\mu _{1}.
\end{equation}%
with the convention that a two times repeated index is suppressed from the
list.

The corresponding $5$-cyclic chain is obtained by permutations from

\begin{equation}
\left\{ 1+3\alpha _{1},2+3\alpha _{2},\lambda _{1},\lambda _{1}+\mu
_{1},0\right\}
\end{equation}%
and for the chain $\left( 1+3\alpha _{1},2+3\alpha _{2},\lambda _{1},\lambda
_{1}+\mu _{1},0\right) $ the parameters in the $5$-cyclic dressing chain
system (see Eq(\ref{pcyclicdress})) are, with the order above

\begin{equation}
\left\{ 
\begin{array}{c}
\varepsilon _{12}=\left( -1-3\left( \alpha _{2}-\alpha _{1}\right) \right)
\omega \\ 
\varepsilon _{23}=\left( 2+3\alpha _{2}-\lambda _{1}\right) \omega \\ 
\varepsilon _{34}=-\mu _{1}\omega \\ 
\varepsilon _{45}=\left( \lambda _{1}+\mu _{1}-3\right) \omega \\ 
\varepsilon _{51}-\Delta =\left( -4-3\alpha _{1}\right) \omega .%
\end{array}%
\right.
\end{equation}

The solutions of the dressing chain system are given by (see Eq(\ref%
{soldress}))

\begin{eqnarray}
w_{1}(x) &=&w_{1+3\alpha _{1}}^{\left( \left( 1\mid \alpha _{1}\right)
_{3},\left( 2\mid \alpha _{2}\right) _{3},\left( \lambda _{1}\mid \mu
_{1}\right) _{3}\right) }(x)  \notag \\
&=&\omega x/2+\sqrt{\frac{\omega }{2}}\frac{d}{dz}\left( \log \left( \frac{%
\mathcal{H}^{\left( \left( 1\mid \alpha _{1}\right) _{3},\left( 2\mid \alpha
_{2}\right) _{3},\left( \lambda _{1}\mid \mu _{1}\right) _{3}\right) }\left(
z\right) }{\mathcal{H}^{\left( 1+3\alpha _{1},\left( 1\mid \alpha
_{1}\right) _{3},\left( 2\mid \alpha _{2}\right) _{3},\left( \lambda
_{1}\mid \mu _{1}\right) _{3}\right) }\left( z\right) }\right) \right) 
\notag \\
&=&\omega x/2+\sqrt{\frac{\omega }{2}}\frac{d}{dz}\left( \log \left( \frac{%
\mathcal{H}^{\left( \left( 1\mid \alpha _{1}\right) _{5},\left( 2\mid \alpha
_{2}\right) _{5},\left( 3\mid \alpha _{3}\right) _{5},\left( 4\mid \alpha
_{4}\right) _{5}\right) }\left( z\right) }{\mathcal{H}^{\left( \left( 1\mid
\alpha _{1}+1\right) _{3},\left( 2\mid \alpha _{2}\right) _{3},\left(
\lambda _{1}\mid \mu _{1}\right) _{3}\right) }\left( z\right) }\right)
\right) ,
\end{eqnarray}

\begin{eqnarray}
w_{2}(x) &=&w_{2+3\alpha _{2}}^{\left( \left( 1\mid \alpha _{1}+1\right)
_{3},\left( 2\mid \alpha _{2}\right) _{3},\left( \lambda _{1}\mid \mu
_{1}\right) _{3}\right) }(x)  \notag \\
&=&\omega x/2+\sqrt{\frac{\omega }{2}}\frac{d}{dz}\left( \log \left( \frac{%
\mathcal{H}^{\left( \left( 1\mid \alpha _{1}+1\right) _{3},\left( 2\mid
\alpha _{2}\right) _{3},\left( \lambda _{1}\mid \mu _{1}\right) _{3}\right)
}\left( z\right) }{\mathcal{H}^{\left( 2+3\alpha _{2},\left( 1\mid \alpha
_{1}+1\right) _{3},\left( 2\mid \alpha _{2}\right) _{3},\left( \lambda
_{1}\mid \mu _{1}\right) _{3}\right) }\left( z\right) }\right) \right) 
\notag \\
&=&\omega x/2+\sqrt{\frac{\omega }{2}}\frac{d}{dz}\left( \log \left( \frac{%
\mathcal{H}^{\left( \left( 1\mid \alpha _{1}+1\right) _{3},\left( 2\mid
\alpha _{2}\right) _{3},\left( \lambda _{1}\mid \mu _{1}\right) _{3}\right)
}\left( z\right) }{\mathcal{H}^{\left( \left( 1\mid \alpha _{1}+1\right)
_{3},\left( 2\mid \alpha _{2}+1\right) _{3},\left( \lambda _{1}\mid \mu
_{1}\right) _{3}\right) }\left( z\right) }\right) \right) ,
\end{eqnarray}

\begin{eqnarray}
w_{3}(x) &=&w_{\lambda _{1}}^{\left( \left( 1\mid \alpha _{1}+1\right)
_{3},\left( 2\mid \alpha _{2}+1\right) _{3},\left( \lambda _{1}\mid \mu
_{1}\right) _{3}\right) }(x)  \notag \\
&=&-\omega x/2+\sqrt{\frac{\omega }{2}}\frac{d}{dz}\left( \log \left( \frac{%
\mathcal{H}^{\left( \left( 1\mid \alpha _{1}+1\right) _{3},\left( 2\mid
\alpha _{2}+1\right) _{3},\left( \lambda _{1}\mid \mu _{1}\right)
_{3}\right) }\left( z\right) }{\mathcal{H}^{\left( \lambda _{1},\left( 1\mid
\alpha _{1}+1\right) _{3},\left( 2\mid \alpha _{2}+1\right) _{3},\left(
\lambda _{1}\mid \mu _{1}\right) _{3}\right) }\left( z\right) }\right)
\right)  \notag \\
&=&-\omega x/2+\sqrt{\frac{\omega }{2}}\frac{d}{dz}\left( \log \left( \frac{%
\mathcal{H}^{\left( \left( 1\mid \alpha _{1}+1\right) _{3},\left( 2\mid
\alpha _{2}+1\right) _{3},\left( \lambda _{1}\mid \mu _{1}\right)
_{3}\right) }\left( z\right) }{\mathcal{H}^{\left( \left( 1\mid \alpha
_{1}+1\right) _{3},\left( 2\mid \alpha _{2}+1\right) _{3},\left( \lambda
_{1}+3\mid \mu _{1}-1\right) _{3}\right) }\left( z\right) }\right) \right) ,
\end{eqnarray}

\begin{eqnarray}
w_{4}(x) &=&w_{\lambda _{1}+\mu _{1}}^{\left( \left( 1\mid \alpha
_{1}+1\right) _{3},\left( 2\mid \alpha _{2}+1\right) _{3},\left( \lambda
_{1}+3\mid \mu _{1}-1\right) _{3}\right) }(x)  \notag \\
&=&\omega x/2+\sqrt{\frac{\omega }{2}}\frac{d}{dz}\left( \log \left( \frac{%
\mathcal{H}^{\left( \left( 1\mid \alpha _{1}+1\right) _{3},\left( 2\mid
\alpha _{2}+1\right) _{3},\left( \lambda _{1}+3\mid \mu _{1}-1\right)
_{3}\right) }\left( z\right) }{\mathcal{H}^{\left( \lambda _{1}+\mu
_{1},\left( 1\mid \alpha _{1}+1\right) _{3},\left( 2\mid \alpha
_{2}+1\right) _{3},\left( \lambda _{1}+3\mid \mu _{1}-1\right) _{3}\right)
}\left( z\right) }\right) \right)  \notag \\
&=&\omega x/2+\sqrt{\frac{\omega }{2}}\frac{d}{dz}\left( \log \left( \frac{%
\mathcal{H}^{\left( \left( 1\mid \alpha _{1}+1\right) _{3},\left( 2\mid
\alpha _{2}+1\right) _{3},\left( \lambda _{1}+3\mid \mu _{1}-1\right)
_{3}\right) }\left( z\right) }{\mathcal{H}^{\left( \left( 1\mid \alpha
_{1}+1\right) _{3},\left( 2\mid \alpha _{2}+1\right) _{3},\left( \lambda
_{1}+3\mid \mu _{1}\right) _{3}\right) }\left( z\right) }\right) \right) ,
\end{eqnarray}%
and (see Eq(\ref{Eqext5}))

\begin{eqnarray}
w_{5}(x) &=&w_{0}^{\left( \left( 1\mid \alpha _{1}+1\right) _{3},\left(
2\mid \alpha _{2}+1\right) _{3},\left( \lambda _{1}+3\mid \mu _{1}\right)
_{3}\right) }(x)  \notag \\
&=&\omega x/2+\sqrt{\frac{\omega }{2}}\frac{d}{dz}\left( \log \left( \frac{%
\mathcal{H}^{\left( \left( 1\mid \alpha _{1}+1\right) _{3},\left( 2\mid
\alpha _{2}+1\right) _{3},\left( \lambda _{1}+3\mid \mu _{1}\right)
_{3}\right) }\left( z\right) }{\mathcal{H}^{\left( 0,\left( 1\mid \alpha
_{1}+1\right) _{3},\left( 2\mid \alpha _{2}+1\right) _{3},\left( \lambda
_{1}+3\mid \mu _{1}\right) _{3}\right) }\left( z\right) }\right) \right) 
\notag \\
&=&\omega x/2+\sqrt{\frac{\omega }{2}}\frac{d}{dz}\left( \log \left( \frac{%
\mathcal{H}^{\left( \left( 1\mid \alpha _{1}+1\right) _{3},\left( 2\mid
\alpha _{2}+1\right) _{3},\left( \lambda _{1}+3\mid \mu _{1}\right)
_{3}\right) }\left( z\right) }{\mathcal{H}^{\left( \left( 1\mid \alpha
_{1}\right) _{3},\left( 2\mid \alpha _{2}\right) _{3},\left( \lambda
_{1}\mid \mu _{1}\right) _{3}\right) }\left( z\right) }\right) \right) ,
\end{eqnarray}%
where we have used:

\begin{equation}
\begin{array}{c}
\left( 0,\left( 1\mid \alpha _{1}+1\right) _{3},\left( 2\mid \alpha
_{2}+1\right) _{3},\left( \lambda _{1}+3\mid \mu _{1}\right) _{3}\right) \\ 
=\left( 0,1,2\right) \cup \left( \left( 4\mid \alpha _{1}+1\right)
_{3},\left( 5\mid \alpha _{2}+1\right) _{3},\left( \lambda _{1}+3\mid \mu
_{1}\right) _{3}\right) =N_{m}\oplus 3.%
\end{array}%
\end{equation}

\section{Rational extensions of the IO and rational solutions of the
dressing chains of even periodicity}

\subsection{Rational extensions of the IO}

The IO potential (with zero ground level $E_{0}=0$)) is defined on the
positive half line $\left] 0,+\infty \right[ $ by

\begin{equation}
V\left( x;\omega ,\alpha \right) =\frac{\omega ^{2}}{4}x^{2}+\frac{\left(
\alpha +1/2\right) (\alpha -1/2)}{x^{2}}-\omega \left( \alpha +1\right)
,\quad \left\vert \alpha \right\vert >1/2.  \label{OI}
\end{equation}

If we add Dirichlet boundary conditions at $0$ and infinity and if we
suppose $\alpha >1/2$, it has the following spectrum ($z=\omega x^{2}/2$)%
\begin{equation}
\left\{ 
\begin{array}{c}
E_{n}\left( \omega \right) =2n\omega \\ 
\psi _{n\otimes \varnothing }\left( x;\omega ,\alpha \right) =\psi
_{0\otimes \varnothing }\left( x;\omega ,\alpha \right) \mathit{L}%
_{n}^{\alpha }\left( z\right)%
\end{array}%
\right. ,\quad n\geq 0,  \label{spec OI}
\end{equation}%
with $\psi _{0\otimes \varnothing }\left( x;\omega ,\alpha \right)
=z^{\left( \alpha +1/2\right) /2}e^{-z/2}$. The interest of this notation
for the spectral indices ($n\otimes \varnothing $ rather than $n$) will
become clear later.

$V\left( x;\omega ,\alpha \right) $ is translationally shape invariant with (%
$\alpha _{n}=\alpha +n$)

\begin{equation}
V^{\left( 0\otimes \varnothing \right) }\left( x;\omega ,\alpha \right)
=V\left( x;\omega ,\alpha _{1}\right) +2\omega .  \label{SI IO}
\end{equation}

It possesses also three discrete parametric symmetries:

*The $\Gamma _{1}$\textbf{\ symmetry, }$\left( \omega ,\alpha \right) 
\overset{\Gamma _{1}}{\rightarrow }\left( -\omega ,\alpha \right) $, which
acts as

\begin{equation}
V(x;\omega ,\alpha )\overset{\Gamma _{1}}{\rightarrow }V(x;-\omega ,\alpha
)=V(x;\omega ,\alpha )+2\omega \left( \alpha +1\right) ,
\end{equation}%
and generates the \textbf{conjugate shadow spectrum} of $V\left( x;\omega
,\alpha \right) $ :

\begin{equation}
\left\{ 
\begin{array}{c}
E_{n}\left( \omega \right) \\ 
\psi _{n\otimes \varnothing }(x;\omega ,\alpha )%
\end{array}%
\right. \overset{\Gamma _{1}}{\rightarrow }\left\{ 
\begin{array}{c}
E_{\left( -n-1\right) -\alpha }\left( \omega \right) =-2\left( n+1+\alpha
\right) \omega <0 \\ 
\psi _{\varnothing \otimes \left( -n-1\right) }(x;\omega ,\alpha )=z^{\left(
\alpha +1/2\right) /2}e^{z/2}\mathit{L}_{n}^{\alpha }\left( -z\right)%
\end{array}%
\right. ,\quad n\geq 0.  \label{conjshadOI}
\end{equation}

*The $\Gamma _{2}$\textbf{\ symmetry, }$\left( \omega ,\alpha \right) 
\overset{\Gamma _{2}}{\rightarrow }\left( \omega ,-\alpha \right) $, which
acts as

\begin{equation}
V(x;\omega ,\alpha )\overset{\Gamma _{2}}{\rightarrow }V(x;\omega ,\alpha
)+2\omega \alpha ,
\end{equation}%
and generates the \textbf{shadow spectrum} of $V\left( x;\omega ,\alpha
\right) :$

\begin{equation}
\left\{ 
\begin{array}{c}
E_{n}\left( \omega \right) \\ 
\psi _{n\otimes \varnothing }(x;\omega ,\alpha )%
\end{array}%
\right. \overset{\Gamma _{2}}{\rightarrow }\left\{ 
\begin{array}{c}
E_{n-\alpha }\left( \omega \right) =2\left( n-\alpha \right) \omega \\ 
\psi _{\varnothing \otimes n}(x;\omega ,\alpha )=z^{\left( -\alpha
+1/2\right) /2}e^{-z/2}\mathit{L}_{n}^{-\alpha }\left( z\right)%
\end{array}%
\right. ,\quad n\geq 0.  \label{shadOI}
\end{equation}

Note that

\begin{equation}
\psi _{\varnothing \otimes 0}(x;\omega ,\alpha )=\psi _{0}(x;\omega ,\alpha
)z^{-\alpha }
\end{equation}

* The $\Gamma _{3}=\Gamma _{1}\circ \Gamma _{2}$\textbf{\ symmetry,} $\left(
\omega ,\alpha \right) \overset{\Gamma _{3}}{\rightarrow }\left( -\omega
,-\alpha \right) $, which acts as

\begin{equation}
V(x;\omega ,\alpha )\overset{\Gamma _{3}}{\rightarrow }V(x;-\omega ,-\alpha
)=V(x;\omega ,\alpha )+2\omega ,
\end{equation}%
and generates the \textbf{conjugate spectrum} of $V\left( x;\omega ,\alpha
\right) $ :

\begin{equation}
\left\{ 
\begin{array}{c}
E_{n}\left( \omega \right) \\ 
\psi _{n\otimes \varnothing }(x;\omega ,\alpha )%
\end{array}%
\right. \overset{\Gamma _{3}}{\rightarrow }\left\{ 
\begin{array}{c}
E_{-n-1}\left( \omega \right) =-2\left( n+1\right) \omega <0 \\ 
\psi _{\left( -n-1\right) \otimes \varnothing }(x;\omega ,\alpha )=z^{\left(
-\alpha +1/2\right) /2}e^{z/2}\mathit{L}_{n}^{-\alpha }\left( -z\right)%
\end{array}%
\right. ,\quad n\geq 0.  \label{conjOi}
\end{equation}

The union of the spectrum and the conjugate spectrum forms the \textbf{%
extended spectrum} and the union of the shadow and conjugate shadow spectra
forms the \textbf{extended shadow spectrum}. All together they contain all
the quasi-polynomial eigenfunctions of the IO. To avoid the specific case
where the extended and extended shadow spectra merge, we restrict the values
of $\alpha $ to be non-integer:

\begin{equation}
\alpha \notin 
\mathbb{N}
.  \label{alphaconst}
\end{equation}

The rational extensions of the IO, which\ are obtained via chains of\ DT
associated to seed functions of this type, can then be indexed by a couple
of Maya diagrams, that we call a \textbf{universal character (UC) }\cite%
{koike,tsuda,tsuda2,tsuda3}

\begin{equation}
N_{m}\otimes L_{r}=\left( n_{1},...,n_{m}\right) \otimes \left(
l_{1},...,l_{r}\right) ,  \label{bi-tuple}
\end{equation}%
$N_{m}=\left( n_{1},...,n_{m}\right) $ containing the spectral indices of
the seed functions belonging to the extended spectrum and $L_{r}=\left(
l_{1},...,l_{r}\right) $ those belonging to the extended shadow spectrum.
The UC is said to be \textbf{canonical} if $N_{m}$\ and $L_{r}$ are
canonical Maya diagrams. If $N_{m}$\ and $L_{r}$ are respectively $p_{1}$%
-cyclic with translation of $k_{1}>0$ and $p_{2}$-cyclic with translation of 
$k_{2}>0$ then we say that is the UC $N_{m}\otimes L_{r}$ is $p$\textbf{%
-cyclic, }$p=p_{1}+p_{2}$.\textbf{\ }For reasons which will become clear
later, we call the quantity $k=k_{1}-k_{2}$, the \textbf{balanced
translation amplitude} of $N_{m}\otimes L_{r}$.

We have proven in \cite{GGM}, that if we have two \textbf{equivalent UC}%
\begin{equation}
N_{m}\otimes L_{r}\approx N_{m^{\prime }}^{\prime }\otimes L_{r^{\prime
}}^{\prime },  \label{eqUC}
\end{equation}%
that is, if 
\begin{equation}
N_{m}\approx N_{m^{\prime }}^{\prime }\text{ and }L_{r}\approx L_{r^{\prime
}}^{\prime },
\end{equation}
then

\begin{equation}
V^{N_{m}\otimes L_{r}}(x;\omega ,\alpha )\ =V^{N_{m^{\prime }}^{\prime
}\otimes L_{r^{\prime }}^{\prime }}(x;\omega ,\alpha _{s})+2q\omega ,\
q,s\in 
\mathbb{Z}
.  \label{EqextIO1}
\end{equation}

In particular, if $N_{m}$ and $L_{r}$ are canonical and $k_{1},k_{2}>0$

\begin{equation}
V^{\left( N_{m}\oplus k_{1}\right) \otimes \left( L_{r}\oplus k_{2}\right)
}(x;\omega ,\alpha )\ =V^{N_{m}\otimes L_{r}}(x;\omega ,\alpha
_{k_{1}-k_{2}})+2k_{1}\omega .  \label{EqextIO2}
\end{equation}

For the Wronskians, we have the following equivalence relation \cite{GGM2}

\begin{eqnarray}
W^{\left( N_{m}\oplus k_{1}\right) \otimes \left( L_{r}\oplus k_{2}\right)
}(x;\omega ,\alpha )\ &=&\prod\limits_{i=0}^{k_{1}-1}\left( \psi _{0\otimes
\varnothing }(x;\omega ,\alpha _{i})\right)
\prod\limits_{j=0}^{k_{2}-1}\left( \psi _{\varnothing \otimes 0}(x;\omega
,\alpha _{k_{1}-j})\right) W^{N_{m}\otimes L_{r}}(x;\omega ,\alpha
_{k_{1}-k_{2}})  \label{EqextIO3} \\
&=&z^{\alpha (k_{1}-k_{2})/2+\left( k_{1}-k_{2}\right) ^{2}/4}\times
e^{-(k_{1}+k_{2})z/2}\times W^{N_{m}\otimes L_{r}}(x;\omega ,\alpha
_{k_{1}-k_{2}}),  \notag
\end{eqnarray}%
or, if $k_{1}\leq n_{1}\leq ...\leq n_{m}$ and $k_{2}\leq l_{1}\leq ...\leq
l_{r}$

\begin{eqnarray}
&&W^{\left( \left( 0,...,k_{1}-1\right) \cup N_{m}\right) \otimes \left(
\left( 0,...,k_{2}-1\right) \cup L_{r}\right) }(x;\omega ,\alpha )\ 
\label{EqextIO4} \\
&=&\prod\limits_{i=0}^{k_{1}-1}\left( \psi _{0\otimes \varnothing
}(x;\omega ,\alpha _{i})\right) \prod\limits_{j=0}^{k_{2}-1}\left( \psi
_{\varnothing \otimes 0}(x;\omega ,\alpha _{k_{1}-j})\right) W^{\left(
N_{m}-k_{1}\right) \otimes \left( L_{r}-k_{2}\right) }(x;\omega ,\alpha
_{k_{1}-k_{2}})  \notag \\
&=&z^{\alpha (k_{1}-k_{2})/2+\left( k_{1}-k_{2}\right) ^{2}/4}\times
e^{-(k_{1}+k_{2})z/2}W^{\left( N_{m}-k_{1}\right) \otimes \left(
L_{r}-k_{2}\right) }(x;\omega ,\alpha _{k_{1}-k_{2}}).  \notag
\end{eqnarray}

Using Eq(\ref{spec OI}), Eq(\ref{shadOI}), Eq(\ref{wronskprop}) and the
derivation properties of the Laguerre \cite{magnus,szego,GGM2} (see also Eq(%
\ref{poch}))

\begin{equation}
\left\{ 
\begin{array}{c}
\frac{d^{j}}{dz^{j}}\left( L_{n}^{\alpha }\left( z\right) \right) =\left(
-1\right) ^{j}L_{n-j}^{\alpha +j}\left( z\right) \\ 
\frac{d^{j}}{dz^{j}}\left( z^{-\alpha }L_{n}^{-\alpha }\left( z\right)
\right) =\left( n-\alpha \right) _{j}L_{n}^{-\alpha -j}\left( z\right) ,%
\end{array}%
\right.
\end{equation}%
we can write

\begin{equation}
W^{N_{m}\otimes L_{r}}(x;\omega ,\alpha )\ \propto z^{(m-r)^{2}/4-r\left(
r-1\right) }\times z^{\alpha (m-r)/2}\times e^{-\left( m+r\right) z/2}\times 
\mathcal{L}^{N_{m}\otimes L_{r}}\left( z;\alpha \right) ,  \label{WL2}
\end{equation}%
where $\mathcal{L}^{N_{m}\otimes L_{r}}$ is the following determinant ($%
i=0,...,m+r-1$)%
\begin{equation}
\mathcal{L}^{N_{m}\otimes L_{r}}\left( z;\alpha \right) =\left\vert 
\overrightarrow{L}_{n_{1}}\left( z;\alpha \right) ,...,\overrightarrow{L}%
_{n_{m}}\left( z;\alpha \right) ,\overrightarrow{\Lambda }_{l_{1}}\left(
z;\alpha \right) ,...,\overrightarrow{\Lambda }_{l_{r}}\left( z;\alpha
\right) \right\vert ,  \label{PWL}
\end{equation}%
with ($L_{n}^{\alpha }\left( z\right) =0$ if $n<0$)

\begin{equation}
\overrightarrow{L}_{n}\left( z;\alpha \right) =\left( 
\begin{array}{c}
L_{n}^{\alpha }\left( z\right) \\ 
... \\ 
\left( -1\right) ^{i}L_{n-i}^{\alpha +i}\left( z\right) \\ 
... \\ 
\left( -1\right) ^{m+r-1}L_{n-m-r+1}^{\alpha +m+r-1}\left( z\right)%
\end{array}%
\right) ,\ \overrightarrow{\Lambda }_{l}\left( z;\alpha \right) =\left( 
\begin{array}{c}
z^{m+r-1}L_{l}^{-\alpha }\left( z\right) \\ 
... \\ 
\left( l-\alpha \right) _{i}z^{m+r-1-i}L_{l}^{-\alpha -j}\left( z\right) \\ 
... \\ 
\left( l-\alpha \right) _{m+r-1}L_{l}^{-\alpha -m-r+1}\left( z\right)%
\end{array}%
\right) .
\end{equation}

$\mathcal{L}^{N_{m}\otimes L_{r}}$ is called a \textbf{Laguerre
pseudowronskian} \cite{GGM,GGM2}.

Eq(\ref{EqextIO3}) and Eq(\ref{EqextIO4})\ give then

\begin{equation}
\left\{ 
\begin{array}{c}
\mathcal{L}^{\left( N_{m}\oplus k_{1}\right) \otimes \left( L_{r}\oplus
k_{2}\right) }(z;\alpha )\ \propto z^{2rk_{2}+k_{2}\left( k_{2}-1\right) }%
\mathcal{L}^{\left( N_{m}\right) \otimes \left( L_{r}\right) }(z;\alpha
_{k_{1}-k_{2}}) \\ 
\mathcal{L}^{\left( \left( 0,...,k_{1}-1\right) \cup N_{m}\right) \otimes
\left( \left( 0,...,k_{2}-1\right) \cup L_{r}\right) }(z;\alpha )\ \propto
z^{2rk_{2}+k_{2}\left( k_{2}-1\right) }\mathcal{L}^{\left(
N_{m}-k_{1}\right) \otimes \left( L_{r}-k_{2}\right) }(z;\alpha
_{k_{1}-k_{2}}).%
\end{array}%
\right.  \label{EqextIO5}
\end{equation}

The equivalence property Eq(\ref{EqextIO1}) can be viewed as the most
general transcription of the shape invariance property Eq(\ref{SI IO}) at
the level of the rational extensions of the IO potential \cite{GGM2}.

Due to this equivalence property, to describe all the rational extensions of
the IO, it is sufficient to consider those associated to canonical UC.

We can also\ note the useful symmmetry relation

\begin{equation}
W^{L_{r}\otimes N_{m}}(x;\omega ,\alpha )=\Gamma _{2}\left( W^{N_{m}\otimes
L_{r}}(x;\omega ,\alpha )\right) ,  \label{sym1}
\end{equation}%
which leads immediately to

\begin{equation}
\ V^{L_{r}\otimes N_{m}}(x;\omega ,\alpha )=\Gamma _{2}\left(
V^{N_{m}\otimes L_{r}}(x;\omega ,\alpha )\right) +2\omega \alpha .
\label{sym2}
\end{equation}

\subsection{$p$-cyclic extensions of the IO and rational solutions of the
even periodic dressing chains}

The cyclicity of the canonical UC $N_{m}\otimes L_{r}$ alone is not
sufficient to ensure the cyclicity of the associated rational extension of
the IO $V^{N_{m}\otimes L_{r}}(x;\omega ,\alpha )$. Nevertheless we have the
following lemma

\begin{lemma}
If the UC $N_{m}\otimes L_{r}$ is $p$-cyclic with a zero balanced
translation amplitude, then $V^{N_{m}\otimes L_{r}}(x;\omega ,\alpha )$ is a 
$p$-cyclic potential.
\end{lemma}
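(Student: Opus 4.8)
The plan is to lift the combinatorial cyclicity of the two component Maya diagrams $N_m$ and $L_r$ to the level of the potential, using the equivalence relation Eq(\ref{EqextIO2}). The crucial observation is that the balanced translation amplitude $k=k_1-k_2$ controls exactly the shift of the angular-momentum parameter $\alpha$ induced by a full translation of the universal character, so that a \emph{zero} balanced amplitude is precisely what guarantees that we return to the same potential rather than to one with a displaced value of $\alpha$.

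First I would record the combinatorial data. Since the UC is $p$-cyclic, $N_m$ is $p_1$-cyclic with translation $k_1>0$ and $L_r$ is $p_2$-cyclic with translation $k_2>0$, with $p=p_1+p_2$ and, by the zero balanced amplitude hypothesis, $k_1=k_2$. Applying the canonical cyclicity condition Eq(\ref{cyclMD2}) to each diagram yields tuples of flips $(\nu_1,\dots,\nu_{p_1})$ and $(\mu_1,\dots,\mu_{p_2})$ realizing
\begin{equation*}
(N_m,\nu_1,\dots,\nu_{p_1})=N_m\oplus k_1,\qquad (L_r,\mu_1,\dots,\mu_{p_2})=L_r\oplus k_2.
\end{equation*}
Under the correspondence between rational extensions of the IO and universal characters, the $N_m$-flips are realized by Darboux transformations whose seed functions lie in the extended spectrum, and the $L_r$-flips by Darboux transformations whose seeds lie in the extended shadow spectrum. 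Concatenating them produces a chain of $p=p_1+p_2$ DT acting on $V^{N_m\otimes L_r}(x;\omega,\alpha)$ and delivering $V^{(N_m\oplus k_1)\otimes(L_r\oplus k_2)}(x;\omega,\alpha)$.

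The decisive step is then to invoke Eq(\ref{EqextIO2}). Because the balanced amplitude vanishes, $k_1=k_2$, so $\alpha_{k_1-k_2}=\alpha_0=\alpha$ and Eq(\ref{EqextIO2}) specializes to
\begin{equation*}
V^{(N_m\oplus k_1)\otimes(L_r\oplus k_1)}(x;\omega,\alpha)=V^{N_m\otimes L_r}(x;\omega,\alpha)+2k_1\omega .
\end{equation*}
Thus, at the end of the chain of $p$ DT we recover the initial potential translated by the nonzero shift $\Delta=2k_1\omega$, which is exactly the cyclicity condition Eq(\ref{Cyclicity}). Hence $V^{N_m\otimes L_r}(x;\omega,\alpha)$ is a $p$-cyclic potential.

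The main obstacle, and the reason the hypothesis cannot be dropped, is concentrated in this last step: in general Eq(\ref{EqextIO2}) returns $V^{N_m\otimes L_r}(x;\omega,\alpha_{k_1-k_2})$, a rational extension of an IO with a \emph{different} angular momentum $\alpha_{k_1-k_2}$, and such a potential is not an additive translate of $V^{N_m\otimes L_r}(x;\omega,\alpha)$ unless $k_1-k_2=0$. The zero balanced translation amplitude is exactly the condition that neutralizes this $\alpha$-shift. A secondary point to check is that the flips on the two component diagrams genuinely assemble into a single admissible DT chain on the IO, the $N_m$-seeds and $L_r$-seeds being distinct quasi-polynomial eigenfunctions under the non-integer constraint Eq(\ref{alphaconst}); the order in which the flips are performed is immaterial, since the resulting potential depends only on the logarithm of the associated Wronskian.
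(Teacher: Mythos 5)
Your proposal is correct and follows essentially the same route as the paper's own proof: realize the cyclicity of each component diagram by a chain of $p=p_1+p_2$ Darboux transformations leading to $V^{(N_m\oplus k_1)\otimes(L_r\oplus k_2)}$, then apply Eq(\ref{EqextIO2}) and use $k_1=k_2$ to cancel the $\alpha$-shift and obtain the cyclicity condition with $\Delta=2k_1\omega$. The extra remarks on why the zero balanced amplitude is indispensable match the paper's surrounding discussion.
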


\begin{proof}
If $N_{m}\otimes L_{r}$ is $p$-cyclic, there exists a chain of DT $\left(
\nu _{1},...,\nu _{p_{1}}\right) \otimes \left( \lambda _{1},...,\lambda
_{p_{2}}\right) $ with $p=p_{1}+p_{2}$, and $k_{1},k_{2}\in 
\mathbb{N}
^{\ast }$ such that%
\begin{equation}
V^{\left( N_{m},\nu _{1},...,\nu _{p_{1}}\right) \otimes \left(
L_{r},\lambda _{1},...,\lambda _{p_{2}}\right) }(x;\omega ,\alpha
)=V^{\left( N_{m}\oplus k_{1}\right) \otimes \left( L_{r}\oplus k_{2}\right)
}(x;\omega ,\alpha ),
\end{equation}%
which, combined to Eq(\ref{EqextIO2}), leads to%
\begin{equation}
V^{\left( N_{m},\nu _{1},...,\nu _{p_{1}}\right) \otimes \left(
L_{r},\lambda _{1},...,\lambda _{p_{2}}\right) }(x;\omega ,\alpha
)=V^{N_{m}\otimes L_{r}}(x;\omega ,\alpha _{k_{1}-k_{2}})+2k_{1}\omega .
\end{equation}%
In the case of a zero balanced translation amplitude $k_{1}=k_{2}$ and $%
V^{N_{m}\otimes L_{r}}(x;\omega ,\alpha )$ is then $p$-cyclic with an energy
shift $\Delta =2k_{1}\omega $.
\end{proof}

Combining this lemma with Theorem 1, we arrive directly to the theorem

\begin{theorem}
\textit{The rational extensions of the IO }$V^{N_{m}\otimes L_{r}}(x;\omega
,\alpha )$\textit{\ with}%
\begin{equation}
\left\{ 
\begin{array}{c}
N_{m}=\left( \left( 1\mid a_{1}\right) _{k},...,\left( k-1\mid
a_{k-1}\right) _{k};\left( \lambda _{1}\mid \mu _{1}\right) _{k},...,\left(
\lambda _{j_{1}}\mid \mu _{j_{1}}\right) _{k}\right)  \\ 
L_{r}=\left( \left( 1\mid b_{1}\right) _{k},...,\left( k-1\mid
b_{k-1}\right) _{k};\left( \rho _{1}\mid \sigma _{1}\right) _{k},...,\left(
\rho _{j_{2}}\mid \sigma _{j_{2}}\right) _{k}\right) ,%
\end{array}%
\right.   \label{th4}
\end{equation}%
\textit{where }$a_{i},b_{i},\lambda _{i},\mu _{i},\rho _{i},\sigma _{i}$%
\textit{\ are arbitrary positive integers, solve the dressing chain of even
period }$p=p_{1}+p_{2}$\textit{\ with }$p_{l}=2j_{l}+k,\ l=1,2$\textit{, (}$%
p_{1},p_{2}$\textit{\ and }$k$\textit{\ have the same parity with }$0<k\leq
\min (p_{1},p_{2})$\textit{) for the following values of the parameters }%
\begin{equation}
\Delta =2k\omega \text{ and }\varepsilon _{i,i+1}=\left\{ 
\begin{array}{c}
2\left( \nu _{P(i)}-\nu _{P(i+1)}\right) \omega ,\text{ if }\nu _{P(i)},\nu
_{P(i+1)}\in \left\{ 1,...,p_{1}\right\} \text{ or }\nu _{P(i)},\nu
_{P(i+1)}\in \left\{ p_{1}+1,...,p\right\} , \\ 
2\left( \nu _{P(i)}-\nu _{P(i+1)}+\alpha \right) \omega ,\text{ if }\nu
_{P(i)}\in \left\{ 1,...,p_{1}\right\} \text{ and }\nu _{P(i+1)}\in \left\{
p_{1}+1,...,p\right\} , \\ 
2\left( \nu _{P(i)}-\nu _{P(i+1)}-\alpha \right) \omega ,\text{ if }\nu
_{P(i+1)}\in \left\{ 1,...,p_{1}\right\} \text{ or }\nu _{P(i)}\in \left\{
p_{1}+1,...,p\right\} ,%
\end{array}%
\right. \   \label{th41}
\end{equation}%
\textit{(}$\nu _{p+1}=\nu _{1}$\textit{) where }$P$\textit{\ is any
permutation of }$S_{p}$\textit{\ and}%
\begin{eqnarray}
\left( \nu _{1},...,\nu _{p_{1}}\right) \otimes \left( \nu
_{p_{1}+1},...,\nu _{p}\right)  &=&\left( 0,1+a_{1}k,...,\left( k-1\right)
+a_{k-1}k,\ \ \lambda _{1},\lambda _{1}+\mu _{1}k,...,\ \lambda
_{j_{1}},\lambda _{j_{1}}+\mu _{j_{1}}k\right)   \label{th42} \\
&&\otimes \left( 0,1+b_{1}k,...,\left( k-1\right) +b_{k-1}k,\ \ \rho
_{1},\rho _{1}+\sigma _{1}k,...,\ \rho _{j_{1}},\rho _{j_{1}}+\sigma
_{j_{1}}k\right) .  \notag
\end{eqnarray}%
\textit{As for the }$w$\textit{\ solutions of the dressing chain system, for
the chain above, they are given by}%
\begin{eqnarray}
w_{\nu _{i}\otimes \varnothing }^{\left( N_{m},\nu _{1},...,\nu
_{i-1}\right) \otimes L_{r}}(x;\omega ,\alpha ) &=&-\omega x/2+\frac{\alpha
-3/2+m-r+i}{x}  \label{th43} \\
&&+\omega x\frac{d}{dz}\left( \log \left( \frac{\mathcal{L}^{\left(
N_{m},\nu _{1},...,\nu _{i-1}\right) \otimes L_{r}}(z;\alpha )}{\mathcal{L}%
^{\left( N_{m},\nu _{1},...,\nu _{i}\right) \otimes L_{r}}(z;\alpha )}%
\right) \right) ,  \notag \\
\text{if }i &\leq &p_{1}\text{\ and the flip in }\nu _{i}\text{ is positive,}
\notag
\end{eqnarray}%
\begin{eqnarray}
w_{\nu _{i}\otimes \varnothing }^{\left( N_{m},\nu _{1},...,\nu
_{i-1}\right) \otimes L_{r}}(x;\omega ,\alpha ) &=&\omega x/2-\frac{\alpha
-1/2+m-r+i}{x}  \label{th44} \\
&&+\omega x\frac{d}{dz}\left( \log \left( \frac{\mathcal{L}^{\left(
N_{m},\nu _{1},...,\nu _{i-1}\right) \otimes L_{r}}(z;\alpha )}{\mathcal{L}%
^{\left( N_{m},\nu _{1},...,\nu _{i}\right) \otimes L_{r}}(z;\alpha )}%
\right) \right) ,  \notag \\
\text{ if }i &\leq &p_{1}\text{\ and the flip in }\nu _{i}\text{ is negative,%
}  \notag
\end{eqnarray}%
\begin{eqnarray}
w_{\varnothing \otimes \nu _{i}}^{\left( N_{m},\nu _{1},...,\nu
_{p_{1}}\right) \otimes \left( L_{r},\nu _{p_{1}+1},...,\nu _{i-1}\right)
}(x;\omega ,\alpha ) &=&-\omega x/2-\frac{\alpha -13/2+m+p_{1}+3\left(
r+i\right) }{x}  \label{th45} \\
&&+\omega x\frac{d}{dz}\left( \log \left( \frac{\mathcal{L}^{\left(
N_{m},\nu _{1},...,\nu _{p_{1}}\right) \otimes \left( L_{r},\nu
_{p_{1}+1},...,\nu _{i-1}\right) }(z;\alpha )}{\mathcal{L}^{\left( N_{m},\nu
_{1},...,\nu _{p_{1}}\right) \otimes \left( L_{r},\nu _{p_{1}+1},...,\nu
_{i}\right) }(z;\alpha )}\right) \right) ,  \notag \\
\text{if }i &=&p_{1}+j,\ j>0,\text{\ and the flip in }\nu _{i}\text{ is
positive,}  \notag
\end{eqnarray}%
\begin{eqnarray}
w_{\varnothing \otimes \nu _{i}}^{\left( N_{m},\nu _{1},...,\nu
_{p_{1}}\right) \otimes \left( L_{r},\nu _{p_{1}+1},...,\nu _{i-1}\right)
}(x;\omega ,\alpha ) &=&\omega x/2+\frac{\alpha -7/2+m+p_{1}+3\left(
r+i\right) }{x}  \label{th46} \\
&&+\omega x\frac{d}{dz}\left( \log \left( \frac{\mathcal{L}^{\left(
N_{m},\nu _{1},...,\nu _{p_{1}}\right) \otimes \left( L_{r},\nu
_{p_{1}+1},...,\nu _{i-1}\right) }(z;\alpha )}{\mathcal{L}^{\left( N_{m},\nu
_{1},...,\nu _{p_{1}}\right) \otimes \left( L_{r},\nu _{p_{1}+1},...,\nu
_{i}\right) }(z;\alpha )}\right) \right) ,  \notag \\
\text{if }i &=&p_{1}+j,\ j>0,\text{\ and the flip in }\nu _{i}\text{ is
negative,}  \notag
\end{eqnarray}%
\textit{with the convention that if a spectral index is repeated two times
in the tuples constituting the UC, then we suppress the corresponding
eigenfunction in the Laguerre pseudowronskians }$\mathcal{L}$\textit{.}
\end{theorem}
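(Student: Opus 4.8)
The plan is to prove Theorem 4 by combining Lemma 3 (which reduces cyclicity of the extended IO potential to cyclicity of its UC with zero balanced translation amplitude) with the structural Theorem 1 applied separately to each of the two constituent Maya diagrams $N_m$ and $L_r$. Since both $N_m$ and $L_r$ are canonical $p_l$-cyclic Maya diagrams with a common translation amplitude $k$ (forced by the zero balanced translation condition $k_1=k_2=k$), Theorem 1 gives their block decompositions into $(k-1)$ $k$-Okamoto blocks and $j_l=(p_l-k)/2$ blocks of the second type, yielding exactly the form in Eq(\ref{th4}) and the associated flip chains in Eq(\ref{th42}). The parity constraints ($p_1,p_2,k$ all of the same parity, $0<k\le\min(p_1,p_2)$) follow directly from Lemma 1 applied to each Maya diagram. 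This settles the first assertion of the theorem.

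For the parameter values, first I would fix $\Delta=2k\omega$ from Lemma 3, where the factor of $2$ arises because the IO energy levels are spaced by $2\omega$ (see Eq(\ref{spec OI})) rather than $\omega$. The formulas for $\varepsilon_{i,i+1}$ in Eq(\ref{th41}) are then read off from the energy differences $E_{i}-E_{i+1}$ of the successive seed functions in the chain. The three cases distinguish whether consecutive flips act on indices belonging to the extended spectrum (labels in $\{1,\dots,p_1\}$) or the extended shadow spectrum (labels in $\{p_1+1,\dots,p\}$): for same-sector transitions the eigenvalue gap is $2(\nu_{P(i)}-\nu_{P(i+1)})\omega$, while a crossing between sectors picks up the shift $\pm2\alpha\omega$ coming from the relative displacement $E_{n-\alpha}(\omega)=2(n-\alpha)\omega$ of the shadow spectrum in Eq(\ref{shadOI}) versus the main spectrum. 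I would verify these by direct substitution of the spectral indices from Eq(\ref{th42}) into the appropriate energy formulas among Eqs(\ref{spec OI})--(\ref{conjOi}).

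For the $w$-solutions, the strategy parallels the proof of Theorem 3: I would start from the Crum formula Eq(\ref{crum}), writing each RS function as a logarithmic derivative of a ratio of consecutive Wronskians,
\begin{equation}
w_{\nu_i\otimes\varnothing}^{(\dots)}(x)=-\left(\log\frac{W^{(N_m,\nu_1,\dots,\nu_i)\otimes L_r}}{W^{(N_m,\nu_1,\dots,\nu_{i-1})\otimes L_r}}\right)'.
\end{equation}
I would then substitute the factorization Eq(\ref{WL2}) expressing each Wronskian as a product of an explicit gauge prefactor $z^{(m-r)^2/4-r(r-1)}z^{\alpha(m-r)/2}e^{-(m+r)z/2}$ times a Laguerre pseudowronskian $\mathcal{L}$. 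When a positive flip occurs ($\nu_i\in N_m$ or $L_r$, so the index is removed) the rank of the relevant matrix drops by one and $m-r$ changes accordingly; for a negative flip it increases. Taking the logarithmic derivative, the explicit prefactor contributes the $-\omega x/2$ (or $+\omega x/2$) term together with the $1/x$ terms whose numerators $\alpha-3/2+m-r+i$, $\alpha-1/2+m-r+i$, etc., I would track carefully through the changing values of $m$, $r$, and the sector index, while the remaining $\mathcal{L}$-ratio gives the Wronskian logarithmic-derivative term, after converting from $\partial_x$ to $\partial_z$ via $z=\omega x^2/2$, $dz/dx=\omega x$. This yields Eqs(\ref{th43})--(\ref{th46}).

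The main obstacle I expect is the bookkeeping of the $1/x$ coefficients in the four cases of Eqs(\ref{th43})--(\ref{th46}): these mix the $\alpha$-dependence, the counts $m,r,p_1$, and the position $i$, and their precise constant offsets ($-3/2$, $-1/2$, $-13/2$, $-7/2$) depend delicately on how the gauge exponents in Eq(\ref{WL2}) shift under each flip, and on the symmetry relation Eqs(\ref{sym1})--(\ref{sym2}) that swaps the roles of the two sectors when a flip crosses from the $N_m$-block into the $L_r$-block. Getting these offsets right requires carefully distinguishing flips on the extended spectrum ($i\le p_1$, affecting $m$) from flips on the extended shadow spectrum ($i=p_1+j$, affecting $r$), since the shadow eigenfunctions carry the conjugate gauge factor $z^{-\alpha}$ seen in Eq(\ref{shadOI}), and this is the source of the distinct $1/x$ terms. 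The $\Delta$ and $\varepsilon_{i,i+1}$ parts, by contrast, are essentially immediate from Lemma 3 and the explicit spectra.
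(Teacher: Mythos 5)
Your proposal follows essentially the same route as the paper's own proof: the structural part is obtained by combining Lemma 3 (zero balanced translation amplitude, giving $\Delta=2k\omega$) with Theorem 1 applied to each of the two Maya diagrams, and the $w$-formulas are derived from the Crum formula together with the factorization Eq(\ref{WL2}) of the Wronskians into gauge prefactors times Laguerre pseudowronskians, converting derivatives via $z=\omega x^{2}/2$. The bookkeeping issues you flag for the $1/x$ coefficients are exactly the computations the paper carries out case by case, so the proposal is sound and matches the published argument.
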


\begin{proof}
Note that due to the relations Eq(\ref{sym1}) and Eq(\ref{sym2}), we can
without loss of generality restrict the study of the solutions to the case $%
p_{1}\geq p_{2}.$\newline
If $i\leq p_{1}$\ and the flip in $\nu _{i}$ is negative, we have, using Eq(%
\ref{WL2})%
\begin{eqnarray}
w_{\nu \otimes \varnothing }^{N_{m}\otimes L_{r}}(x;\omega ,\alpha )
&=&-\left( \log \left( \frac{W^{\left( N_{m},\nu \right) \otimes
L_{r}}(x;\omega ,\alpha )}{W^{N_{m}\otimes L_{r}}(x;\omega ,\alpha )}\right)
\right) ^{\prime }  \notag \\
&=&-\left( \log \left( x^{\alpha +1/2+m-r}\times e^{-\omega x^{2}/4}\times 
\frac{\mathcal{L}^{\left( N_{m},\nu \right) \otimes L_{r}}(z;\alpha )}{%
\mathcal{L}^{N_{m}\otimes L_{r}}(z;\alpha )}\right) \right) ^{\prime } 
\notag \\
&=&\omega x/2-\frac{\alpha -1/2+m-r}{x}+\omega x\frac{d}{dz}\left( \log
\left( \frac{\mathcal{L}^{N_{m}\otimes L_{r}}(z;\alpha )}{\mathcal{L}%
^{\left( N_{m},\nu \right) \otimes L_{r}}(z;\alpha )}\right) \right) .
\end{eqnarray}%
In the same manner, if $i\leq p_{1}$\ and the flip in $\nu _{i}$ is
positive, we have%
\begin{equation}
w_{\nu \otimes \varnothing }^{N_{m}\otimes L_{r}}(x;\omega ,\alpha )=-\omega
x/2+\frac{\alpha -1/2+m-r}{x}+\omega x\frac{d}{dz}\left( \log \left( \frac{%
\mathcal{L}^{N_{m}\otimes L_{r}}(z;\alpha )}{\mathcal{L}^{\left( N_{m},\nu
\right) \otimes L_{r}}(z;\alpha )}\right) \right) .
\end{equation}%
If $i>p_{1}$\ and the flip in $\nu _{i}$ is negative, we have, using Eq(\ref%
{WL2})%
\begin{eqnarray}
w_{\varnothing \otimes \nu }^{N_{m}\otimes L_{r}}(x;\omega ,\alpha )
&=&-\left( \log \left( \frac{W^{N_{m}\otimes \left( L_{r},\nu \right)
}(x;\omega ,\alpha )}{W^{N_{m}\otimes L_{r}}(x;\omega ,\alpha )}\right)
\right) ^{\prime }  \notag \\
&=&-\left( \log \left( x^{-\alpha +1/2-m-3r}\times e^{-\omega x^{2}/4}\times 
\frac{\mathcal{L}^{N_{m}\otimes \left( L_{r},\nu \right) }(z;\alpha )}{%
\mathcal{L}^{N_{m}\otimes L_{r}}(z;\alpha )}\right) \right) ^{\prime } 
\notag \\
&=&\omega x/2+\frac{\alpha -1/2+m+3r}{x}  \notag \\
&&+\omega x\frac{d}{dz}\left( \log \left( \frac{\mathcal{L}^{N_{m}\otimes
L_{r}}(z;\alpha )}{\mathcal{L}^{N_{m}\otimes \left( L_{r},\nu \right)
}(z;\alpha )}\right) \right) .
\end{eqnarray}%
In the same manner, if $i>p_{1}$\ and the flip in $\nu _{i}$ is positive, we
have%
\begin{eqnarray}
w_{\varnothing \otimes \nu }^{N_{m}\otimes L_{r}}(x;\omega ,\alpha )
&=&-\omega x/2-\frac{\alpha -7/2+m+3r}{x}  \notag \\
&&+\omega x\frac{d}{dz}\left( \log \left( \frac{\mathcal{L}^{\left(
N_{m},\nu _{1},...,\nu _{p_{1}}\right) \otimes \left( L_{r},\nu
_{p_{1}+1},...,\nu _{i-1}\right) }(z;\alpha )}{\mathcal{L}^{\left( N_{m},\nu
_{1},...,\nu _{p_{1}}\right) \otimes \left( L_{r},\nu _{p_{1}+1},...,\nu
_{i}\right) }(z;\alpha )}\right) \right) .
\end{eqnarray}
\end{proof}

\subsection{$\protect\bigskip $Examples}

\subsubsection{Dressing chain of period 2.}

We have $p=2$ which implies $p_{1}=p_{2}=k=1$ and $j_{1}=j_{2}=0$. It leads
to $N_{m}\otimes L_{r}=\varnothing \otimes \varnothing $ and the potential
which solves the dressing chain of period $2$ is the IO itself. The
corresponding $2$-cyclic chain is $\left( 0\right) \otimes \left( 0\right) $%
. These results are in agreement with 
 section I.

\subsubsection{4-cyclic extensions of the IO and rational solutions of PV}

For $p=4$, we can have $p_{1}=3$ and $p_{2}=1$ with $k=1$ ($j_{1}=1,j_{2}=0$%
) or $p_{1}=p_{2}=k=2$ ($j_{1}=j_{2}=0$). As shown before, the dressing
chain system is equivalent to the PV equation (see Eq(\ref{PV})) whose
rational solutions are associated to the Umemura polynomials \cite{clarkson4}%
.

\paragraph{$\left( p_{1},p_{2}\right) =\left( 3,1\right) $}

\textit{Theorem 4} gives then $N_{m}\otimes L_{r}=\left( \lambda \mid \mu
\right) _{1}\otimes \varnothing $ and a corresponding $4$-cyclic chain is
given by $\left( \lambda ,\lambda +\mu ,0\right) \otimes \left( 0\right) $.
The solutions of the dressing chain system of period $4$ with parameters

\begin{equation}
\left\{ 
\begin{array}{c}
\varepsilon _{12}=-2\mu \omega \\ 
\varepsilon _{23}=2\left( \lambda +\mu \right) \omega \\ 
\varepsilon _{34}=2\alpha \omega \\ 
\varepsilon _{41}-\Delta =2\left( -1-\lambda -\alpha \right) \omega ,%
\end{array}%
\right.
\end{equation}%
are (see Eq(\ref{th43}-\ref{th46}))

\begin{eqnarray}
w_{1}(x) &=&w_{\lambda \otimes \varnothing }^{\left( \lambda \mid \mu
\right) _{1}\otimes \varnothing }(x;\omega ,\alpha )  \notag \\
&=&-\omega x/2+\frac{\alpha +\mu -1/2}{x}+\omega x\frac{d}{dz}\left( \log
\left( \frac{\mathcal{L}^{\left( \lambda \mid \mu \right) _{1}\otimes
\varnothing }(z;\alpha )}{\mathcal{L}^{\left( \lambda +1\mid \mu -1\right)
_{1}\otimes \varnothing }(z;\alpha )}\right) \right)
\end{eqnarray}%
(the flip in $\lambda \otimes \varnothing $ is positive, $m=\mu ,\ r=0$),

\begin{eqnarray}
w_{2}(x) &=&w_{\lambda +\mu \otimes \varnothing }^{\left( \lambda ,\left(
\lambda \mid \mu \right) _{1}\right) \otimes \varnothing }(x;\omega ,\alpha
)=w_{\lambda +\mu \otimes \varnothing }^{\left( \left( \lambda +1\mid \mu
-1\right) _{1}\right) \otimes \varnothing }(x;\omega ,\alpha )  \notag \\
&=&\omega x/2-\frac{\alpha +\mu -1/2}{x}+\omega x\frac{d}{dz}\left( \log
\left( \frac{\mathcal{L}^{\left( \lambda +1\mid \mu -1\right) _{1}\otimes
\varnothing }(z;\alpha )}{\mathcal{L}^{\left( \lambda +1\mid \mu \right)
_{1}\otimes \varnothing }(z;\alpha )}\right) \right)
\end{eqnarray}%
(the flip in $\left( \lambda +\mu \right) \otimes \varnothing $ is negative, 
$m=\mu -1,\ r=0$),

\begin{eqnarray}
w_{3}(x) &=&w_{0\otimes \varnothing }^{\left( \lambda +\mu ,\lambda ,\left(
\lambda \mid \mu \right) _{1}\right) \otimes \varnothing }(x;\omega ,\alpha )%
\text{ }=w_{0\otimes \varnothing }^{\left( \left( \lambda +1\mid \mu \right)
_{1}\right) \otimes \varnothing }(x;\omega ,\alpha )  \notag \\
&=&\omega x/2-\frac{\alpha +\mu +1/2}{x}+\omega x\frac{d}{dz}\left( \log
\left( \frac{\mathcal{L}^{\left( \left( \lambda +1\mid \mu \right)
_{1}\right) \otimes \varnothing }(z;\alpha )}{\mathcal{L}^{\left( 0,\left(
\lambda +1\mid \mu \right) _{1}\right) \otimes \varnothing }(z;\alpha )}%
\right) \right)  \notag \\
&=&\omega x/2-\frac{\alpha +\mu +1/2}{x}+\omega x\frac{d}{dz}\left( \log
\left( \frac{\mathcal{L}^{\left( \left( \lambda +1\mid \mu \right)
_{1}\right) \otimes \varnothing }(z;\alpha )}{\mathcal{L}^{\left( \lambda
\mid \mu \right) _{1}\otimes \varnothing }(z;\alpha _{1})}\right) \right)
\end{eqnarray}%
(the flip in $0\otimes \varnothing $ is negative, $m=\mu ,\ r=0$) and (see
Eq(\ref{EqextIO5}))

\begin{eqnarray}
w_{4}(x) &=&w_{\varnothing \otimes 0}^{\left( 0,\lambda +\mu ,\lambda
,\left( \lambda \mid \mu \right) _{k}\right) \otimes \varnothing }(x;\omega
,\alpha )=w_{\varnothing \otimes 0}^{\left( 0,\left( \lambda +1\mid \mu
\right) _{k}\right) \otimes \varnothing }(x;\omega ,\alpha )  \notag \\
&=&\omega x/2+\frac{\alpha +\mu +1/2}{x}+\omega x\frac{d}{dz}\left( \log
\left( \frac{\mathcal{L}^{\left( 0,\left( \lambda +1\mid \mu \right)
_{k}\right) \otimes \varnothing }(z;\alpha )}{\mathcal{L}^{\left( 0,\left(
\lambda +1\mid \mu \right) _{k}\right) \otimes \left( 0\right) }(z;\alpha )}%
\right) \right)  \notag \\
&=&\omega x/2+\frac{\alpha +\mu +1/2}{x}+\omega x\frac{d}{dz}\left( \log
\left( \frac{\mathcal{L}^{\left( \left( \lambda \mid \mu \right) _{k}\right)
\otimes \varnothing }(z;\alpha _{1})}{\mathcal{L}^{\left( \left( \lambda
\mid \mu \right) _{k}\right) \otimes \varnothing }(z;\alpha )}\right)
\right) .
\end{eqnarray}%
(the flip in $\varnothing \otimes 0$ is negative, $m=\mu +1,\ r=0$), where
we have used

\begin{equation}
\left\{ 
\begin{array}{c}
\left( \lambda ,\left( \lambda \mid \mu \right) _{1}\right) =\left( \lambda
+1\mid \mu -1\right) _{1} \\ 
\left( \mu ,\left( \lambda +1\mid \mu -1\right) _{1}\right) =\left( \lambda
+1\mid \mu \right) _{1}.%
\end{array}%
\right.
\end{equation}

Taking $\omega =2$ ($t=z$), the corresponding solution of PV (see Eq(\ref{PV}%
)) with parameters (see Eq(\ref{paramPV}))%
\begin{equation}
\left\{ 
\begin{array}{c}
a=2\mu ^{2} \\ 
b=-2\alpha ^{2} \\ 
c=4(\alpha +2\lambda +\mu +1) \\ 
d=-1/2,%
\end{array}%
\right.
\end{equation}%
is then (see Eq(\ref{defPV}))

\begin{equation}
y(t)=1-1/\frac{d}{dt}\left( \log \left( \frac{\mathcal{L}^{\left( \lambda
\mid \mu \right) _{1}\otimes \varnothing }(t;\alpha )}{\mathcal{L}^{\left(
\lambda +1\mid \mu \right) _{1}\otimes \varnothing }(t;\alpha )}\right)
\right) .
\end{equation}

\paragraph{$\left( p_{1},p_{2}\right) =\left( 2,2\right) $}

\bigskip \textit{Theorem 4} gives $N_{m}\otimes L_{r}=\left( 1\mid
a_{1}\right) _{2}\otimes \left( 1\mid b_{1}\right) _{2}$ and the
corresponding $4$-cyclic chain is $\left( 1+2a_{1},0\right) \otimes \left(
1+2b_{1},0\right) $. The solutions of the dressing chain system of period $4$
with parameters

\begin{equation}
\left\{ 
\begin{array}{c}
\varepsilon _{12}=2\left( 1+2a_{1}\right) \omega \\ 
\varepsilon _{23}=2\left( \alpha -1-2b_{1}\right) \omega \\ 
\varepsilon _{34}=2\left( 1+2b_{1}\right) \omega \\ 
\varepsilon _{41}-\Delta =2\left( -3-2a_{1}-\alpha \right) \omega ,%
\end{array}%
\right.
\end{equation}%
are (see Eq(\ref{th43}-\ref{th46})) and Eq(\ref{EqextIO5}))

\begin{eqnarray}
w_{1}(x) &=&w_{\left( 1+2a_{1}\right) \otimes \varnothing }^{\left( 1\mid
a_{1}\right) _{2}\otimes \left( 1\mid b_{1}\right) _{2}}(x;\omega ,\alpha ) 
\notag \\
&=&\omega x/2-\frac{\alpha +1/2+a_{1}-b_{1}}{x}+\omega x\frac{d}{dz}\left(
\log \left( \frac{\mathcal{L}^{\left( 1\mid a_{1}\right) _{2}\otimes \left(
1\mid b_{1}\right) _{2}}(z;\alpha )}{\mathcal{L}^{\left( 1\mid
a_{1}+1\right) _{2}\otimes \left( 1\mid b_{1}\right) _{2}}(z;\alpha )}%
\right) \right)
\end{eqnarray}%
(the flip in $\left( 1+2a_{1}\right) \otimes \varnothing $ is negative, $%
m=a_{1},\ r=b_{1}$),

\begin{eqnarray}
w_{2}(x) &=&w_{0\otimes \varnothing }^{\left( 1\mid a_{1}+1\right)
_{2}\otimes \left( 1\mid b_{1}\right) _{2}}(x;\omega ,\alpha )  \notag \\
&=&\omega x/2-\frac{\alpha +3/2+a_{1}-b_{1}}{x}+\omega x\frac{d}{dz}\left(
\log \left( \frac{\mathcal{L}^{\left( 1\mid a_{1}+1\right) _{2}\otimes
\left( 1\mid b_{1}\right) _{2}}(z;\alpha )}{\mathcal{L}^{\left( \left( 1\mid
a_{1}\right) _{2}\oplus 2\right) \otimes \left( 1\mid b_{1}\right)
_{2}}(z;\alpha )}\right) \right)  \notag \\
&=&\omega x/2-\frac{\alpha +3/2+a_{1}-b_{1}}{x}+\omega x\frac{d}{dz}\left(
\log \left( \frac{\mathcal{L}^{\left( 1\mid a_{1}+1\right) _{2}\otimes
\left( 1\mid b_{1}\right) _{2}}(z;\alpha )}{\mathcal{L}^{\left( 1\mid
a_{1}\right) _{2}\otimes \left( 1\mid b_{1}\right) _{2}}(z;\alpha _{2})}%
\right) \right)
\end{eqnarray}%
(the flip in $0\otimes \varnothing $ is negative, $m=a_{1}+1,\ r=b_{1}$),

\begin{eqnarray}
w_{3}(x) &=&w_{\varnothing \otimes (1+2b_{1})}^{\left( 0,\left( 1\mid
a_{1}+1\right) \right) _{2}\otimes \left( 1\mid b_{1}\right) _{2}}(x;\omega
,\alpha )  \notag \\
&=&\omega x/2+\frac{\alpha +3/2+a_{1}+3b_{1}}{x}+\omega x\frac{d}{dz}\left(
\log \left( \frac{\mathcal{L}^{\left( \left( 1\mid a_{1}\right) _{2}\oplus
2\right) \otimes \left( 1\mid b_{1}\right) _{2}}(z;\alpha )}{\mathcal{L}%
^{\left( \left( 1\mid a_{1}\right) _{2}\oplus 2\right) \otimes \left( 1\mid
b_{1}+1\right) _{2}}(z;\alpha )}\right) \right)  \notag \\
&=&\omega x/2+\frac{\alpha +3/2+a_{1}+3b_{1}}{x}+\omega x\frac{d}{dz}\left(
\log \left( \frac{\mathcal{L}^{\left( 1\mid a_{1}\right) _{2}\otimes \left(
1\mid b_{1}\right) _{2}}(z;\alpha _{2})}{\mathcal{L}^{\left( 1\mid
a_{1}\right) _{2}\otimes \left( 1\mid b_{1}+1\right) _{2}}(z;\alpha _{2})}%
\right) \right)
\end{eqnarray}%
(the flip in $\varnothing \otimes (1+2b_{1})$ is negative, $m=a_{1}+2,\
r=b_{1}$) and

\begin{eqnarray}
w_{4}(x) &=&w_{\varnothing \otimes 0}^{\left( 1\mid a_{1}\right) _{2}\otimes
\left( 1\mid b_{1}+1\right) _{2}}(x;\omega ,\alpha _{2})  \notag \\
&=&\omega x/2+\frac{\alpha _{2}+5/2+a_{1}+3b_{1}}{x}+\omega x\frac{d}{dz}%
\left( \log \left( \frac{\mathcal{L}^{\left( 1\mid a_{1}\right) _{2}\otimes
\left( 1\mid b_{1}+1\right) _{2}}(z;\alpha _{2})}{\mathcal{L}^{\left( 1\mid
a_{1}\right) _{2}\otimes \left( \left( 1\mid b_{1}\right) _{2}\oplus
2\right) }(z;\alpha _{2})}\right) \right)  \notag \\
&=&\omega x/2+\frac{\alpha +9/2+a_{1}+3b_{1}}{x}+\omega x\frac{d}{dz}\left(
\log \frac{\mathcal{L}^{\left( 1\mid a_{1}\right) _{2}\otimes \left( 1\mid
b_{1}+1\right) _{2}}(z;\alpha _{2})}{z^{4b_{1}+2}\mathcal{L}^{\left( 1\mid
a_{1}\right) _{2}\otimes \left( 1\mid b_{1}\right) _{2}}(z;\alpha )}\right) 
\notag \\
&=&\omega x/2+\frac{\alpha +1/2+a_{1}-5b_{1}}{x}+\omega x\frac{d}{dz}\left(
\log \frac{\mathcal{L}^{\left( 1\mid a_{1}\right) _{2}\otimes \left( 1\mid
b_{1}+1\right) _{2}}(z;\alpha _{2})}{\mathcal{L}^{\left( 1\mid a_{1}\right)
_{2}\otimes \left( 1\mid b_{1}\right) _{2}}(z;\alpha )}\right) ,
\end{eqnarray}%
(the flip in $\varnothing \otimes 0$ is negative, $m=a_{1},\ r=b_{1}+1$),
where we have used

\begin{equation}
\left\{ 
\begin{array}{c}
\left( 1+2a_{1},\left( 1\mid a_{1}\right) _{2}\right) =\left( 1\mid
a_{1}+1\right) _{2} \\ 
\left( 0,\left( 1\mid a_{1}+1\right) _{2}\right) =\left( 0,1)\cup \left(
3\mid a_{1}+1\right) _{2}\right) =\left( 1\mid a_{1}\right) _{2}\oplus 2.%
\end{array}%
\right.
\end{equation}

Taking $\omega =2$ ($t=z$), the corresponding solution of PV Eq(\ref{PV})
with parameters (see Eq(\ref{paramPV}))%
\begin{equation}
\left\{ 
\begin{array}{c}
a=\left( 1+2a_{1}\right) ^{2}/8 \\ 
b=-\left( 1+2b_{1}\right) ^{2}/8 \\ 
c=8(\alpha +1+a_{1}-b_{1}) \\ 
d=-2,%
\end{array}%
\right.
\end{equation}%
is then (see Eq(\ref{defPV}))

\begin{equation}
y(t)=1-2/\left( 1-\frac{\alpha +1+a_{1}-b_{1}}{t}+\frac{d}{dt}\left( \log
\left( \frac{\mathcal{L}^{\left( 1\mid a_{1}\right) _{2}\otimes \left( 1\mid
b_{1}\right) _{2}}(z;\alpha )}{\mathcal{L}^{\left( 1\mid a_{1}\right)
_{2}\otimes \left( 1\mid b_{1}\right) _{2}}(z;\alpha _{2})}\right) \right)
\right) .
\end{equation}

\subsubsection{Rational solutions of the Painlev\'{e} chain of period 6 (A$_{%
\text{5}}$-PV)}

For $p=6$, we can have:

* $p_{1}=5$ and $p_{2}=1$ with $k=1$ ($j_{1}=2,j_{2}=0$)

* $p_{1}=4$ and $p_{2}=2$ with $k=2$ ($j_{1}=1,j_{2}=0$)

* $p_{1}=p_{2}=k=3$ ($j_{1}=j_{2}=0$) .

\paragraph{$\left( p_{1},p_{2}\right) =\left( 5,1\right) $}

\textit{Theorem 4} gives then $N_{m}\otimes L_{r}=\left( \left( \lambda
_{1}\mid \mu _{1}\right) _{1},\left( \lambda _{2}\mid \mu _{2}\right)
_{1}\right) \otimes \varnothing $ and the corresponding $6$-cyclic chain is
given by $\left( \lambda _{1},\lambda _{1}+\mu _{1},\lambda _{2},\lambda
_{2}+\mu _{2},0\right) \otimes \left( 0\right) $. The solutions of the
dressing chain system of period $6$ with parameters

\begin{equation}
\left\{ 
\begin{array}{c}
\varepsilon _{12}=-2\mu _{1}\omega \\ 
\varepsilon _{23}=2\left( \lambda _{1}+\mu _{1}-\lambda _{2}\right) \omega
\\ 
\varepsilon _{34}=-2\mu _{2}\omega \\ 
\varepsilon _{45}=2\left( \lambda _{2}+\mu _{2}\right) \omega \\ 
\varepsilon _{56}=2\alpha \omega \\ 
\varepsilon _{61}-\Delta =2\left( -1-\lambda _{1}-\alpha \right) \omega ,%
\end{array}%
\right.
\end{equation}%
are (see Eq(\ref{th43}-\ref{th46}))%
\begin{eqnarray}
w_{1}(x) &=&w_{\lambda _{1}\otimes \varnothing }^{\left( \left( \lambda
_{1}\mid \mu _{1}\right) _{1},\left( \lambda _{2}\mid \mu _{2}\right)
_{1}\right) \otimes \varnothing }(x;\omega ,\alpha )\text{ (the flip in }%
\lambda _{1}\otimes \varnothing \text{ is positive, }m=\mu _{1}+\mu _{2},\
r=0\text{)}  \notag \\
&=&-\omega x/2+\frac{\alpha +\mu _{1}+\mu _{2}-1/2}{x}+\omega x\frac{d}{dz}%
\left( \log \left( \frac{\mathcal{L}^{\left( \left( \lambda _{1}\mid \mu
_{1}\right) _{1},\left( \lambda _{2}\mid \mu _{2}\right) _{1}\right) \otimes
\varnothing }(z;\alpha )}{\mathcal{L}^{\left( \left( \lambda _{1}+1\mid \mu
_{1}-1\right) _{1},\left( \lambda _{2}\mid \mu _{2}\right) _{1}\right)
\otimes \varnothing }(z;\alpha )}\right) \right) ,
\end{eqnarray}%
\begin{eqnarray}
w_{2}(x) &=&w_{\lambda _{1}+\mu _{1}\otimes \varnothing }^{\left( \left(
\lambda +1\mid \mu -1\right) _{1}\right) \otimes \varnothing }(x;\omega
,\alpha )  \notag \\
&=&\omega x/2-\frac{\alpha +\mu _{1}+\mu _{2}-1/2}{x}+\omega x\frac{d}{dz}%
\left( \log \left( \frac{\mathcal{L}^{\left( \left( \lambda _{1}+1\mid \mu
_{1}-1\right) _{1},\left( \lambda _{2}\mid \mu _{2}\right) _{1}\right)
\otimes \varnothing }(z;\alpha )}{\mathcal{L}^{\left( \left( \lambda
_{1}+1\mid \mu _{1}\right) _{1},\left( \lambda _{2}\mid \mu _{2}\right)
_{1}\right) \otimes \varnothing }(z;\alpha )}\right) \right)
\end{eqnarray}%
(the flip in $\left( \lambda _{1}+\mu _{1}\right) \otimes \varnothing $ is
negative, $m=\mu _{1}+\mu _{2}-1,\ r=0$),%
\begin{eqnarray}
w_{3}(x) &=&w_{\lambda _{2}\otimes \varnothing }^{\left( \left( \lambda
_{1}+1\mid \mu _{1}\right) _{1},\left( \lambda _{2}\mid \mu _{2}\right)
_{1}\right) \otimes \varnothing }(x;\omega ,\alpha )  \notag \\
&=&-\omega x/2+\frac{\alpha +\mu _{1}+\mu _{2}-1/2}{x}+\omega x\frac{d}{dz}%
\left( \log \left( \frac{\mathcal{L}^{\left( \left( \lambda _{1}+1\mid \mu
_{1}\right) _{1},\left( \lambda _{2}\mid \mu _{2}\right) _{1}\right) \otimes
\varnothing }(z;\alpha )}{\mathcal{L}^{\left( \left( \lambda _{1}+1\mid \mu
_{1}\right) _{1},\left( \lambda _{2}+1\mid \mu _{2}-1\right) _{1}\right)
\otimes \varnothing }(z;\alpha )}\right) \right)
\end{eqnarray}%
(the flip in $\lambda _{2}\otimes \varnothing $ is positive, $m=\mu _{1}+\mu
_{2},\ r=0$),%
\begin{eqnarray}
w_{4}(x) &=&w_{\left( \lambda _{2}+\mu _{2}\right) \otimes \varnothing
}^{\left( \left( \lambda _{1}+1\mid \mu _{1}\right) _{1},\left( \lambda
_{2}+1\mid \mu _{2}-1\right) _{1}\right) \otimes \varnothing }(x;\omega
,\alpha )  \notag \\
&=&\omega x/2-\frac{\alpha +\mu _{1}+\mu _{2}-1/2}{x}+\omega x\frac{d}{dz}%
\left( \log \left( \frac{\mathcal{L}^{\left( \left( \lambda _{1}+1\mid \mu
_{1}\right) _{1},\left( \lambda _{2}+1\mid \mu _{2}-1\right) _{1}\right)
\otimes \varnothing }(z;\alpha )}{\mathcal{L}^{\left( \left( \lambda
_{1}+1\mid \mu _{1}\right) _{1},\left( \lambda _{2}+1\mid \mu _{2}\right)
_{1}\right) \otimes \varnothing }(z;\alpha )}\right) \right)
\end{eqnarray}%
(the flip in $\left( \lambda _{2}+\mu _{2}\right) \otimes \varnothing $ is
negative, $m=\mu _{1}+\mu _{2}-1,\ r=0$),%
\begin{eqnarray}
w_{5}(x) &=&w_{0\otimes \varnothing }^{\left( \left( \lambda _{1}+1\mid \mu
_{1}\right) _{1},\left( \lambda _{2}+1\mid \mu _{2}\right) _{1}\right)
\otimes \varnothing }(x;\omega ,\alpha )  \notag \\
&=&\omega x/2-\frac{\alpha +\mu _{1}+\mu _{2}-1/2}{x}+\omega x\frac{d}{dz}%
\left( \log \left( \frac{\mathcal{L}^{\left( \left( \lambda _{1}+1\mid \mu
_{1}\right) _{1},\left( \lambda _{2}+1\mid \mu _{2}\right) _{1}\right)
\otimes \varnothing }(z;\alpha )}{\mathcal{L}^{\left( 0,\left( \lambda
_{1}+1\mid \mu _{1}\right) _{1},\left( \lambda _{2}+1\mid \mu _{2}\right)
_{1}\right) \otimes \varnothing }(z;\alpha )}\right) \right)  \notag \\
&=&\omega x/2-\frac{\alpha +\mu _{1}+\mu _{2}-1/2}{x}+\omega x\frac{d}{dz}%
\left( \log \left( \frac{\mathcal{L}^{\left( \left( \lambda _{1}+1\mid \mu
_{1}\right) _{1},\left( \lambda _{2}+1\mid \mu _{2}\right) _{1}\right)
\otimes \varnothing }(z;\alpha )}{\mathcal{L}^{\left( \left( \lambda
_{1}\mid \mu _{1}\right) _{1},\left( \lambda _{2}\mid \mu _{2}\right)
_{1}\right) \otimes \varnothing }(z;\alpha _{1})}\right) \right)
\end{eqnarray}%
(the flip in $0\otimes \varnothing $ is negative, $m=\mu _{1}+\mu _{2},\ r=0$%
) and

\begin{eqnarray}
w_{6}(x) &=&w_{\varnothing \otimes 0}^{\left( 0,\left( \lambda _{1}+1\mid
\mu _{1}\right) _{1},\left( \lambda _{2}+1\mid \mu _{2}\right) _{1}\right)
\otimes \varnothing }(x;\omega ,\alpha )  \notag \\
&=&\omega x/2+\frac{\alpha +\mu _{1}+\mu _{2}-1/2}{x}+\omega x\frac{d}{dz}%
\left( \log \left( \frac{\mathcal{L}^{\left( 0,\left( \lambda _{1}+1\mid \mu
_{1}\right) _{1},\left( \lambda _{2}+1\mid \mu _{2}\right) _{1}\right)
\otimes \varnothing }(z;\alpha )}{\mathcal{L}^{\left( 0,\left( \lambda
_{1}+1\mid \mu _{1}\right) _{1},\left( \lambda _{2}+1\mid \mu _{2}\right)
_{1}\right) \otimes \left( 0\right) }(z;\alpha )}\right) \right)  \notag \\
&=&\omega x/2+\frac{\alpha +\mu _{1}+\mu _{2}-1/2}{x}+\omega x\frac{d}{dz}%
\left( \log \left( \frac{\mathcal{L}^{\left( \left( \lambda _{1}\mid \mu
_{1}\right) _{1},\left( \lambda _{2}\mid \mu _{2}\right) _{1}\right) \otimes
\varnothing }(z;\alpha _{1})}{\mathcal{L}^{\left( \left( \lambda _{1}\mid
\mu _{1}\right) _{1},\left( \lambda _{2}\mid \mu _{2}\right) _{1}\right)
\otimes \varnothing }(z;\alpha )}\right) \right)
\end{eqnarray}%
(the flip in $\varnothing \otimes 0$ is negative, $m=\mu _{1}+\mu _{2},\ r=0$%
).

\paragraph{$\left( p_{1},p_{2}\right) =\left( 4,2\right) $}

\textit{Theorem 4} gives $N_{m}\otimes L_{r}=\left( \left( 1\mid
a_{1}\right) _{2},\left( \lambda _{1}\mid \mu _{1}\right) _{2}\right)
\otimes \left( 1\mid b_{1}\right) _{2}$ and the corresponding $6$-cyclic
chain is $\left( 1+2a_{1},\lambda _{1},\lambda _{1}+\mu _{1},0\right)
\otimes \left( 1+2b_{1},0\right) $. The solutions of the dressing chain
system of period $6$ with parameters

\begin{equation}
\left\{ 
\begin{array}{c}
\varepsilon _{12}=2\left( 1+2a_{1}-\lambda _{1}\right) \omega \\ 
\varepsilon _{23}=-2\mu _{1}\omega \\ 
\varepsilon _{34}=2\left( \lambda _{1}+\mu _{1}\right) \omega \\ 
\varepsilon _{45}=2\left( \alpha -1-2b_{1}\right) \omega \\ 
\varepsilon _{56}=2\left( 1+2b_{1}\right) \omega \\ 
\varepsilon _{61}-\Delta =2\left( -3-2a_{1}-\alpha \right) \omega ,%
\end{array}%
\right.
\end{equation}%
are (see Eq(\ref{th43}-\ref{th46})) and Eq(\ref{EqextIO5}))

\begin{eqnarray}
w_{1}(x) &=&w_{\left( 1+2a_{1}\right) \otimes \varnothing }^{\left( \left(
1\mid a_{1}\right) _{2},\left( \lambda _{1}\mid \mu _{1}\right) _{2}\right)
\otimes \left( 1\mid b_{1}\right) _{2}}(x;\omega ,\alpha )\text{ (the flip
in }\left( 1+2a_{1}\right) \otimes \varnothing \text{ is negative, }%
m=a_{1}+\mu _{1},\ r=b_{1}\text{)}  \notag \\
&=&\omega x/2-\frac{\alpha +1/2+a_{1}+\mu _{1}-b_{1}}{x}+\omega x\frac{d}{dz}%
\left( \log \left( \frac{\mathcal{L}^{\left( \left( 1\mid a_{1}\right)
_{2},\left( \lambda _{1}\mid \mu _{1}\right) _{2}\right) \otimes \left(
1\mid b_{1}\right) _{2}}(z;\alpha )}{\mathcal{L}^{\left( \left( 1\mid
a_{1}+1\right) _{2},\left( \lambda _{1}\mid \mu _{1}\right) _{2}\right)
\otimes \left( 1\mid b_{1}\right) _{2}}(z;\alpha )}\right) \right) ,
\end{eqnarray}

\begin{eqnarray}
w_{2}(x) &=&w_{\lambda _{1}\otimes \varnothing }^{\left( \left( 1\mid
a_{1}+1\right) _{2},\left( \lambda _{1}\mid \mu _{1}\right) _{2}\right)
\otimes \left( 1\mid b_{1}\right) _{2}}(x;\omega ,\alpha )  \notag \\
&=&-\omega x/2+\frac{\alpha +1/2+a_{1}+\mu _{1}-b_{1}}{x}+\omega x\frac{d}{dz%
}\left( \log \left( \frac{\mathcal{L}^{\left( \left( 1\mid a_{1}+1\right)
_{2},\left( \lambda _{1}\mid \mu _{1}\right) _{2}\right) \otimes \left(
1\mid b_{1}\right) _{2}}(z;\alpha )}{\mathcal{L}^{\left( \left( 1\mid
a_{1}+1\right) _{2},\left( \lambda _{1}+2\mid \mu _{1}-1\right) _{2}\right)
\otimes \left( 1\mid b_{1}\right) _{2}}(z;\alpha )}\right) \right)
\end{eqnarray}%
(the flip in $\lambda _{1}\otimes \varnothing $ is positive, $m=a_{1}+1+\mu
_{1},\ r=b_{1}$),

\begin{eqnarray}
w_{3}(x) &=&w_{\left( \lambda _{1}+\mu _{1}\right) \otimes \varnothing
}^{\left( \left( 1\mid a_{1}+1\right) _{2},\left( \lambda _{1}+1\mid \mu
_{1}-1\right) _{2}\right) \otimes \left( 1\mid b_{1}\right) _{2}}(x;\omega
,\alpha )  \notag \\
&=&\omega x/2-\frac{\alpha +1/2+a_{1}+\mu _{1}-b_{1}}{x}+\omega x\frac{d}{dz}%
\left( \log \left( \frac{\mathcal{L}^{\left( \left( 1\mid a_{1}+1\right)
_{2},\left( \lambda _{1}+2\mid \mu _{1}-1\right) _{2}\right) \otimes \left(
1\mid b_{1}\right) _{2}}(z;\alpha )}{\mathcal{L}^{\left( \left( 1\mid
a_{1}+1\right) _{2},\left( \lambda _{1}+2\mid \mu _{1}\right) _{2}\right)
\otimes \left( 1\mid b_{1}\right) _{2}}(z;\alpha )}\right) \right)
\end{eqnarray}%
(the flip in $\left( \lambda _{1}+\mu _{1}\right) \otimes \varnothing $ is
negative, $m=a_{1}+\mu _{1},\ r=b_{1}$),%
\begin{eqnarray}
w_{4}(x) &=&w_{0\otimes \varnothing }^{\left( \left( 1\mid a_{1}+1\right)
_{2},\left( \lambda _{1}+1\mid \mu _{1}\right) _{2}\right) \otimes \left(
1\mid b_{1}\right) _{2}}(x;\omega ,\alpha )  \notag \\
&=&\omega x/2-\frac{\alpha +3/2+a_{1}+\mu _{1}+3b_{1}}{x}+\omega x\frac{d}{dz%
}\left( \log \left( \frac{\mathcal{L}^{\left( \left( 1\mid a_{1}+1\right)
_{2},\left( \lambda _{1}+2\mid \mu _{1}\right) _{2}\right) \otimes \left(
1\mid b_{1}\right) _{2}}(z;\alpha )}{\mathcal{L}^{\left( 0,\left( 1\mid
a_{1}+1\right) _{2},\left( \lambda _{1}+2\mid \mu _{1}\right) _{2}\right)
\otimes \left( 1\mid b_{1}\right) _{2}}(z;\alpha )}\right) \right)  \notag \\
&=&\omega x/2-\frac{\alpha +3/2+a_{1}+\mu _{1}+3b_{1}}{x}+\omega x\frac{d}{dz%
}\left( \log \left( \frac{\mathcal{L}^{\left( \left( 1\mid a_{1}+1\right)
_{2},\left( \lambda _{1}+2\mid \mu _{1}\right) _{2}\right) \otimes \left(
1\mid b_{1}\right) _{2}}(z;\alpha )}{\mathcal{L}^{\left( \left( 1\mid
a_{1}\right) _{2},\left( \lambda _{1}\mid \mu _{1}\right) _{2}\right)
\otimes \left( 1\mid b_{1}\right) _{2}}(z;\alpha _{2})}\right) \right) ,
\end{eqnarray}%
(the flip in $0\otimes \varnothing $ is negative, $m=\mu _{1}+a_{1}+1,\
r=b_{1}$, and we have $\left( 0,\left( 1\mid a_{1}+1\right) _{2},\left(
\lambda _{1}+2\mid \mu _{1}\right) _{2}\right) =\left( 0,1)\right.$ $\left.\cup \left( 3\mid
a_{1}+1\right) _{2},\left( \lambda _{1}+2\mid \mu _{1}\right) _{2}\right)
=\left( \left( 1\mid a_{1}\right) _{2},\left( \lambda _{1}\mid \mu
_{1}\right) _{2}\right) \oplus 2$),

\begin{eqnarray}
w_{5}(x) &=&w_{\varnothing \otimes \left( 1+2b_{1}\right) }^{\left( 0,\left(
1\mid a_{1}+1\right) _{2},\left( \lambda _{1}+2\mid \mu _{1}\right)
_{2}\right) \otimes \left( 1\mid b_{1}\right) _{2}}(x;\omega ,\alpha )\text{
(the flip in }\varnothing \otimes \left( 1+2b_{1}\right) \text{ is negative, 
}m=\mu _{1}+a_{1}+2,\ r=b_{1}\text{)}  \notag \\
&=&\omega x/2+\frac{\alpha +3/2+\mu _{1}+a_{1}+3b_{1}}{x}+\omega x\frac{d}{dz%
}\left( \log \left( \frac{\mathcal{L}^{\left( 0,\left( 1\mid a_{1}+1\right)
_{2},\left( \lambda _{1}+2\mid \mu _{1}\right) _{2}\right) \otimes \left(
1\mid b_{1}\right) _{2}}(z;\alpha )}{\mathcal{L}^{\left( 0,\left( 1\mid
a_{1}+1\right) _{2},\left( \lambda _{1}+2\mid \mu _{1}\right) _{2}\right)
\otimes \left( 1\mid b_{1}+1\right) _{2}}(z;\alpha )}\right) \right)  \notag
\\
&=&\omega x/2+\frac{\alpha +3/2+\mu _{1}+a_{1}+3b_{1}}{x}+\omega x\frac{d}{dz%
}\left( \log \left( \frac{\mathcal{L}^{\left( \left( 1\mid a_{1}\right)
_{2},\left( \lambda _{1}\mid \mu _{1}\right) _{2}\right) \otimes \left(
1\mid b_{1}\right) _{2}}(z;\alpha _{2})}{\mathcal{L}^{\left( \left( 1\mid
a_{1}\right) _{2},\left( \lambda _{1}\mid \mu _{1}\right) _{2}\right)
\otimes \left( 1\mid b_{1}+1\right) _{2}}(z;\alpha _{2})}\right) \right) ,
\end{eqnarray}%
and ($\left( 0,\left( 1\mid b_{1}+1\right) _{2}\right) =(0,1)\cup \left(
1\mid b_{1}+1\right) _{2}=\left( 1\mid b_{1}\right) _{2}\oplus 2$, see also
Eq(\ref{EqextIO5}))

\begin{eqnarray}
w_{6}(x) &=&w_{\varnothing \otimes 0}^{\left( \left( 1\mid a_{1}\right)
_{2},\left( \lambda _{1}\mid \mu _{1}\right) _{2}\right) \otimes \left(
1\mid b_{1}+1\right) _{2}}(x;\omega ,\alpha _{2})  \notag \\
&=&\omega x/2+\frac{\alpha +9/2+\mu _{1}+a_{1}+3b_{1}}{x}+\omega x\frac{d}{dz%
}\left( \log \left( \frac{\mathcal{L}^{\left( \left( 1\mid a_{1}\right)
_{2},\left( \lambda _{1}\mid \mu _{1}\right) _{2}\right) \otimes \left(
1\mid b_{1}+1\right) _{2}}(z;\alpha _{2})}{\mathcal{L}^{\left( \left( 1\mid
a_{1}\right) _{2},\left( \lambda _{1}\mid \mu _{1}\right) _{2}\right)
\otimes \left( \left( 1\mid b_{1}\right) _{2}\oplus 2\right) }(z;\alpha _{2})%
}\right) \right)  \notag \\
&=&\omega x/2+\frac{\alpha +9/2+\mu _{1}+a_{1}+3b_{1}}{x}+\omega x\frac{d}{dz%
}\left( \log \left( \frac{\mathcal{L}^{\left( \left( 1\mid a_{1}\right)
_{2},\left( \lambda _{1}\mid \mu _{1}\right) _{2}\right) \otimes \left(
1\mid b_{1}+1\right) _{2}}(z;\alpha _{2})}{z^{4b_{1}+2}\mathcal{L}^{\left(
\left( 1\mid a_{1}\right) _{2},\left( \lambda _{1}\mid \mu _{1}\right)
_{2}\right) \otimes \left( 1\mid b_{1}\right) _{2}}(z;\alpha )}\right)
\right)  \notag \\
&=&\omega x/2+\frac{\alpha +1/2+\mu _{1}+a_{1}-5b_{1}}{x}+\omega x\frac{d}{dz%
}\left( \log \left( \frac{\mathcal{L}^{\left( \left( 1\mid a_{1}\right)
_{2},\left( \lambda _{1}\mid \mu _{1}\right) _{2}\right) \otimes \left(
1\mid b_{1}+1\right) _{2}}(z;\alpha _{2})}{\mathcal{L}^{\left( \left( 1\mid
a_{1}\right) _{2},\left( \lambda _{1}\mid \mu _{1}\right) _{2}\right)
\otimes \left( 1\mid b_{1}\right) _{2}}(z;\alpha )}\right) \right)
\end{eqnarray}%
(the flip in $\varnothing \otimes 0$ is negative, $m=\mu _{1}+a_{1},\
r=b_{1}+1$).

\paragraph{$\left( p_{1},p_{2}\right) =\left( 3,3\right) $ and $k=3$}

\textit{Theorem 4} gives $N_{m}\otimes L_{r}=\left( \left( 1\mid
a_{1}\right) _{3},\left( 2\mid a_{2}\right) _{3}\right) \otimes \left(
\left( 1\mid b_{1}\right) _{2},\left( 2\mid b_{2}\right) _{2}\right) $ and
the corresponding $6$-cyclic chain is $\left( 1+3a_{1},2+3a_{2},0\right)
\otimes \left( 1+3b_{1},2+3b_{2},0\right) $. The solutions of the dressing
chain system of period $6$ with parameters

\begin{equation}
\left\{ 
\begin{array}{c}
\varepsilon _{12}=2\left( -1+3\left( a_{1}-a_{2}\right) \right) \omega \\ 
\varepsilon _{23}=2\left( 2+3a_{2}\right) \omega \\ 
\varepsilon _{34}=2\left( \alpha -1-3b_{1}\right) \omega \\ 
\varepsilon _{23}=2\left( -1+3\left( b_{1}-b_{2}\right) \right) \omega \\ 
\varepsilon _{34}=2\left( 2+3b_{2}\right) \omega \\ 
\varepsilon _{41}-\Delta =2\left( -4-3a_{1}-\alpha \right) \omega ,%
\end{array}%
\right.
\end{equation}%
are (see Eq(\ref{th43}-\ref{th46})) and Eq(\ref{EqextIO5}))

\begin{eqnarray}
w_{1}(x) &=&w_{\left( 1+3a_{1}\right) \otimes \varnothing }^{\left( \left(
1\mid a_{1}\right) _{3},\left( 2\mid a_{2}\right) _{3}\right) \otimes \left(
\left( 1\mid b_{1}\right) _{3},\left( 2\mid b_{2}\right) _{3}\right)
}(x;\omega ,\alpha )\text{ }  \\
&=&\omega x/2-\frac{\alpha +1/2+a_{1}+a_{2}-b_{1}-b_{2}}{x}+\omega x\frac{d}{%
dz}\left( \log \left( \frac{\mathcal{L}^{\left( \left( 1\mid a_{1}\right)
_{3},\left( 2\mid a_{2}\right) _{3}\right) \otimes \left( \left( 1\mid
b_{1}\right) _{3},\left( 2\mid b_{2}\right) _{3}\right) }(z;\alpha )}{%
\mathcal{L}^{\left( \left( 1\mid a_{1}+1\right) _{3},\left( 2\mid
a_{2}\right) _{3}\right) \otimes \left( \left( 1\mid b_{1}\right)
_{3},\left( 2\mid b_{2}\right) _{3}\right) }(z;\alpha )}\right) \right)\notag 
\end{eqnarray}%
(the flip in $\left( 1+3a_{1}\right) \otimes \varnothing $ is negative, $%
m=a_{1}+a_{2},\ r=b_{1}+b_{2}$),%
\begin{eqnarray}
w_{2}(x) &=&w_{\left( 2+3a_{2}\right) \otimes \varnothing }^{\left( \left(
1\mid a_{1}+1\right) _{3},\left( 2\mid a_{2}\right) _{3}\right) \otimes
\left( \left( 1\mid b_{1}\right) _{3},\left( 2\mid b_{2}\right) _{3}\right)
}(x;\omega ,\alpha )   \\
&=&\omega x/2-\frac{\alpha +3/2+a_{1}+a_{2}-b_{1}-b_{2}}{x}+\omega x\frac{d}{%
dz}\left( \log \left( \frac{\mathcal{L}^{\left( \left( 1\mid a_{1}+1\right)
_{3},\left( 2\mid a_{2}\right) _{3}\right) \otimes \left( \left( 1\mid
b_{1}\right) _{3},\left( 2\mid b_{2}\right) _{3}\right) }(z;\alpha )}{%
\mathcal{L}^{\left( \left( 1\mid a_{1}+1\right) _{3},\left( 2\mid
a_{2}+1\right) _{3}\right) \otimes \left( \left( 1\mid b_{1}\right)
_{3},\left( 2\mid b_{2}\right) _{3}\right) }(z;\alpha )}\right) \right)\notag
\end{eqnarray}%
(the flip in $\left( 2+3a_{2}\right) \otimes \varnothing $ is negative, $%
m=a_{1}+a_{2}+1,\ r=b_{1}+b_{2}$),

\begin{eqnarray}
w_{3}(x) &=&w_{0\otimes \varnothing }^{\left( \left( 1\mid a_{1}+1\right)
_{3},\left( 2\mid a_{2}+1\right) _{3}\right) \otimes \left( \left( 1\mid
b_{1}\right) _{3},\left( 2\mid b_{2}\right) _{3}\right) }(x;\omega ,\alpha )%
\text{ }  \\
&=&\omega x/2-\frac{\alpha +5/2+a_{1}+a_{2}-b_{1}-b_{2}}{x}+\omega x\frac{d}{%
dz}\left( \log \left( \frac{\mathcal{L}^{\left( \left( 1\mid a_{1}+1\right)
_{3},\left( 2\mid a_{2}+1\right) _{3}\right) \otimes \left( \left( 1\mid
b_{1}\right) _{3},\left( 2\mid b_{2}\right) _{3}\right) }(z;\alpha )}{%
\mathcal{L}^{\left( \left( \left( 1\mid a_{1}\right) _{3},\left( 2\mid
a_{2}\right) _{3}\right) \oplus 3\right) \otimes \left( \left( 1\mid
b_{1}\right) _{3},\left( 2\mid b_{2}\right) _{3}\right) }(z;\alpha )}\right)
\right)  \notag \\
&=&\omega x/2-\frac{\alpha +5/2+a_{1}+a_{2}-b_{1}-b_{2}}{x}+\omega x\frac{d}{%
dz}\left( \log \left( \frac{\mathcal{L}^{\left( \left( 1\mid a_{1}+1\right)
_{3},\left( 2\mid a_{2}+1\right) _{3}\right) \otimes \left( \left( 1\mid
b_{1}\right) _{3},\left( 2\mid b_{2}\right) _{3}\right) }(z;\alpha )}{%
\mathcal{L}^{\left( \left( 1\mid a_{1}\right) _{3},\left( 2\mid a_{2}\right)
_{3}\right) \otimes \left( \left( 1\mid b_{1}\right) _{3},\left( 2\mid
b_{2}\right) _{3}\right) }(z;\alpha _{3})}\right) \right)\notag 
\end{eqnarray}%
(the flip in $0\otimes \varnothing $ is negative, $m=a_{1}+a_{2}+2,\
r=b_{1}+b_{2}$ and we have used $\left( 0,\left( 1\mid a_{1}+1\right)
_{3},\left( 2\mid a_{2}+1\right) _{3}\right) =\left( 0,1,2\right) \cup
\left( \left( 4\mid a_{1}\right) _{3},\left( 5\mid a_{2}\right) _{3}\right)
=\left( \left( 1\mid a_{1}\right) _{3},\left( 2\mid a_{2}\right) _{3}\right)
\oplus 3$),%
\begin{eqnarray}
w_{4}(x) &=&w_{\varnothing \otimes \left( 1+2b_{1}\right) }^{\left( \left(
\left( 1\mid a_{1}\right) _{3},\left( 2\mid a_{2}\right) _{3}\right) \oplus
3\right) \otimes \left( \left( 1\mid b_{1}\right) _{3},\left( 2\mid
b_{2}\right) _{3}\right) }(x;\omega ,\alpha )  \\
&=&\omega x/2+\frac{\alpha +5/2+a_{1}+a_{2}+3b_{1}+3b_{2}}{x}+\omega x\frac{d%
}{dz}\left( \log \left( \frac{\mathcal{L}^{\left( \left( \left( 1\mid
a_{1}\right) _{3},\left( 2\mid a_{2}\right) _{3}\right) \oplus 3\right)
\otimes \left( \left( 1\mid b_{1}\right) _{3},\left( 2\mid b_{2}\right)
_{3}\right) }(z;\alpha )}{\mathcal{L}^{\left( \left( \left( 1\mid
a_{1}\right) _{3},\left( 2\mid a_{2}\right) _{3}\right) \oplus 3\right)
\otimes \left( \left( 1\mid b_{1}+1\right) _{3},\left( 2\mid b_{2}\right)
_{3}\right) }(z;\alpha )}\right) \right)  \notag \\
&=&\omega x/2+\frac{\alpha +5/2+a_{1}+a_{2}+3b_{1}+3b_{2}}{x}+\omega x\frac{d%
}{dz}\left( \log \left( \frac{\mathcal{L}^{\left( \left( 1\mid a_{1}\right)
_{3},\left( 2\mid a_{2}\right) _{3}\right) \otimes \left( \left( 1\mid
b_{1}+1\right) _{3},\left( 2\mid b_{2}\right) _{3}\right) }(z;\alpha _{3})}{%
\mathcal{L}^{\left( \left( 1\mid a_{1}\right) _{3},\left( 2\mid a_{2}\right)
_{3}\right) \otimes \left( \left( 1\mid b_{1}+1\right) _{3},\left( 2\mid
b_{2}\right) _{3}\right) }(z;\alpha _{3})}\right) \right)\notag 
\end{eqnarray}%
(the flip in $\varnothing \otimes \left( 1+2b_{1}\right) $ is negative, $%
m=a_{1}+a_{2}+3,\ r=b_{1}+b_{2}$),

\begin{eqnarray}
w_{5}(x) &=&w_{\varnothing \otimes \left( 2+3b_{1}\right) }^{\left( \left(
1\mid a_{1}\right) _{3},\left( 2\mid a_{2}\right) _{3}\right) \otimes \left(
\left( 1\mid b_{1}+1\right) _{3},\left( 2\mid b_{2}\right) _{3}\right)
}(x;\omega ,\alpha _{3})   \\
&=&\omega x/2+\frac{\alpha +11/2+a_{1}+a_{2}+3b_{1}+3b_{2}}{x}+\omega x\frac{%
d}{dz}\left( \log \left( \frac{\mathcal{L}^{\left( \left( 1\mid a_{1}\right)
_{3},\left( 2\mid a_{2}\right) _{3}\right) \otimes \left( \left( 1\mid
b_{1}+1\right) _{3},\left( 2\mid b_{2}\right) _{3}\right) }(z;\alpha _{3})}{%
\mathcal{L}^{\left( \left( 1\mid a_{1}\right) _{3},\left( 2\mid a_{2}\right)
_{3}\right) \otimes \left( \left( 1\mid b_{1}+1\right) _{3},\left( 2\mid
b_{2}+1\right) _{3}\right) }(z;\alpha _{3})}\right) \right)\notag
\end{eqnarray}%
(the flip in $\varnothing \otimes \left( 2+3b_{1}\right) $ is negative, $%
m=a_{1}+a_{2},\ r=b_{1}+b_{2}+1$) and

\begin{eqnarray}
w_{6}(x) &=&w_{\varnothing \otimes 0}^{\left( \left( 1\mid a_{1}\right)
_{3},\left( 2\mid a_{2}\right) _{3}\right) \otimes \left( \left( 1\mid
b_{1}+1\right) _{3},\left( 2\mid b_{2}+1\right) _{3}\right) }(x;\omega
,\alpha _{3})   \\
&=&\omega x/2+\frac{\alpha +17/2+a_{1}+a_{2}+3b_{1}+3b_{2}}{x}+\omega x\frac{%
d}{dz}\left( \log \left( \frac{\mathcal{L}^{\left( \left( 1\mid a_{1}\right)
_{3},\left( 2\mid a_{2}\right) _{3}\right) \otimes \left( \left( 1\mid
b_{1}+1\right) _{3},\left( 2\mid b_{2}+1\right) _{3}\right) }(z;\alpha _{3})%
}{\mathcal{L}^{\left( \left( 1\mid a_{1}\right) _{3},\left( 2\mid
a_{2}\right) _{3}\right) \otimes \left( \left( \left( 1\mid b_{1}\right)
_{3},\left( 2\mid b_{2}\right) _{3}\right) \oplus 3\right) }(z;\alpha _{3})}%
\right) \right)  \notag \\
&=&\omega x/2+\frac{\alpha +17/2+a_{1}+a_{2}+3b_{1}+3b_{2}}{x}+\omega x\frac{%
d}{dz}\left( \log \left( \frac{\mathcal{L}^{\left( \left( 1\mid a_{1}\right)
_{3},\left( 2\mid a_{2}\right) _{3}\right) \otimes \left( \left( 1\mid
b_{1}+1\right) _{3},\left( 2\mid b_{2}+1\right) _{3}\right) }(z;\alpha _{3})%
}{z^{6\left( b_{1}+b_{2}\right) +6}\mathcal{L}^{\left( \left( 1\mid
a_{1}\right) _{3},\left( 2\mid a_{2}\right) _{3}\right) \otimes \left(
\left( 1\mid b_{1}\right) _{3},\left( 2\mid b_{2}\right) _{3}\right)
}(z;\alpha )}\right) \right)  \notag \\
&=&\omega x/2+\frac{\alpha -7/2+a_{1}+a_{2}-9b_{1}-9b_{2}}{x}+\omega x\frac{d%
}{dz}\left( \log \left( \frac{\mathcal{L}^{\left( \left( 1\mid a_{1}\right)
_{3},\left( 2\mid a_{2}\right) _{3}\right) \otimes \left( \left( 1\mid
b_{1}+1\right) _{3},\left( 2\mid b_{2}+1\right) _{3}\right) }(z;\alpha _{3})%
}{\mathcal{L}^{\left( \left( 1\mid a_{1}\right) _{3},\left( 2\mid
a_{2}\right) _{3}\right) \otimes \left( \left( 1\mid b_{1}\right)
_{3},\left( 2\mid b_{2}\right) _{3}\right) }(z;\alpha )}\right) \right)\notag
\end{eqnarray}%
(the flip in $\varnothing \otimes 0$ is negative, $m=a_{1}+a_{2},\
r=b_{1}+b_{2}+2$, see also Eq(\ref{EqextIO5})).

\paragraph{$\left( p_{1},p_{2}\right) =\left( 3,3\right) $ and $k=1$}

\textit{Theorem 4} gives $N_{m}\otimes L_{r}=\left( \lambda _{1}\mid \mu
_{1}\right) _{1}\otimes \left( \rho _{1}\mid \sigma _{1}\right) _{1}$ and
the corresponding $6$-cyclic chain is $\left( \lambda _{1},\lambda _{1}+\mu
_{1},0\right) \otimes \left( \rho _{1},\rho _{1}+\sigma _{1},0\right) $. The
solutions of the dressing chain system of period $6$ with parameters

\begin{equation}
\left\{ 
\begin{array}{c}
\varepsilon _{12}=-2\mu _{1}\omega \\ 
\varepsilon _{23}=2\left( \lambda _{1}+\mu _{1}\right) \omega \\ 
\varepsilon _{34}=2\left( \alpha -\rho _{1}\right) \omega \\ 
\varepsilon _{23}=-2\sigma _{1}\omega \\ 
\varepsilon _{34}=2\left( \rho _{1}+\sigma _{1}\right) \omega \\ 
\varepsilon _{41}-\Delta =2\left( -1-\lambda _{1}-\alpha \right) \omega ,%
\end{array}%
\right.
\end{equation}%
are (see Eq(\ref{th43}-\ref{th46})) and Eq(\ref{EqextIO5})):

\begin{eqnarray}
w_{1}(x) &=&w_{\lambda _{1}\otimes \varnothing }^{\left( \lambda _{1}\mid
\mu _{1}\right) _{1}\otimes \left( \rho _{1}\mid \sigma _{1}\right)
_{1}}(x;\omega ,\alpha )\text{ }  \notag \\
&=&-\omega x/2+\frac{\alpha -1/2+\mu _{1}-\sigma _{1}}{x}+\omega x\frac{d}{dz%
}\left( \log \left( \frac{\mathcal{L}^{\left( \lambda _{1}\mid \mu
_{1}\right) _{1}\otimes \left( \rho _{1}\mid \sigma _{1}\right)
_{1}}(z;\alpha )}{\mathcal{L}^{\left( \lambda _{1}+1\mid \mu _{1}-1\right)
_{1}\otimes \left( \rho _{1}\mid \sigma _{1}\right) _{1}}(z;\alpha )}\right)
\right)
\end{eqnarray}%
(the flip in $\lambda _{1}\otimes \varnothing $ is positive, $m=\mu _{1},\
r=\sigma _{1}$),%
\begin{eqnarray}
w_{2}(x) &=&w_{\left( \lambda _{1}+\mu _{1}\right) \otimes \varnothing
}^{\left( \lambda _{1}+1\mid \mu _{1}-1\right) _{1}\otimes \left( \rho
_{1}\mid \sigma _{1}\right) _{1}}(x;\omega ,\alpha )  \notag \\
&=&\omega x/2-\frac{\alpha -1/2+\mu _{1}-\sigma _{1}}{x}+\omega x\frac{d}{dz}%
\left( \log \left( \frac{\mathcal{L}^{\left( \lambda _{1}+1\mid \mu
_{1}-1\right) _{1}\otimes \left( \rho _{1}\mid \sigma _{1}\right)
_{1}}(z;\alpha )}{\mathcal{L}^{\left( \lambda _{1}+1\mid \mu _{1}\right)
_{1}\otimes \left( \rho _{1}\mid \sigma _{1}\right) _{1}}(z;\alpha )}\right)
\right)
\end{eqnarray}%
(the flip in $\left( \lambda _{1}+\mu _{1}\right) \otimes \varnothing $ is
negative, $m=\mu _{1}-1,\ r=\sigma _{1}$),

\begin{eqnarray}
w_{3}(x) &=&w_{0\otimes \varnothing }^{\left( \lambda _{1}+1\mid \mu
_{1}\right) _{1}\otimes \left( \rho _{1}\mid \sigma _{1}\right)
_{1}}(x;\omega ,\alpha )\text{ }  \notag \\
&=&\omega x/2-\frac{\alpha +1/2+\mu _{1}-\sigma _{1}}{x}+\omega x\frac{d}{dz}%
\left( \log \left( \frac{\mathcal{L}^{\left( \lambda _{1}+1\mid \mu
_{1}\right) _{1}\otimes \left( \rho _{1}\mid \sigma _{1}\right)
_{1}}(z;\alpha )}{\mathcal{L}^{\left( \left( \lambda _{1}\mid \mu
_{1}\right) _{1}\oplus 1\right) \otimes \left( \rho _{1}\mid \sigma
_{1}\right) _{1}}(z;\alpha )}\right) \right)  \notag \\
&=&\omega x/2-\frac{\alpha +1/2+\mu _{1}-\sigma _{1}}{x}+\omega x\frac{d}{dz}%
\left( \log \left( \frac{\mathcal{L}^{\left( \lambda _{1}+1\mid \mu
_{1}\right) _{1}\otimes \left( \rho _{1}\mid \sigma _{1}\right)
_{1}}(z;\alpha )}{\mathcal{L}^{\left( \lambda _{1}\mid \mu _{1}\right)
_{1}\otimes \left( \rho _{1}\mid \sigma _{1}\right) _{1}}(z;\alpha _{1})}%
\right) \right)
\end{eqnarray}%
(the flip in $0\otimes \varnothing $ is negative, $m=\mu _{1},\ r=\sigma
_{1} $ and we have used $\left( 0,\left( \lambda _{1}+1\mid \mu _{1}\right)
_{1}\right) =\left( \lambda _{1}\mid \mu _{1}\right) _{1}\oplus 1$),%
\begin{eqnarray}
w_{4}(x) &=&w_{\varnothing \otimes \rho _{1}}^{\left( \left( \lambda
_{1}\mid \mu _{1}\right) _{1}\oplus 1\right) \otimes \left( \rho _{1}\mid
\sigma _{1}\right) _{1}}(x;\omega ,\alpha )  \notag \\
&=&-\omega x/2-\frac{\alpha -5/2+\mu _{1}+3\sigma _{1}}{x}+\omega x\frac{d}{%
dz}\left( \log \left( \frac{\mathcal{L}^{\left( \left( \lambda _{1}\mid \mu
_{1}\right) _{1}\oplus 1\right) \otimes \left( \rho _{1}\mid \sigma
_{1}\right) _{1}}(z;\alpha )}{\mathcal{L}^{\left( \left( \lambda _{1}\mid
\mu _{1}\right) _{1}\oplus 1\right) \otimes \left( \rho _{1}+1\mid \sigma
_{1}-1\right) _{1}}(z;\alpha )}\right) \right)  \notag \\
&=&-\omega x/2-\frac{\alpha -5/2+\mu _{1}+3\sigma _{1}}{x}+\omega x\frac{d}{%
dz}\left( \log \left( \frac{\mathcal{L}^{\left( \lambda _{1}\mid \mu
_{1}\right) _{1}\otimes \left( \rho _{1}\mid \sigma _{1}\right)
_{1}}(z;\alpha _{1})}{\mathcal{L}^{\left( \lambda _{1}\mid \mu _{1}\right)
_{1}\otimes \left( \rho _{1}+1\mid \sigma _{1}-1\right) _{1}}(z;\alpha _{1})}%
\right) \right)
\end{eqnarray}%
(the flip in $\varnothing \otimes \rho _{1}$ is positive, $m=\mu _{1}+1,\
r=\sigma _{1}$),

\begin{eqnarray}
w_{5}(x) &=&w_{\varnothing \otimes \left( \rho _{1}+\sigma _{1}\right)
}^{\left( \lambda _{1}\mid \mu _{1}\right) _{1}\otimes \left( \rho
_{1}+1\mid \sigma _{1}-1\right) _{1}}(x;\omega ,\alpha _{1})  \notag \\
&=&\omega x/2+\frac{\alpha -7/2+\mu _{1}+3\sigma _{1}}{x}+\omega x\frac{d}{dz%
}\left( \log \left( \frac{\mathcal{L}^{\left( \lambda _{1}\mid \mu
_{1}\right) _{1}\otimes \left( \rho _{1}+1\mid \sigma _{1}-1\right)
_{1}}(z;\alpha _{1})}{\mathcal{L}^{\left( \lambda _{1}\mid \mu _{1}\right)
_{1}\otimes \left( \rho _{1}+1\mid \sigma _{1}\right) _{1}}(z;\alpha _{1})}%
\right) \right)
\end{eqnarray}%
(the flip in $\varnothing \otimes \left( \rho _{1}+\sigma _{1}\right) $ is
negative, $m=\mu _{1},\ r=\sigma _{1}-1$) and

\begin{eqnarray}
w_{6}(x) &=&w_{\varnothing \otimes 0}^{\left( \lambda _{1}\mid \mu
_{1}\right) _{1}\otimes \left( \rho _{1}+1\mid \sigma _{1}\right)
_{1}}(x;\omega ,\alpha _{1})  \notag \\
&=&\omega x/2+\frac{\alpha +1/2+\mu _{1}+3\sigma _{1}}{x}+\omega x\frac{d}{dz%
}\left( \log \left( \frac{\mathcal{L}^{\left( \lambda _{1}\mid \mu
_{1}\right) _{1}\otimes \left( \rho _{1}+1\mid \sigma _{1}\right)
_{1}}(z;\alpha _{1})}{\mathcal{L}^{\left( \lambda _{1}\mid \mu _{1}\right)
_{1}\otimes \left( \left( \rho _{1}\mid \sigma _{1}\right) _{1}\oplus
1\right) }(z;\alpha _{1})}\right) \right)  \notag \\
&=&\omega x/2+\frac{\alpha +1/2+\mu _{1}+3\sigma _{1}}{x}+\omega x\frac{d}{dz%
}\left( \log \left( \frac{\mathcal{L}^{\left( \lambda _{1}\mid \mu
_{1}\right) _{1}\otimes \left( \rho _{1}+1\mid \sigma _{1}\right)
_{1}}(z;\alpha _{1})}{z^{2\sigma _{1}}\mathcal{L}^{\left( \lambda _{1}\mid
\mu _{1}\right) _{1}\otimes \left( \rho _{1}\mid \sigma _{1}\right)
_{1}}(z;\alpha )}\right) \right)  \notag \\
&=&\omega x/2+\frac{\alpha +1/2+\mu _{1}-\sigma _{1}}{x}+\omega x\frac{d}{dz}%
\left( \log \left( \frac{\mathcal{L}^{\left( \lambda _{1}\mid \mu
_{1}\right) _{1}\otimes \left( \rho _{1}+1\mid \sigma _{1}\right)
_{1}}(z;\alpha _{1})}{\mathcal{L}^{\left( \lambda _{1}\mid \mu _{1}\right)
_{1}\otimes \left( \rho _{1}\mid \sigma _{1}\right) _{1}}(z;\alpha )}\right)
\right)
\end{eqnarray}%
(the flip in $\varnothing \otimes 0$ is negative, $m=\mu _{1},\ r=\sigma
_{1} $, see also Eq(\ref{EqextIO5})).

\subsection{Rational extensions of the HO as limit cases of rational
extensions of the IO}

Consider a rational extension of the HO associated to the canonical Maya
diagram $N_{m}$. By splitting $N_{m}$ into two subsets of even and odd
integers respectively, we can write

\begin{equation}
N_{m}=\left( 2a_{1}+1,...,2a_{m_{1}}+1\right) \cup \left(
2b_{1},...,2b_{m_{2}}\right) ,\ m=m_{1}+m_{2},
\end{equation}%
where $a_{i},b_{i}\in 
\mathbb{N}
$, and

\begin{eqnarray}
W^{\left( N_{m}\right) }(x;\omega ) &\propto &e^{-mz/2}  \notag \\
&&\times W\left( H_{2a_{1}+1}\left( \sqrt{\frac{\omega }{2}}x\right)
,...,H_{2a_{m_{1}}+1}\left( \sqrt{\frac{\omega }{2}}x\right)
,H_{2b_{1}}\left( \sqrt{\frac{\omega }{2}}x\right) ,...,H_{2b_{m_{2}}}\left( 
\sqrt{\frac{\omega }{2}}x\right) \mid x\right) ,
\end{eqnarray}%
where $z=\sqrt{\omega /2}x$. Using Eq(\ref{wronskprop}) and Eq(\ref%
{correspHL}), we obtain immediately ($z=\omega x^{2}/2$)

\begin{equation}
W^{\left( N_{m}\right) }(x;\omega )\propto e^{-mz/2}W\left(
z^{1/2}L_{a_{1}}^{1/2}(z),...,z^{1/2}L_{a_{m_{1}}}^{1/2}(z),L_{b_{1}}^{-1/2}(z),...,L_{b_{m_{2}}}^{-1/2}(z)\mid x\right) ,
\end{equation}%
ie (see Eq(\ref{spec OI}) and Eq(\ref{shadOI}))

\begin{equation}
W^{\left( N_{m}\right) }(x;\omega )\propto W^{A_{m_{1}}\otimes
B_{m_{2}}}\left( x;\omega ,1/2\right) ,
\end{equation}%
where

\begin{equation}
A_{m_{1}}\otimes B_{m_{2}}=\left( a_{1},...,a_{m_{1}}\right) \otimes \left(
b_{1},...,b_{m_{2}}\right) .
\end{equation}

Since (see Eq(\ref{OH}) and Eq(\ref{OI}))

\begin{equation}
V(x;\omega ,1/2)=V(x;\omega )-\omega ,
\end{equation}%
we deduce

\begin{equation}
V^{\left( N_{m}\right) }(x;\omega )=V^{A_{m_{1}}\otimes B_{m_{2}}}(x;\omega
,1/2)+\omega .  \label{HO-IO}
\end{equation}

It means that all the rational extensions of the HO can be considered as
rational extensions of the IO in the limit case where the $\alpha $
parameter tends to $1/2$. In this limit, the impenetrable barrier term in
the IO potential disappears and the IO potential degenerates into the
harmonic potential which is regular on the whole real line, the spectrum of
the IO potential giving to the odd indexed eigenstates of the HO potential
and the shadow spectrum of the IO the even indexed eigenstates of the HO.
The result above shows that this degeneracy is still valid at the level of
the rational extensions

We can note that if we start from an extension of the IO with $\alpha $ half
integer $\alpha =k+1/2$, associated to the UC $A_{m_{1}}\otimes B_{m_{2}}$,
Eq(\ref{EqextIO2}) allows us to write

\begin{equation}
V^{A_{m_{1}}\otimes B_{m_{2}}}(x;\omega ,k+1/2)=V^{\left( A_{m_{1}}\oplus
k\right) \otimes B_{m_{2}}}(x;\omega ,1/2)+2k\omega ,
\end{equation}%
and, with the correspondence Eq(\ref{HO-IO}) above%
\begin{equation}
V^{A_{m_{1}}\otimes B_{m_{2}}}(x;\omega ,k+1/2)=V^{\left( \widetilde{N}%
_{m}\right) }(x;\omega )+\left( 2k-1\right) \omega ,
\end{equation}%
where

\begin{equation}
\widetilde{N}_{m}=\left( 1,...,2k-1,2a_{1}+2k+1,...,2a_{m_{1}}+2k+1\right)
\cup \left( 2b_{1},...,2b_{m_{2}}\right) .
\end{equation}

Consequently, the rational extensions of the IO with half integer $\alpha $
parameter can be identified to rational extensions of the HO and are
monodromy free. Conversely, the rational extensions of the which solves the
even periodic dressing chains cover the subset of monodromy free solutions.

\section{Conclusion}

We propose a new method for building the rational solutions of the dressing
chains for a Schr\"{o}dinger operator. The results are directly obtained in
closed determinantal form and we give some explicit examples which had never
been previously presented. For the chains with odd periodicity (A$_{\text{2n}%
}$-PIV system) as for those with even periodicity (A$_{\text{2n+1}}$-PV
system), we describe in a new systematic way some sets of rational solutions
of the dressing chain system. In the odd periodicity case we conjecture,
based on the theory of potentials with trivial monodromy \cite%
{grunbaum,oblomkov,veselov,GGM0}, that the construction described in this
work covers all rational solutions to higher order Painlev\'e systems. The
even periodicity case is more involved, but it seems plausible to surmise
that this is also the case: the trivial monodromy property having to be
replaced by a constraint of fixed monodromy at one point for all the
eigenfunctions of the potentials. These conjectures are the object of
further investigations. The question of the optimal pseudo-Wronskian
representation of these solutions \cite{GGM} will also be adressed in a
forthcoming work.

\section{Acknowledgments}

We wish to thank P.\ A.\ Clarkson, G.\ Filipuk, A.\ Hone and M.\ Mazzocco for enlightening
discussions. DGU and SL acknowledge financial support from the Royal Society via the International Exchanges Scheme.

\end{document}